\documentclass[11pt, letterpaper]{article}

\usepackage[letterpaper,margin=1in]{geometry}
\usepackage{graphicx, amsthm, amsmath, amssymb} 
\usepackage{svg}
\usepackage{subcaption}
\usepackage{xspace}
\usepackage{comment}

\usepackage{xcolor, url}
\usepackage[dvipsnames]{xcolor}
\usepackage{cleveref}
\usepackage{mathrsfs}
\usepackage{thmtools}

\usepackage{todonotes}

\newcommand{\DD}{\mathcal{D}}
\newcommand{\FF}{\mathcal{F}}

\newcommand{\GG}{\mathcal{G}}
\newcommand{\KK}{\mathcal{K}}
\newcommand{\LL}{\mathcal{L}}
\newcommand{\SM}{\mathcal{S}}

\newcommand{\area}{\operatorname{area}}
\newcommand{\per}{\operatorname{per}}
\newcommand{\size}{\mathrm{size}}
\newcommand{\Unif}{\operatorname{Unif}}
\newcommand{\local}{\lambda}
\newcommand{\lev}{\operatorname{lev}}
\newcommand{\diam}{\mathrm{diam}}

\newcommand{\Ileft}{\SM_{\mathrm{left}}}
\newcommand{\Iright}{\SM_{\mathrm{right}}}
\newcommand{\Itop}{\SM_{\mathrm{top}}}
\newcommand{\Ibottom}{\SM_{\mathrm{bot}}}
\newcommand{\Iver}{\SM_{\mathrm{ver}}}
\newcommand{\Ihor}{\SM_{\mathrm{hor}}}

\newcommand{\Sleft}{S_{\mathrm{left}}}
\newcommand{\Sbottom}{S_{\mathrm{bot}}}
\newcommand{\Scent}{S_{\mathrm{cent}}}

\newcommand{\rin}{\rho_{\mathrm{in}}(K)}
\newcommand{\rout}{\rho_{\mathrm{out}}(K)}

\renewcommand{\epsilon}{\varepsilon}
\newcommand{\eps}{\epsilon}

\newcommand{\good}{admissible\xspace}

\newtheorem{theorem}{Theorem}
\newtheorem{definition}{Definition}
\newtheorem{lemma}{Lemma}

\newtheorem{corollary}{Corollary}
\newtheorem{observation}{Observation}

\newenvironment{myquote}%
  {\list{}{\leftmargin=4mm\rightmargin=4mm}\item[]}%
  {\endlist}

\newcounter{claim}[lemma]
\renewcommand{\theclaim}{\arabic{claim}}

\newcommand{\mypara}[1]{\paragraph{\rm\emph{#1}}}
\renewcommand{\leq}{\leqslant}
\renewcommand{\geq}{\geqslant}
\renewcommand{\le}{\leqslant}
\renewcommand{\ge}{\geqslant}

\title{On Linear-Size Guillotine-Separable Subsets of\\ Fat Convex Objects, Disks, and Squares}
\author{Mark de Berg\thanks{Department of Mathematics and Computer Science, TU Eindhoven, The Netherlands. Supported by the  Dutch Research Council (NWO) through Gravitation-grant NETWORKS-024.002.003. 
Part of the work by MdB was done while he was visiting IISc Bengaluru as Rukmini-Gopalakrishnachar Chair. Email: \texttt{m.t.d.berg@tue.nl}}
\and Debajyoti Kar\thanks{Department of Computer Science and Automation, Indian Institute of Science, Bengaluru, India. Supported by Google PhD Fellowship and Microsoft Research India PhD Award. Email: \texttt{debajyotikar@iisc.ac.in}}
\and Arindam Khan\thanks{Department of Computer Science and Automation, Indian Institute of Science, Bengaluru, India. Research partly supported by Google India Research Award, Ittiam Systems CSR grant, and the Walmart
Center for Tech Excellence at IISc (CSR Grant WMGT-23-0001). Email: \texttt{arindamkhan@iisc.ac.in}}
\and Rudrayan Kundu \thanks{Indian Statistical Institute Kolkata, India. Email: \texttt{rajasreekundu@gmail.com}}}
\date{}

\begin{document}
\thispagestyle{empty}
\pagenumbering{roman}

\maketitle

\begin{abstract}
Let $\mathcal{K}$ be a family of pairwise disjoint objects in the plane. We say that a subset $\mathcal{K}^*\subseteq \mathcal{K}$
is \emph{separable} if it admits a sequence of guillotine cuts that separate all objects in $\mathcal{K}^*$ from each other while not cutting any of them.
Urrutia (1996) asked whether any family of $n$ convex objects has a separable subset of size $\Omega(n)$.
Pach and Tardos (2000) answered this question negatively for 
line segments, but established positive results for fat objects of similar size. More recently, it was shown 
that sets of arbitrarily-sized axis-aligned squares also admit a separable subset of linear size. 
However, the question whether any set of arbitrarily-sized fat convex objects has a separable subset of linear size
has remained open, even for disks. A major obstacle is that the existing technique for arbitrarily-sized
squares uses only axis-aligned cuts, while even for disks, axis-aligned cuts alone are insufficient to
obtain a separable subset of linear size.

We resolve this longstanding open problem by proving that every family of pairwise disjoint fat convex objects 
has a separable subset of linear size. Our result extends to higher dimensions: any family of pairwise disjoint 
arbitrarily-sized fat convex objects in $\mathbb{R}^d$, where $d$ is a fixed constant, has a subset of linear size 
that is recursively separable by a sequence of hyperplane cuts.
Our framework also yields improved guarantees for important special cases. For axis-aligned squares with axis-aligned guillotine cuts, we leverage additional structural properties of squares to show that at least $13.46\%$ of the squares are separable, improving the previous best bound of $9/256 \approx 3.51\%$ due to Chalermsook, Kugelmann, Orgo, Uniyal, and Zarsav~(2025). For disks, by exploiting Oler's packing inequality, we prove that at least $n/93$ disks can always be separated.   
\end{abstract}
\newpage

\tableofcontents
\thispagestyle{empty}
\newpage
\pagenumbering{arabic}
\setcounter{page}{1}
\section{Introduction}
\begin{figure}
\centering
\includegraphics{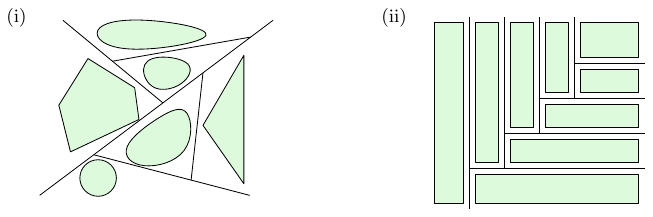}
\caption{(i) Guillotine cuts with arbitrary orientations. (ii) Axis-aligned guillotine cuts.}
\label{fig:guillotine-intro}
\end{figure}

\paragraph{The glass-cutting problem.}
Guillotine decompositions are among the simplest and most fundamental recursive constructions in computational geometry and approximation algorithms. 
Informally, a \emph{guillotine cut} is a straight-line end-to-end cut of a given region, and a \emph{guillotine decomposition}
recursively applies such cuts until some stopping criterion is met. A family $\KK$ of $m$ objects
in the plane is called \emph{guillotine-separable},
or \emph{separable} for short, if the bounding box of the objects admits a guillotine decomposition into $m$ regions
each of which fully contains one object from~$\KK$; thus the objects in $\KK$ are separated from each other 
without any of them being cut.\footnote{
Throughout the paper, objects are assumed to be
convex, bounded, full-dimensional, and open. Thus two disjoint objects can touch, and a cut only intersects an object if it intersects the interior of the object.}
In the general setting, the cuts may have arbitrary orientations, and in the axis-aligned variant, every cut is required to be horizontal or vertical;
see \Cref{fig:guillotine-intro}.
Obviously, non-convex objects are not always separable by straight-line cuts.
Urrutia~\cite{urrutia1996problem} 
therefore asked whether every family of $n$ convex objects in the plane contains a separable subset of size $\Omega(n)$.
The problem is often referred to as the {\em glass-cutting problem}.

\paragraph{Previous work.}
Pach and Tardos~\cite{pach2000cutting} answered the above question negatively by constructing a family of $n$ line segments such that the largest separable subset has size $O(n^{\log_3 2})$.
Since line segments are needle-like, Pach and Tardos identified fatness as the natural geometric condition 
under which Urrutia's question might have a positive answer. In particular, they studied convex objects
that are \emph{$\epsilon$-fat}, where an object~$K$ is $\epsilon$-fat, for $0<\eps\le 1$, if it contains a disk of radius $\rho(K)$ 
and is contained in a disk of radius $\rho(K)/\epsilon$.
For example, disks and squares are $1$-fat and $\left(1/\sqrt{2}\right)$-fat, respectively.
They managed to prove that any set $\KK$ of \emph{similarly-sized}\footnote{A family $\mathcal{K}$ of $\epsilon$-fat convex objects is \emph{similarly-sized} if there exists a $\rho \in \mathbb{R}_{>0}$ such that every object in $\mathcal{K}$ contains a disk of radius $\rho$ and is contained in a disk of radius $\rho/\epsilon$.} $\eps$-fat objects in the plane
contains a separable subfamily of linear size. More precisely, they proved that such a set~$\KK$ 
has a separable subset of size at least $\alpha n$ for $\alpha = \tfrac{\pi}{128}\epsilon^2$.  
We refer to $\alpha$ as the \emph{separability ratio}.
For arbitrarily-sized $\epsilon$-fat objects, however, they did not manage to prove a constant separability ratio:
they only proved the existence of a separable subset of size $\Omega_\epsilon(n/\log n)$. 
Pach and Tardos also proved that any family of $\eps$-fat objects in $\mathbb{R}^d$, where $d \ge 2$ is a fixed constant,
contains a separable subset of size $\Omega_{\epsilon,d}(n/(\log n)^d)$.

Their proof for arbitrarily-sized $\epsilon$-fat objects in the plane (and, similarly, their proof in~$\mathbb{R}^d$)
proceeds as follows. First, they show that any family of $\epsilon$-fat objects that has a line transversal---in other words,
any family that can be stabbed by a common line --- has a linear-size separable subfamily. Next, they use a recursive scheme that
applies vertical cuts to extract a subset of size $\Omega(n/\log n)$ to which they can apply the result on families that admit a transversal. 
The logarithmic loss in their method arises from the multiscale nature of arbitrarily-sized objects. Essentially, the 
recursive step selects a scale at which the subset of objects of the corresponding size has favorable properties,
but while doing so, there is no control over what happens to objects of other sizes and a large number of them may be cut.

Pach and Tardos also isolated the scale dependence of their technique more explicitly. They showed that
if the ratio between the circumradii of the largest and smallest objects is~$V$, then there exists a separable subset of size
$\Omega_\eps(n \log\log V/\log V)$.
Thus, their method needs $V=O(1)$ to guarantee a linear-size separable subset.
They also pointed out that the natural balanced-cut strategy can fail already at the first step: there are configurations in which every line that leaves a linear number of objects on both sides cuts through a large fraction of the family.

\medskip

A subsequent series of work studies guillotine separability for axis-aligned rectangles and squares, motivated by applications to geometric packing and to the Maximum Independent Set of Rectangles problem. Abed, Chalermsook, Correa, Karrenbauer, P\'erez-Lantero, Soto, and Wiese~\cite{abed2015guillotine} proved that any family of (arbitrarily-sized) axis-aligned squares contains a separable subfamily of linear size. They also constructed instances of axis-aligned unit squares for which any axis-aligned guillotine cutting sequence can separate at most $n/2+o(n)$ of the squares. Further, they showed that a constant separability ratio for general rectangles would have strong algorithmic consequences for Maximum Independent Set of Rectangles. Khan and Pittu~\cite{KhanP20} improved the separability ratio~$\alpha$ for squares to $1/40$, and they
proved a separability ratio of $1/160$ for weighted squares. They also obtained constant separability ratios for several other rectangle classes. Most recently, Chalermsook, Kugelmann, Orgo, Uniyal, and Zarsav~\cite{chalermsook2025improved} improved the separability ratio for weighted axis-aligned squares
to $9/256$. Intuitively, all three results use the same framework: 
use a standard shifted hierarchical grid to extract a structured subfamily,
and then define a suitable conflict graph on the remaining objects whose
independent sets are separable.

\begin{figure}
\centering
\includegraphics{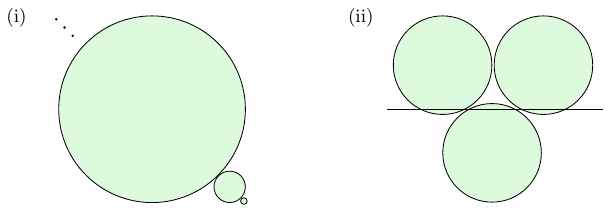}
\caption{(i) Axis-parallel cuts may not give a good separability ratio. 
           (ii) A line-stabbed family of disks that is not fully separable. }
\label{fig:stabbed-disks}
\end{figure}

These techniques rely crucially on the fact that the cuts being used are axis-aligned. Unfortunately, already for disks, 
axis-aligned cuts are too restrictive to obtain a separable subset of linear size;
see \Cref{fig:stabbed-disks}(i). 
This is not merely a technical distinction. Consider the special case in which all objects are stabbed by a common vertical or horizontal line. If the objects are axis-aligned squares, then the entire family is separable by axis-aligned cuts. For disks, the analogous statement is false: even line-stabbed disks need not all be separable; see \Cref{fig:stabbed-disks}(ii). 
Pach and Tardos proved that a constant fraction is nevertheless separable in this restricted setting, but their explicit constant,
$\tfrac{1}{6(1000\pi)^2}\approx 1.7\times 10^{-8}$, is extremely small, and it is not clear how to improve this constant substantially.

In summary, the question for arbitrarily-sized fat convex objects is still open after three decades,
even for disks.
\paragraph{Our contributions and techniques.}
We resolve Urrutia's question in the affirmative for arbitrarily-sized fat convex objects by allowing guillotine cuts of arbitrary orientations.
\begin{theorem}
\label{thm:fat-2d}
Any family of pairwise-disjoint $\epsilon$-fat convex objects in the plane contains a separable subset of size $\Omega(\epsilon n)$.
\end{theorem}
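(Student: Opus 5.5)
We reduce to a hierarchical structure in which, at each scale, the objects of that scale are sparse with respect to larger objects, and then build a single recursive cutting sequence that never needs to cut a selected object. Each $K\in\KK$ has inradius $\rho(K)$ and circumradius at most $\rho(K)/\epsilon$. We classify objects into levels $\ell$ with $\rho(K)\in[2^{-\ell-1},2^{-\ell})$ and use a randomly shifted hierarchical quadtree-style grid $G_\ell$ of cell size $c\,2^{-\ell}/\epsilon$, for a suitable constant $c$. An object of level $\ell$ is \emph{good} if it lies inside a single cell of $G_\ell$, and by the random shift a constant fraction of objects is good in expectation. After this step each good object $K$ lies in a cell $C(K)$ of size $\Theta(\rho(K)/\epsilon)$, and the cells form a tree under containment. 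Within one cell $C$ of level $\ell$, a packing argument shows that at most $O(1/\epsilon^2)$ disjoint level-$\ell$ objects lie in $C$.

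We then select a subset $\KK^*$ by a greedy or conflict-graph argument in the style of the square results. Take objects in order of decreasing size. Keep $K$ only if no previously kept object of larger level conflicts with the boundary lines of $C(K)$, and at most one object is kept per cell. The key charging claim is that a kept object $K'$ in a cell $C'$ can obstruct only $O(1)$ smaller cells $C\subset C'$ from being cut out by extending $C$'s sides. The reason is that $K'$ is convex, and inside $C'$ it is separated from $C$ by a line that can be chosen with one of $O(1)$ directions up to fatness loss. A bounded-degree (in an amortized, degeneracy sense) conflict graph then gives an independent set of size $\Omega(\epsilon n)$. To get the linear dependence on $\epsilon$ rather than $\epsilon^2$, the sparsification should count objects per cell more carefully, for instance using only the $O(1/\epsilon)$ objects meeting a cell's boundary strip.

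The separating procedure is recursive on the cell tree. At a cell $C'$ we must carve out each child region containing kept objects, while avoiding $K'$ (the kept object at this level, if any). Axis-parallel cuts along child cell boundaries work unless they hit $K'$. Wherever they do, we instead use a tangent line to $K'$ to cut off the part of $C'$ containing $K'$, which is the step where arbitrary orientations are essential. After that, the remaining region is a convex polygon, and the children lying fully on the other side are cut out by axis-parallel cuts clipped to this polygon. Cuts are only end-to-end within the current region, so this is legitimate. We must check that a tangent line to $K'$ does not cut any kept descendant object. This is where the selection rule must forbid kept descendants in the region between $K'$ and its tangent, i.e.\ within distance $O(\rho(K')/\epsilon)$ near $K'$ in specified directions; these are the conflicts charged above.

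The main obstacle is the charging argument: one must show that each large object forbids only $O(1)$ (or amortized $O(1)$) selections of smaller objects, uniformly across all scales, so that losses do not accumulate logarithmically as in Pach--Tardos. I would attack it by choosing, for each kept $K'$, a constant number $O(1/\epsilon)$ of tangent lines forming a convex polygon $P'\supseteq K'$ with $\area(P')=O(\area(K'))$. I would then forbid smaller objects whose cells meet $P'\setminus K'$ only near its boundary, and bound their number by a packing argument along the boundary of $K'$ at each smaller scale, with a geometrically decreasing count across scales after the random shift. If this geometric decay fails, the fallback is to keep only every $t$-th level for a constant $t$, so that scales are separated enough for the decay to hold.
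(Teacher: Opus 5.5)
\textbf{There is a genuine gap, and it sits where you locate the main obstacle.} The claim that a kept object $K'$ obstructs only $O(1)$ smaller cells is false. Moreover, no bounded forbidding budget per kept object can make the tangent-line step safe.

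Take a disk $K'$ of radius $R$ and $N=\Theta(R/\delta)$ pairwise disjoint disks of radius $\delta\ll R$, each touching $K'$.
\begin{itemize}
\item \emph{The obstruction count fails.} An axis-parallel square containing such a small disk overlaps $K'$ unless the contact point is one of the four axis-extreme points of $K'$. So $K'$ crosses $\partial C(K)$ for all but at most four of them. Whenever $K'$ survives, your decreasing-size greedy keeps $K'$ and rejects almost the whole ring.
\item \emph{The $O(1)$-directions justification fails.} A line separating $K'$ from a nearby small cell must be nearly tangent to $K'$ at the nearest point. Its direction therefore sweeps all of $[0,\pi)$ as the cell moves around $K'$.
\item \emph{Keeping $K'$ is intrinsically expensive.} The only line separating $K'$ from a small disk touching it at $q$ is the common tangent at $q$. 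That line crosses every small disk touching $K'$ within angle roughly $2\sqrt{\delta/R}$ of $q$. Hence in any separable subfamily containing $K'$ the small disks are pairwise at least that far apart in angle, and only $O(\sqrt{R/\delta})=o(N)$ of them can be kept.
\item \emph{Enlarging the forbidden zone does not rescue the charging.} If you forbid a zone of width comparable to $\size(K')$ around each kept $K'$, the charge just reverses. A single small object can lie in the zones of unboundedly many disjoint larger objects at geometrically increasing scales. That is exactly the logarithmic loss you wanted to avoid.
\end{itemize}
Your fallbacks (geometric decay, keeping every $t$-th level) do not address this, since the ring example uses only two scales.

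\textbf{The missing idea is to stop committing to objects in advance and let a recursion decide which large objects to sacrifice.}

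First, the paper defines levels by $\size(K)$, so original cells have size $\Theta(\size(K))$ rather than $\Theta(\rho(K)/\epsilon)$. Then every surviving object that meets a cell $\sigma$ without lying in a child of $\sigma$ has size at least $\size(\sigma)/(2r)$. The $O(1/\epsilon)$ density of disjoint fat objects bounds their number by $\lambda=O(r^2/\epsilon)$. This is also what gives $\epsilon$ instead of $\epsilon^2$.

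The recursion takes the smallest cell $\sigma^*$ meeting all remaining objects and splits into two cases.
\begin{itemize}
\item If two children of $\sigma^*$ each contain a whole object, cut along the midline between them. This loses at most $\lambda$ objects and creates two nonempty subproblems. It is how an obstructing $K'$ as above simply gets discarded.
\item Otherwise all objects not crossing the child boundaries lie in one child $\sigma'$, and by minimality some object $K''$ misses $\sigma'$. The single non-axis-parallel cut is a line separating $K''$ from the convex cell $\sigma'$. Every other remaining object lies in $\sigma'$ or is one of the at most $\lambda$ boundary crossers, so this cut hits at most $\lambda-1$ objects.
\end{itemize}
These two cases give $T(m)\ge (m+\lambda)/(\lambda+1)$ by induction, hence $\Omega(\epsilon n)$ after the grid step. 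No conflict graph and no cross-scale charging are needed.
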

The proof has two main components. The first is a randomized extraction step---here we extract
a linear-size subset from which we will select the separable subset---that removes the global difficulty due to multiple size classes. The second is a recursive process on the surviving objects, whose analysis reduces to local packing bounds. 
Our recurrence-based approach to analyzing the size of the guillotine-separable family differs fundamentally from the global conflict-graph coloring frameworks used in prior work \cite{abed2015guillotine, KhanP20, chalermsook2025improved}. This new perspective is the key ingredient behind our results for arbitrarily oriented guillotine cuts for disks and fat convex objects.
Our recursion is also different in spirit from the previous bounds of Pach and Tardos \cite{pach2000cutting}, where the approach was based on global divide-and-conquer recurrence aiming to extract subsets that admit a transversal. 

\mypara{The first component: doubly-random hierarchical grids.}
Randomly shifted grids and hierarchical decompositions are classical tools in geometric approximation algorithms, 
originating from the work of Hochbaum and Maass~\cite{Hochbaum1985}. They have been used for a variety of
problems; see for instance the work of Erlebach, Jansen, and Seidel~\cite{erlebach2005polynomial} 
and Chan's shifted-quadtree framework~\cite{chan2003polynomial}. 
We introduce a new type of random hierarchical grid, as explained next.

A basic hierarchical grid for a set $\KK$ of objects is a hierarchy of grids $G_i$ defined as follows. 
We start with a large square~$\sigma_0$ that contains all objects from~$\KK$; this single square is the 
grid~$G_0$ at level~0 of the hierarchy. For levels $i>0$, the grid~$G_i$ is obtained by subdividing~$\sigma_0$
into a regular grid of~$2^i\times 2^i$ cells. Thus the size (that is, the edge length) of the cells in~$G_i$ is $\size(\sigma_0)/2^i$, 
where $\size(\sigma_0)$ is the size of~$\sigma_0$.
In many applications, it is undesirable that small objects are cut early on in the hierarchy;
instead, one wants that most objects are only cut by a grid line from grids~$G_i$ whose cell sizes
are comparable to the size of the object. A standard way to achieve this is to randomly shift 
the initial square~$\sigma_0$ (while still ensuring it contains all objects). 
A variant of such a randomly shifted grid was also used in 
earlier work on the glass-cutting problem 
for squares ~\cite{abed2015guillotine,KhanP20,chalermsook2025improved}. 
In those works, the grid~$G_0$ is not a single square containing all objects, but it is
a (randomly shifted) regular grid whose cells have size $rM$, where $M$ is the maximum size of the squares in~$\KK$ and $r>1$ is a suitably chosen scaling factor.
The grids $G_i$ for $i>0$ are then obtained by subdividing each cell of~$G_0$ into a regular grid of~$2^i\times 2^i$ cells.

We also use randomly shifted hierarchical grids, but to obtain better separability ratios, we apply an
additional randomization step that makes the cell sizes random as well. More precisely, our initial grid~$G_0$
is a regular grid whose cells have size~$r M_{\theta}$, where $M_{\theta} = 2^{\theta}M$, where 
$M$ is the maximum size among the smallest enclosing squares of the considered objects and $\theta\sim \operatorname{Unif}[0,1)$ is chosen randomly. 
This double randomness improves the probability that objects are not cut too early in the hierarchy,
thus increasing the separability ratios we can obtain.
While this double randomness only improves the constant factors in our approach---indeed, 
a standard randomly shifted grid suffices in our approach to obtain a linear-size separable subfamily
for fat objects---we find it intriguing that adding a second layer of randomness helps at all.

\mypara{The second component: going beyond axis-aligned cuts.}
After creating a doubly-random hierarchical grid, we discard all objects that
are intersected too early in the hierarchy. Because any surviving object~$K$ is
only cut by lines of grids $G_i$ whose cell sizes are comparable to the size of $K$,
we can now apply the following recursive procedure to the subfamily $\KK'$ of surviving objects. 
If a grid line partitions $\KK'$ 
into two nonempty subsets without intersecting any object, we can cut along this line and recurse on both sides with no loss. Otherwise, let $\sigma$ be the smallest cell in the hierarchical grid that intersects every object of~$\KK'$. 
We distinguish two cases, depending on how the objects in $\KK'$ are distributed over the four quadrants of~$\sigma$.

In the first case, there are two distinct quadrants that each fully contain an object from~$\KK'$. 
Then one of the two midlines\footnote{A \emph{midline} of $\sigma$ is a vertical or horizontal line that
cuts $\sigma$ into equal halves.} of~$\sigma$ separates these objects. 
We cut along this midline, discard the objects intersected by the cut, 
and recurse on both sides. We then use a local packing argument to argue that the number of discarded objects is under control.

In the second case, the objects from $\KK'$ that do not intersect the boundary of any of the four quadrants of~$\sigma$
all lie inside the same quadrant~$\sigma'$ of~$\sigma$. 
By the minimality of $\sigma$, there must be an object from $\KK'$ outside $\sigma'$. We then separate this object from $\sigma'$ by a 
guillotine cut and recurse inside $\sigma'$. 
This is where our proof goes beyond axis-aligned cuts: the separating line need not be horizontal or vertical.

We analyze the size of the resulting separable subfamily in an abstract setting. 
Our recursion only depends on one geometric parameter: the maximum number of surviving objects crossing the boundaries of children of any grid cell.
We call this parameter the \emph{local packing number} of a set of objects, and we prove a bound
on the size of the separable subfamily that our algorithm creates in terms of the local
packing number of the surviving family~$\KK'$. It then remains to bound the local packing number for
each of the classes of objects we consider. For $\eps$-fat objects, a bound on the local packing number
follows relatively easily from standard results on low-density sets~\cite{stappen-thesis}, thus proving \Cref{thm:fat-2d}. For disks and squares, we do a more refined analysis,
in order to obtain better separability constants.

%

\mypara{Higher dimensions.}
The same framework extends to constant dimensions, where guillotine cuts are replaced by end-to-end hyperplane cuts.
\begin{theorem}
\label{thm:fat-multd}
Let $\mathcal{K}$ be a family of $n$ pairwise-disjoint $\epsilon$-fat convex objects in $\mathbb{R}^d$, where $d$ is a fixed constant. Then $\mathcal{K}$ contains a subfamily of size $\Omega(\epsilon^{d-1}n)$ that is separable by a sequence of end-to-end hyperplane cuts.
\end{theorem}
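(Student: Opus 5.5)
We follow the same two-component scheme as for \Cref{thm:fat-2d}, now in $\mathbb{R}^d$. For each object $K$ let $s(K)$ be the size of its smallest enclosing axis-aligned hypercube, and let $M=\max_K s(K)$. We build a doubly-random hierarchical grid. Pick $\theta\sim\Unif[0,1)$ and set $M_\theta=2^\theta M$. Let $G_0$ be a randomly shifted regular grid of hypercubes of size $rM_\theta$, and obtain $G_i$ by splitting each cell of $G_0$ into $2^{id}$ subcubes. Call $K$ \emph{good} if no hyperplane of a grid $G_i$ with cell size larger than $r\cdot s(K)$ meets $K$. In each axis direction, the grid hyperplanes of level $i$ form a randomly shifted family of spacing $rM_\theta/2^i$. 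The probability that $K$'s extent in that direction is hit by such a hyperplane at some coarse level is therefore at most a geometric sum $O(1/r)$. A union bound over the $d$ directions shows that $K$ is good with probability at least $1-O(d/r)$. Choosing $r=\Theta(d)$, the surviving family $\KK'$ has expected size at least $n/2$, so some realization achieves this.

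We run the recursive procedure on $\KK'$ with hyperplane cuts.
\begin{itemize}
\item If some grid hyperplane meets no object and splits $\KK'$ nontrivially, cut along it at no loss.
\item Otherwise, let $\sigma$ be the smallest grid cell meeting every object of $\KK'$, and consider its $2^d$ orthants (children).
\item If two distinct children each fully contain an object, some midhyperplane of $\sigma$ separates them. Cut along it, discard the objects it crosses, and recurse on both sides.
\item Otherwise, all objects avoiding the children's boundaries lie in a single child $\sigma'$. By minimality of $\sigma$, some object $K_0\in\KK'$ is not contained in $\sigma'$. Since $K_0$ and $\sigma'$ are disjoint-interior convex sets, or $K_0$ lies partly outside $\sigma'$, we separate $K_0$ from the convex set $\sigma'$ by a hyperplane. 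If $K_0$ properly crosses $\sigma'$, we instead take a hyperplane that separates $K_0$ from the objects strictly inside $\sigma'$; it exists because those lie in a child and $K_0$ must cross the child boundaries. We keep $K_0$, discard the objects hitting the child boundaries, and recurse inside $\sigma'$.
\end{itemize}

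The abstract recurrence analysis from the planar case should carry over verbatim. It depends only on the \emph{local packing number} $\local$, defined as the maximum number of objects of $\KK'$ crossing the boundaries of the children of any single cell. Each round either partitions at no loss or keeps at least one object while discarding at most $\local$. The recurrence (with the same accounting as for \Cref{thm:fat-2d}, charging discarded objects to kept ones along the recursion tree) then gives a separable subset of size $\Omega(|\KK'|/\local)$.

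It remains to show $\local=O(\epsilon^{-(d-1)})$ for fixed $d$, and this is the main obstacle. Fix a cell $\sigma$ of size $\ell$ at which the procedure acts. Every object of $\KK'$ meets $\sigma$, and by goodness every object cut by a midhyperplane of $\sigma$ has $s(K)\ge \ell/r$. Such an object has inradius at least $\epsilon\ell/(2r\sqrt d)$. We cannot simply cap the size at $O(\ell)$, because an object larger than $\sigma$ could still meet $\sigma$. To handle this, cover the $2^d$ child boundaries, i.e.\ $d$ hyperplane pieces of area $O(\ell^{d-1})$ inside $\sigma$, by $O((r\sqrt d/\epsilon)^{d-1})$ cubes of side $\Theta(\epsilon\ell/r)$. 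Each crossing object meets some cube, and the standard low-density argument for disjoint $\epsilon$-fat objects~\cite{stappen-thesis} shows that a cube of side $t$ meets only $O_d(\epsilon^{-d})$ disjoint fat objects of inradius at least $\epsilon t$. This naive count gives $O(\epsilon^{-(2d-1)})$, which is too weak. To get $\epsilon^{d-1}$, we would instead use a single-inball argument. Each crossing object $K$ contains a ball of radius $\ge \epsilon\ell/(2r\sqrt d)$, and since $K$ is convex and reaches the boundary hyperplane, $K$ contains the convex hull of that ball and a point on the hyperplane. The hull contains a ball of radius $\Omega(\epsilon\ell)$ within distance $O(\ell)$ of the hyperplane piece inside a constant enlargement of $\sigma$. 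If the object is huge, we instead shrink it homothetically toward a point near $\sigma$. These hulls are disjoint, so a volume or packing count in the $O(\ell)$-neighborhood of the hyperplane slab gives $O(r^{d}\epsilon^{-(d-1)})$. Getting exactly the exponent $d-1$ is what we would have to check carefully: we would trade one power of $\epsilon$ by counting along the slab of thickness $\epsilon\ell$ rather than volume. With $r=\Theta(d)$ constant, the result is $\Omega(\epsilon^{d-1}n)$.
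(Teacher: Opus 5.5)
There is a genuine gap, and it is exactly the step you flag as the main obstacle: the bound $\lambda=O(\epsilon^{-(d-1)})$ on the local packing number is never proved, and neither of your sketched routes reaches that exponent. The grid step (union bound, $r=\Theta(d)$) and the recursion are fine and match the paper's framework. What your argument actually supports is the naive count: cover the child boundaries of $\sigma$ by $O((r/\epsilon)^{d-1})$ cubes of side $\Theta(\epsilon\ell/r)$, each meeting $O(\epsilon^{-d})$ objects. That gives $\lambda=O(\epsilon^{-(2d-1)})$, hence a separable subfamily of size $\Omega(\epsilon^{2d-1}n)$. This is linear, but weaker than claimed.

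Two things lose powers of $\epsilon$. First, you cover at the inradius scale $\epsilon\ell$ rather than at the \emph{size} scale $\ell/(2r)$ that goodness guarantees for every crossing object. Second, you pack balls of radius $\Omega(\epsilon\ell)$, of volume $\Theta(\epsilon^d\ell^d)$, rather than cones. Your proposed repair, counting in a slab of thickness $\epsilon\ell$, also fails. Take an $\epsilon$-fat cone $\mathrm{conv}(\{p\}\cup B)$ whose apex $p$ just pierces the hyperplane perpendicularly. Within distance $O(\epsilon\ell)$ of the hyperplane it has only a tip, which contains no ball of radius more than $O(\epsilon^2\ell)$.

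The missing idea is a volume bound of order $\epsilon^{d-1}$ at the scale of the objects' size. The paper gets it via density. An $\epsilon$-fat convex $K$ contains the convex hull of its inball (radius $\ge\epsilon\rho_{\mathrm{out}}(K)$) and a point at distance $\ge\rho_{\mathrm{out}}(K)/2$ from the inball's center. Hence it contains a cone of volume $\Omega(\epsilon^{d-1}\rho_{\mathrm{out}}(K)^d)$, so $K$ is $O(\epsilon^{-(d-1)})$-thick. By van der Stappen, disjoint such objects then have density $O(\epsilon^{-(d-1)})$: any ball meets at most that many objects at least as large as itself. Every object crossing a child boundary of $\sigma$ has size at least $\ell/(2r)$. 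So covering $\sigma$ by $O(r^d)$ cubes of side $\ell/(2r)$ gives $\lambda=O(r^d\epsilon^{-(d-1)})$, and the theorem follows.

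A smaller point concerns your peeling step. Take $K_0$ \emph{disjoint} from $\sigma'$; minimality of $\sigma$ gives this directly, since the child $\sigma'$ fails to meet some object. Your fallback claim is that an object poking into $\sigma'$ can be separated by one hyperplane from all objects inside $\sigma'$. That is false in general: if $K_0$ meets a segment joining two objects inside $\sigma'$, no hyperplane separates $K_0$ from both. With the correct choice of $K_0$, that case never arises.
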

The higher-dimensional proof follows the same outline. The doubly-random hierarchical grid extracts a constant fraction of the objects, the recursive decomposition generalizes from squares to hypercubes, and the planar packing argument is replaced by a volume argument for fat convex bodies. This gives a constant separability ratio for arbitrarily-sized fat convex bodies in every fixed dimension.

\mypara{Disks.}
For disks, we obtain stronger guarantees by exploiting Oler's packing inequality~\cite{Oler1961}, which yields 
tighter local packing bounds than our general analysis for fat convex objects.
\begin{theorem}
\label{thm:disk}
Any family of $n$ pairwise-disjoint disks in the plane has a separable subset of size at least $n/93$.
\end{theorem}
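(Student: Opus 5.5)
We will follow the two-component framework outlined above, specialized to disks and with constants tuned throughout. The first step is extraction. Let $M$ be the largest diameter in the family. Build the doubly-random hierarchical grid: the base grid $G_0$ has cells of size $rM_\theta$ with $M_\theta=2^\theta M$, $\theta\sim\Unif[0,1)$, and it is shifted uniformly at random; $G_i$ refines each cell into $2^i\times 2^i$ cells. For a disk $D$ of diameter $\delta$, let $\lev(D)$ be the level $i$ with $\size(G_i)/\delta \in (c/2, c]$ for a threshold $c$ to be optimized. Call $D$ a survivor if no grid line of $G_0,\dots,G_{i-1}$ meets $D$, i.e.\ $D$ lies inside a single cell of $G_{i-1}$.

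The survival probability of $D$ is obtained by averaging over $\theta$ the chance that a random shift avoids all coarser lines. Since the lines of $G_{i-1}$ contain those of all coarser grids, this equals $(1-\delta/s_{i-1})^2$ with $s_{i-1}=\size(G_{i-1})$. The $\theta$-averaging smooths the ratio $s_{i-1}/\delta$ over a full factor-2 range, so we get the closed form $\int_0^1 (1-2^{-t}/c')^2\,dt$ for the appropriate $c'$. Linearity of expectation then fixes a choice of shift and $\theta$ for which at least $p\,n$ disks survive, where $p$ is this integral.

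The second step applies the abstract recursive lemma (the one bounding the separable subset size in terms of the local packing number $\local$) to the survivor family $\KK'$. The lemma should guarantee a separable subset of size at least $|\KK'|/f(\local)$ for an explicit $f$, coming from the recurrence in which a midline cut of a cell $\sigma$ discards at most the survivors crossing the quadrant boundaries, while the oblique-cut case costs nothing. I would invoke this lemma as stated earlier in the paper and supply only the disk-specific bound on $\local$.

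The third step, bounding the local packing number, is the crux. Fix a cell $\sigma$ at some level and consider the surviving disks crossing the two midlines of $\sigma$ but contained in $\sigma$. Each such disk has $\lev$ tied to the level of $\sigma$, because it survived the coarser grids but was cut by the midlines of $\sigma$. Hence its diameter lies in a window $[a\,\size(\sigma), b\,\size(\sigma)]$ with $a,b$ determined by $c$. These disks are pairwise disjoint, each meets a segment of length $\size(\sigma)$, and each has radius at least $a\,\size(\sigma)/2$.

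To count them, shrink each disk to a congruent disk of the minimum radius $\rho$ that still meets the midline segment, keeping the shrunken disks inside the original ones so they stay disjoint. Their centers then lie in a region of width $2\rho$ around the cross formed by the two midlines, and are pairwise at distance at least $2\rho$. Apply Oler's inequality to the convex hull of the centers in each arm of the cross: a $1$-unit-separated set in a convex region $C$ has at most $\area(C)/(\sqrt3/2)+\per(C)/2+1$ points. After scaling, this bounds the count per arm by about $\size(\sigma)/(2\rho)+O(1)$, so $\local$ is bounded by a small explicit constant depending on $c$.

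The final step is optimization. Combining the two steps gives a ratio $p(c)/f(\local(c))$, and we choose $c$ and $r$ to maximize it, aiming to show that it is at least $1/93$. This final numeric check, together with making the Oler bound tight enough (in particular handling disks that touch both midlines near the center of $\sigma$ without double-counting), is the main obstacle I expect.
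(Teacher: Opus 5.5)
Your outline follows the paper's framework: the doubly-random grid, the recursion lemma $T(m)\ge(m+\local)/(\local+1)$, and Oler's inequality for the local packing number. The disk-specific step, however, has a genuine gap. In your third step you bound only the surviving disks that cross the midlines of $\sigma$ \emph{and are contained in} $\sigma$. The local packing number used by the recursion lemma must bound every surviving disk that meets $\sigma$ but lies in no single child. That includes the disks crossing $\partial\sigma$: if $\sigma$ has level $i$, these have level at most $i-1$ and hence diameter larger than $\size(\sigma)/r$, but several can still cross each edge.

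They cannot be ignored. The midline cut is a full line and can hit disks that straddle $\partial\sigma^*$. In the second case, the line separating $K''$ from the quadrant $\sigma'$ may cut every disk of $\KK''$ not inside $\sigma'$, in particular every disk crossing $\partial\sigma^*$. So the oblique case does not ``cost nothing'': it may sacrifice up to $\local-1$ disks to isolate one, which is exactly why the bound has denominator $\local+1$. With boundary crossers left out, the number you compute is not a valid local packing number, and the recursion lemma does not apply with it.

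Second, the theorem is about the specific constant $1/93$, and you leave exactly that verification open. Your per-arm estimate is also off by about a factor of two. For centers in a stadium of half-width $\rho$ around a segment of length $L=\size(\sigma)$, with pairwise distance at least $2\rho$, Oler gives leading term $(\tfrac{1}{\sqrt3}+\tfrac12)L/\rho$. No argument can give $L/(2\rho)$: disks of radius just above $\rho$, placed alternately just above and just below a midline, yield about $L/\rho$ surviving crossers.

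Suppose you repair the omission within your scheme by adding the four edges of $\sigma$ as further stadia of half-width $L/(2r)$. Each of the six Oler applications then carries its own additive terms; double counting is harmless for an upper bound, but these terms are not. At $r\approx 1.72$ you get about $2\cdot 10+4\cdot 7=48$, and no choice of $r$ brings $\Phi(r)/(\local+1)$ up to $1/93$ (the best is roughly $1/117$).

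The paper avoids this by treating all relevant disks at once. Each disk $D$ that meets $\sigma$ but no single child has diameter at least $L/(2r)$. So it contains a disk $B(q_D,L/(4r))$ that contains a point $p_D$ lying on $\partial\sigma$ or on a midline segment $(\ell_h\cup\ell_v)\cap\sigma$. All centers $q_D$ then lie in the single convex body $\sigma\oplus B(0,L/(4r))$, at pairwise distance at least $L/(2r)$. One application of Oler gives $\local\le 31$ for $r=1.7213$, just below the root $\gamma\approx 1.7214$ of the defining equation. Then $\Phi(r)\ge 0.3444$ and $0.3444/32>1/93$.
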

This improves upon the previous best guarantee even in the more restrictive similarly-sized setting. For example, when all disk radii lie in $[1,2]$, Pach and Tardos~\cite{pach2000cutting} proved that at least $\pi/512 \approx 0.61\%$ of the disks is separable. Even for the setting when all (possibly differently-sized) disks are stabbed by a common line, Pach and Tardos only showed that a $\tfrac{1}{6(1000\pi)^2}\approx 1.7\times 10^{-8}$ fraction of disks is separable. Our result guarantees that $n/93 \approx 1.08\%$ of disks are separable for \emph{arbitrarily-sized} and \emph{arbitrarily-placed} disks.

We complement our positive results with a simple construction that demonstrates inherent limitations of guillotine separability.
\begin{theorem}
\label{thm:disk-hard}
There exists a family of $n$ pairwise-disjoint disks in the plane that does not have a separable subset of size more than~$\lceil n/3\rceil +1$.
Moreover, there exists a family of $n$ pairwise-disjoint disks stabbed by a common line that does not have a separable subset of size more than~$\lceil n/2\rceil +1$.
\end{theorem}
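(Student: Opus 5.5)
The plan is to give two explicit constructions built from tangent equal disks, and to bound the separable subsets by induction on the recursive cutting tree. The tool I will use throughout is the following rigidity fact about touching disks. Since objects are open, two disks that touch at a point $p$ are separated only by their common tangent line at $p$, which is the line through $p$ perpendicular to the segment joining their centers. Any cut that separates two \emph{kept} touching disks is therefore forced. The construction should make every such forced line pass through the interiors of other disks, so that each useful cut destroys a prescribed number of disks.

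\textbf{Line-stabbed family.} I take a zigzag strip of unit disks: centers alternately at $(i,0)$ and $(i+\tfrac12,\tfrac{\sqrt3}{2})$. This is two rows of a hexagonal packing, all stabbed by the line $y=\tfrac{\sqrt3}{4}$, and each disk touches up to four others.

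1. I first check that every line meeting the interior of the strip's convex hull, other than lines near the two ends, cuts at least one disk. The reason is that the two rows interlock, so no straight gap crosses the strip.
2. Let $F(n)$ be the largest separable subset of a contiguous block of $n$ disks. Consider the first cut of a separating sequence for a kept set $S$ with $|S|\ge2$. It splits the strip into two contiguous blocks, minus the disks it intersects.
3. Any cut that does not hug an end of the strip intersects at least one disk. A counting argument over the slopes of such lines should give a recurrence of the form $F(a+b+1)\le F(a)+F(b)$, with blocks on each side.
4. Together with $F(1)=1$ and $F(2)=2$, this recurrence yields $F(n)\le\lceil n/2\rceil+1$.

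If a cut near the ends is cheaper than one disk, I will patch this by padding each end of the strip.

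\textbf{General family.} I use a large patch of the hexagonal packing, shaped so that each of its lattice directions is short (for example a hexagon of side $k$), with $n=\Theta(k^2)$. Here every disk touches six others, and a line that avoids all disk interiors must lie outside the patch's interior. For each cut, the lost disks will be charged to the kept disks on the smaller side.

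1. I aim for the invariant $F(R)\le |R|/3+1$ for every convex sub-patch region $R$ that arises.
2. The key estimate is that a line crossing a region containing $m$ lattice disks on each side intersects enough disks. Concretely, a line crossing a hexagonal packing cuts roughly one disk per unit of length, since the density of centers is $2/\sqrt3$ and the width of the strip of centers that the line hits is $2$.
3. Comparing this with the sizes of the two pieces, the loss should at least pay for the $+1$ terms in the induction.

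\textbf{Main obstacle.} The hard step is making this charging rigorous near the boundary of regions created by earlier cuts. A cut can graze a convex piece and lose few disks while separating only a few off. To handle this I would first try to strengthen the invariant to something like $F(R)\le(|R|+c\cdot\per(R))/3$. This term rewards cuts along the perimeter and lets the induction absorb cheap grazing cuts. At the end, the perimeter term is dominated by the $+1$ slack.

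If the charging argument resists, the fallback is a direct lemma about the kept set $S$: no three kept disks pairwise touch, and among any three mutually tangent disks at most one is kept unless a forced cut kills the third. I would then show by an independence argument on the triangular lattice graph that $|S|\le n/3+1$.
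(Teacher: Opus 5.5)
There is a genuine gap, and it lies in the constructions, not only in the proofs. Both of your families have separable subsets of size about $2n/3$, so neither claimed bound is true for them.

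\textbf{The zigzag strip.} Index the disks as $D_0,D_1,D_2,\dots$ with $D_j$ centred at $x=j/2$. The vertical line $x=j/2$ passes through the centre of $D_j$ and is only tangent to $D_{j\pm1}$. So cutting along $x=(3k+2)/2$ for all $k$ kills only the disks $D_{3k+2}$. Each resulting slab contains the touching pair $D_{3k},D_{3k+1}$ plus dead disks. The common tangent of this pair passes through the centres of $D_{3k-1}$ and $D_{3k+2}$, which are already dead. Hence every $D_{3k}$ and $D_{3k+1}$ is kept; for $n=11$ this gives $8>\lceil 11/2\rceil+1$.

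Your step~4 is also wrong arithmetically. The recurrence $F(a+b+1)\le F(a)+F(b)$ with $F(1)=1$, $F(2)=2$ permits $F(3k+2)=2k+2$, which is exactly the behaviour above.

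\textbf{The hexagonal patch.} Use disks of diameter $1$ on the triangular lattice with horizontal rows, as in your strip. The vertical lines $x=3k/2$ each pass through the centres of one column of disks, spaced $\sqrt3$ apart, and only touch the neighbouring columns. The killed columns form one colour class of the proper $3$-colouring of the lattice. Each remaining slab holds two adjacent columns whose contact graph is a zigzag path. Each tangent line between consecutive disks of that path meets the slab only at the touching point and at centres of dead disks on the slab boundary; all other live disks are at distance at least $1$ from it. So each slab is fully separable, and about $2n/3$ disks survive rather than $n/3+1$.

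Your fallback cannot help either. The correct local constraint is that at most two disks of each mutually tangent triple survive, because the bisector of one side of an equilateral triangle passes through the opposite vertex. Lattice sets obeying this can have density $2/3$.

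\textbf{The missing idea.} In a single-scale periodic packing, the forced separators are lattice lines. A sparse sacrificed set absorbs all of them, so each cut does only local damage. The paper prevents this with multi-scale nesting:
\begin{itemize}
\item $n/3$ triples of pairwise-touching equal disks, each placed in the hole of the previous one at a much smaller scale;
\item all triples centred at the origin, with pairwise distinct orientations modulo $\pi/3$.
\end{itemize}
A line through the hole of a triple is either one of its three separators, all of which pass through the origin, or it cuts two of its disks. Let $T_{k^*}$ be the outermost triple with two kept disks, and let $\ell^*$ be the first cut separating two of them. No earlier cut can pass through the hole of $T_{k^*}$, so nothing shields the inner triples from $\ell^*$. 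Because the orientations differ, $\ell^*$ cuts two disks of every inner triple and also the third disk of $T_{k^*}$, which yields at most $n/3+1$ kept disks.

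The stabbed family then follows directly. A generic line through the origin meets exactly two disks of each triple. These $N=2n/3$ disks form a stabbed subfamily, and since it is a subfamily of the nested one it has at most $n/3+1=N/2+1$ separable members.
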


\mypara{Squares with axis-aligned cuts.}
Recall that for axis-aligned squares, a constant separability ratio was already obtained 
in a recent line of work~\cite{abed2015guillotine,KhanP20,chalermsook2025improved}.
These papers all follow a common two-phase paradigm.
First, a randomly shifted hierarchical grid is used to extract a surviving subfamily in
which every square has a cell comparable to its size---this is similar to our first step,
except that we use a doubly-random hierarchical grid. Second, one defines a conflict
graph on the surviving squares so that every independent set is separable;
the final guarantee is the survival probability (after the grid phase) divided by the chromatic number of this
graph. This approach was introduced by Abed, Chalermsook, Correa, Karrenbauer, P\'erez-Lantero, Soto, 
and Wiese~\cite{abed2015guillotine} and refined by Khan and Pittu \cite{KhanP20} and
by Chalermsook, Kugelmann, Orgo, Uniyal, and Zarsav \cite{chalermsook2025improved}, culminating in the previous best bound of \(9/256\). The authors in \cite{chalermsook2025improved} explicitly describe the previous tradeoff: better survival in the grid phase makes the conflict graph more complicated.

Our approach departs from this paradigm: we do not work with a conflict graph, but we separate a large fraction of the 
surviving squares (after the grid phase) directly in a recursive guillotine-cutting argument. A careful analysis
shows that at least $1/5$ of the surviving squares can be separated. 
This yields a separability ratio of \(13.46\%\), improving the previous \(9/256\) bound by nearly a factor of four.
Our approach not only gives a significantly better constant, it is also much simpler because we
manage to avoid the use of a conflict graph.

\begin{theorem}
\label{thm:sq-unw}
Any family of pairwise-disjoint axis-aligned squares in the plane has a subfamily of size 
$\tfrac15\left(1-\tfrac{29}{128\ln 2}\right)n\approx 0.1346\,n$ that is separable using axis-aligned cuts.
\end{theorem}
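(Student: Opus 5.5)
We prove \Cref{thm:sq-unw} in two phases: a grid phase with survival probability $1-\tfrac{29}{128\ln 2}$, followed by a recursive axis-aligned cutting phase that keeps at least a $1/5$ fraction of the survivors. By linearity of expectation, some choice of the random parameters then leaves a separable subfamily of at least the stated size.

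\emph{Grid phase.} We build the doubly-random hierarchical grid with base cell size $rM_\theta$, where $M_\theta=2^\theta M$ and $\theta\sim\Unif[0,1)$, and with a uniformly random shift. A square $S$ of size $s$ survives if it is not cut by any grid line of a level whose cells are smaller than some threshold multiple of $s$. One natural rule is that $S$ must lie inside a cell of size between $2s$ and $4s$, perhaps with a refinement using the cell's midlines. For a fixed $\theta$, the shift makes the probability that $S$ is cut by a line of $G_i$ linear in $s/(\text{cell size})$ along each axis. So the survival probability is a product of factors of the form $(1-cs/\ell)$ per axis, where $\ell$ is the relevant cell size. Since $\theta$ is uniform, $\log_2$ of the relevant cell size relative to $s$ is uniform over an interval of length one. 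Averaging over $\theta$ therefore turns the sums $\sum 2^{-j}$ into integrals $\int_0^1 2^{-t}\,dt$, which produce the $1/\ln 2$ factor. We then tune the threshold so that the expected failure probability equals $\tfrac{29}{128\ln 2}$. We let $r\to\infty$ (or take it large), so that the top level adds only negligible loss.

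\emph{Cutting phase.} Let $\KK'$ be the surviving family. We run the recursive procedure described in the introduction, restricted to axis-aligned cuts. If some grid line separates $\KK'$ without cutting any square, we cut along it for free and recurse on both sides. Otherwise we take the smallest cell $\sigma$ meeting every square of $\KK'$.

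In case 1, two quadrants each fully contain a square. We cut along a midline of $\sigma$ and discard the squares crossing it. Survival guarantees that those squares have size comparable to $\sigma$, so they are large relative to $\sigma$. Hence only a bounded number of them can cross a quadrant boundary; this bound is the local packing number, and we estimate it by a disjointness and area argument for squares.

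In case 2, all non-crossing squares lie in one quadrant $\sigma'$. Here squares give an advantage over disks: a square outside $\sigma'$, or a crossing square, can be separated from $\sigma'$ by extending an edge of $\sigma'$ or by cutting along a side of that square. These are axis-aligned cuts, so no rotated lines are needed.

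We prove $f(\KK')\ge |\KK'|/5$ by induction, charging each discarded square to a kept square. In each step we keep at least one square and lose at most four, or we recurse on two parts that each already carry the $1/5$ guarantee.

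The main obstacle is establishing the sharp local packing bound in case 1: at most four discarded squares per kept square, counted cumulatively across the recursion. We must ensure that a square crossing a midline is charged only once. It is also delicate that the objects crossing the quadrant boundaries in case 2 can all be handled without further loss. We would first try to show that, at any cell, at most four surviving squares can cross the midlines of $\sigma$ while being forced out. This uses the facts that surviving squares crossing a midline of $\sigma$ have size at least a quarter of $\sigma$ and lie inside $\sigma$'s parent structure. We would then check the small configurations exhaustively.
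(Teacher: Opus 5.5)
Your proposal has genuine gaps, and they sit exactly where the constant comes from. In the grid phase, $1-\frac{29}{128\ln 2}$ is not obtained by ``tuning a threshold.'' It is exactly $\Phi(4)=\int_0^1(1-2^{-u}/4)^2\,du$, the survival probability of the doubly-random grid with the scaling factor fixed at $r=4$. There a level-$i$ square only has to avoid level-$i$ lines, and its cell has side between $4s$ and $8s$. Your rule (cells between $2s$ and $4s$) is $r=2$, which gives $\Phi(2)=1-\frac{13}{32\ln 2}$. Taking $r\to\infty$ is not available either, because the same $r$ drives every packing bound in the cutting phase.

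With $r=4$, a square meeting a midline of a cell $\sigma$ is only guaranteed side larger than $\frac18$ of that of $\sigma$; your ``a quarter'' is again the $r=2$ bound. So one midline can be crossed by up to nine surviving squares, and the local packing number can be $29$. The per-step bound of ``at most four discarded squares'' that you plan to prove is therefore false. The generic case~1/case~2 recursion, run with the true packing number, only certifies a ratio of $\Phi(4)/30$. Your case~2 also fails as stated. A square crossing $\partial\sigma'$ meets $\sigma'$ and cannot be separated from it, and extending an edge of $\sigma'$ is a grid cut that may hit up to nine squares, so these squares cannot be handled without loss.

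The missing idea is how to make losses of up to nine squares affordable at ratio $1/5$. Your invariant $f(m)\ge m/5$ cannot pay for any loss in a two-sided cut, since $f(m_1)+f(m_2)\ge (m_1+m_2)/5<m/5$. The paper instead proves $g(m)\ge (m+9)/5$ for $m\ge 3$, so each side with at least three squares carries a surplus of $9/5$. It then suffices that every $m>10$ admits one of three moves:
\begin{itemize}
\item (P1) a cut losing at most $7$ squares with at least $2$ on each side, using $g(2)=2$;
\item (P2) a cut losing at most $9$ with at least $3$ on each side;
\item (P3) isolating one square while losing at most $4$.
\end{itemize}
This also needs base cases: any three squares are separable, and any seven contain four separable ones, shown via edge-disjoint $x$- and $y$-overlap graphs.

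The geometric argument that guarantees this trichotomy is absent from your plan. At the minimal cell $\sigma^*$ it runs as follows.
\begin{itemize}
\item If a midline has at least two squares entirely on each side, cut it when it meets at most $7$ squares. Otherwise it meets $8$ or $9$, with at most $5$ per half, so cut the other midline, which then has at least $3$ squares on each side.
\item If instead only one square lies entirely left of $\ell_v$ (and at most one entirely below $\ell_h$), cut along that square's right edge. This line meets only squares crossing $\ell_v$, and gives (P3) if it meets at most four.
\item If that edge line meets five squares, they must be exactly the squares crossing the upper half of $\ell_v$. This forces the top-right child of $\sigma^*$ to satisfy the two-sided midline condition, so the midline argument applies there.
\end{itemize}
In particular, your case~1 rule (cut a midline whenever two quadrants each contain a square) is too lossy. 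A midline cut with a single square on one side and nine squares crossing costs more than the $1/5$ budget allows, under either invariant.
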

It is known that there exist instances of axis-aligned \emph{unit squares} for which no axis-aligned guillotine cutting sequence can separate more than $1/2+o(1)$ fraction of squares~\cite{abed2015guillotine}. Our analysis establishes that every family of \emph{arbitrarily-sized} axis-aligned squares admits an axis-aligned guillotine cutting sequence that separates more than $n/8$ squares, significantly narrowing the gap between the upper and lower bounds.
Our techniques also extend to the weighted setting.
\begin{theorem}
\label{thm:sq-w}
Any weighted family of pairwise-disjoint axis-aligned squares has a subset whose total weight is at least
$\tfrac18\left(1-\tfrac{13}{32\ln 2}\right) \approx 0.0517$ times the total weight of the input
and that is separable using axis-aligned cuts.
\end{theorem}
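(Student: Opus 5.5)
We follow the same two-phase scheme as for \Cref{thm:sq-unw}, adapted so that every step is linear in the weights.

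\textbf{Grid phase.} We build the doubly-random hierarchical grid: cells of $G_0$ have size $rM_\theta$ with $M_\theta=2^\theta M$ and $\theta\sim\Unif[0,1)$, and the grid is randomly shifted. A square $S$ \emph{survives} if no grid line of a level whose cells are too large compared to $\size(S)$ intersects it. In the unweighted analysis we compute the survival probability $p$ by integrating over the shift and over $\theta$; this is where the factor $1-\tfrac{c}{\ln 2}$ comes from, since averaging a rational function of $2^\theta$ over $\theta$ produces a logarithm. Survival of each individual square has probability $p$, so linearity of expectation gives expected surviving weight $p\cdot w(\KK)$. Here we would choose $r$ and the survival threshold so that $p=1-\tfrac{13}{32\ln 2}$. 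We expect a slightly more permissive survival rule (or a different $r$) than in the unweighted case, which explains the constant $13/32$ instead of $29/128$. Fixing a good outcome, we keep the survivors $\KK'$ with $w(\KK')\ge p\,w(\KK)$.

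\textbf{Recursive phase.} We run the recursive procedure on $\KK'$ with three cases:
\begin{itemize}
\item free grid-line cuts;
\item a midline cut of the minimal cell $\sigma$ when two quadrants contain whole squares;
\item otherwise, an axis-aligned cut separating a square outside the quadrant $\sigma'$ from $\sigma'$. For squares this is possible because survivors crossing the quadrant boundary are large relative to $\sigma$.
\end{itemize}
In the unweighted proof the recurrence charges each cut by the local packing number; here it only gives a count. For the weighted version we change the choices. When a midline cut would discard squares, we compare the weight of the squares crossing the midline with the weight kept, and choose among the cut options (the two midlines, or keeping a single heavy crossing square and discarding the rest) to maximize retained weight. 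The key claim is then a weighted recurrence. Let $F(\mathcal{A})$ be the maximum weight of a separable subfamily of a family $\mathcal{A}$ inside a cell. We want $F(\mathcal{A})\ge \tfrac18 w(\mathcal{A})$. We prove it by induction on the hierarchy, showing that at each step the discarded weight is at most $7$ times the weight we can guarantee to keep among the discarded group. The argument is an averaging over at most $8$ disjoint ``candidate'' groups, determined by the local packing structure of squares crossing the two midlines of $\sigma$ and its quadrant boundaries. Among $8$ options whose union covers the crossing squares, the heaviest has at least $1/8$ of their weight.

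\textbf{Main obstacle.} The hard step is showing that the crossing survivors at any cell can be partitioned into at most $8$ classes, each of which is simultaneously retainable (separable together with recursion on the rest) without loss elsewhere. This is what replaces the $1/5$ ratio obtained by counting, since in the unweighted case small squares can be discarded freely. We would establish it by a case analysis of where surviving squares of size comparable to $\sigma$ can sit: near each of the $4$ midline half-segments, with separate cases for squares touching the center. Using that survivors have size at least a constant fraction of the quadrant, at most $2$ such squares per half-segment are relevant, and the options pair up into $8$ guillotine-compatible choices. Combining gives total weight at least $\tfrac18 p\, w(\KK)$.
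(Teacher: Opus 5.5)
Your grid phase is essentially right, but no new survival rule is needed. With the same rule and $r=2$, Lemma~\ref{lem:wtsq} gives $\Phi(2)=1-\tfrac{1}{2\ln 2}+\tfrac{3}{32\ln 2}=1-\tfrac{13}{32\ln 2}$, and linearity of expectation carries over to weights verbatim. The genuine gap is in the recursive phase. The factor $\tfrac18$ rests entirely on your ``main obstacle'' claim: that the crossing survivors at a cell split into at most $8$ classes, each retainable ``without loss elsewhere''. You only sketch this, and as stated it is false. If you keep a square $S$ that crosses a midline of $\sigma$, you can no longer cut along that midline. Separating $S$ from the quadrants it overlaps will in general cut smaller survivors lying there, namely squares whose fitting cells $S$ intersects. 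Once that cost is charged, your single-ratio induction $F(\mathcal{A})\ge\tfrac18 w(\mathcal{A})$ does not close. Suppose retaining class $C_c$ forces discarding $X_c$ from the quadrant families $Q$. Then induction only yields $w(C_c)+\tfrac18\bigl(w(Q)-w(X_c)\bigr)$, and averaging over the eight classes gives $\tfrac18\bigl(w(C)+w(Q)\bigr)-\tfrac1{64}\sum_c w(X_c)$, which falls short of the target. Keeping none of $C$ yields only $\tfrac18 w(Q)$. Moreover, choices made greedily at different cells need not be mutually compatible.

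What is missing is a partition that is consistent across all levels at once. Then the squares destroyed by a retained large square are charged to \emph{other} classes globally, rather than inside a per-cell average. The paper gets this directly from Chalermsook et al. Fitting cells are taken among square cells and half cells. The conflict graph joins two non-twin squares with the same fitting cell, and joins a square to any square whose properly contained fitting cell it intersects. This graph is $8$-colorable and every independent set is separable (Lemma~\ref{lem:conflict-graph-coloring}); replacing $M$ by $M_\theta$ only rescales the hierarchy. The heaviest color class of the surviving set then has weight at least $w(\SM')/8\ge\tfrac18\Phi(2)\,w(\SM)$, which is the theorem. For a self-contained recursive proof, your induction hypothesis would have to carry such a fixed $8$-class structure across the whole recursion, for instance ``for each color $c$, the color-$c$ squares of $\mathcal{A}$ are separable''. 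A single weighted ratio is not enough.
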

A notable feature of our framework is that the resulting guillotine cuts are not restricted to the sampled grid lines. Instead, many recursive cuts are placed along the boundaries of carefully chosen squares. This additional flexibility leads to both stronger guarantees and conceptually simpler proofs than previous approaches~\cite{abed2015guillotine,KhanP20,chalermsook2025improved}.

\paragraph{Related work.}
Guillotine cuts expose a recursive structure often strong enough for dynamic programming. Hence, they
have been applied to many different problems, including cutting stock~\cite{gilmore1965multistage}, 
the Traveling Salesman Problem and $k$-MST~\cite{Mitchell96,cheng2002guillotine, tamaki2000approximation}, Maximum Independent Set of Rectangles~\cite{Mitchell21, GalvezKMMPW22}, 
VLSI floorplanning~\cite{otten1982automatic, wong1989floorplan}, and geometric packing~\cite{cheng1994cutting, DBLP:conf/focs/BansalLS05, KhanS21,KhanLMSW25, DBLP:conf/compgeom/0001MSW21,  CKPT17}.
Guillotine cuts are also widely used in operations research because of their low implementation cost 
and their compatibility with column-generation methods~\cite{cheng1994cutting}.

Guillotine subdivisions are also referred to as \emph{binary space partitions}, or \emph{BSPs} for short;
this is the name that is typically used in computer graphics. Here the objective is to 
separate all the objects from each other, while minimizing the total number of fragments into
which the objects are cut~\cite{paterson1990efficient}. It is known that any set of disjoint line segments in the plane
admits a BSP of size $O(n\log n /\log\log n)$, which is tight in the worst case~\cite{DBLP:journals/dcg/Toth03,DBLP:journals/dcg/Toth11}.
Axis-aligned segments admit a BSP of size $O(n)$~\cite{DBLP:journals/jal/PatersonY92} and the same is true for
fat objects~\cite{DBLP:journals/algorithmica/Berg00};
see the survey \cite{toth2005binary} for related results. 

Finally, the glass cutting problem is related to Tverberg’s $(1,k)$-separation problem \cite{tverberg1979separation}. 
This problem asks how large a family of pairwise disjoint compact convex sets must be, in order to guarantee the
existence of a single guillotine cut  that separates one member of the family from $k$ others. 
This question initiated a line of work on single-line separability of convex sets \cite{novick2012allowable, czyzowicz1992separating, rivera2022convex}. 
Our setting differs in that we require a recursive sequence of guillotine cuts, rather than a single separating line.

\section{A framework for glass cutting}
Let $\KK$ be a set of convex objects in the plane. We define the \emph{size} of an object~$K\in\KK$,
denoted by~$\size(K)$, to be the edge length of the smallest axis-aligned square that circumscribes $K$.
Note that the size of a disk $D$ is equal to its diameter. Let $M:= \max_{K \in \KK} \size(K)$.
In this section we describe a framework to generate a large separable subset for~$\KK$. 
In the next section we will apply this framework, first in the case where $\KK$ consists of
arbitrary $\eps$-fat objects and then in the cases where $\KK$ consists of disks or axis-aligned squares.

\subsection{Doubly-random hierarchical grids}
\label{sec:random-grid}

\paragraph{Construction of the doubly-random hierarchical grid.} 
Let $r>1$ be a \emph{scaling factor} that we will fix later, depending on the type
of objects we work with. We choose a random value $\theta\sim\Unif[0,1)$ and let $M_\theta :=2^\theta M$. 
Next, we choose random shifts $a,b \sim \Unif[0,rM_{\theta})$. 
For every integer $i\ge 0$, we define a grid~$G_i$ whose grid lines are as follows.
\begin{itemize}
\item The vertical grid lines of $G_i$ have $x$-coordinates $a + k\cdot \tfrac{rM_{\theta}}{2^i}$ for every integer $k$.
\item The horizontal grid lines of $G_i$ have $y$-coordinates $b + k\cdot \tfrac{rM_{\theta}}{2^i}$ for every integer $k$.
\end{itemize}

We call the lines defining the grid~$G_i$ \emph{level-$i$ lines} and we call the cells in this grid \emph{level-$i$ cells}. 
For any integer $i \ge 0$, any level-$i$ line is also a level-$(i+1)$ line, and any level-$(i+1)$ cell~$\sigma$ is fully contained in
some level-$i$ cell~$\sigma'$. We call $\sigma'$ the \emph{parent} of $\sigma $ and we call $\sigma$ a child of~$\sigma'$.
Thus, a level-$i$ cell has four children, each corresponding to one of its quadrants (which are level $(i+1)$-cells).
Note that a level-$i$ grid cell has size $rM_{\theta}/2^i$.
We also define the level of an object~$K$, as follows:
\begin{definition}[Object levels]
\label{def:object-levels}
The \emph{level} of an object~$K\in \KK$, denoted by~$\lev(K)$, is the integer~$i$ such that
$M_\theta/2^{i+1} < \size(K) \le M_\theta/2^i$. 
\end{definition}
When $\lev(K)=i$, we refer to $K$ as a \emph{level-$i$ object}.
Unlike in previous work, the object levels are random variables because they depend on $\theta$. 
We also need the following definition.
\begin{definition}[Original cell and surviving objects]
Let $K\in\KK$ be an object and let $i := \lev(K)$. If there is a level-$i$ cell $\sigma$ such that 
$K\subset \sigma$, then we call $\sigma$ the \emph{original cell} of~$K$ and we say that $K$ is a \emph{surviving object}.
\end{definition}

\paragraph{Analysis of the expected number of surviving objects.} 
The family of surviving objects in~$\KK$ is called the \emph{surviving set}. This notion was already defined by \cite{chalermsook2025improved}; the novelty in our approach lies in the fact that we work
with a doubly-random hierarchical grid. This will allow us to prove a better bound on the expected size of the surviving set.
We need the following observation.
\begin{observation}
\label{prop:level-property}
Let $K\in \KK$ be a surviving object and let $\sigma$ be a level-$i$ cell whose boundary is intersected by~$K$.
Then $\lev(K)\leq i-1$.
\end{observation}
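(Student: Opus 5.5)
The argument is a short chain of containments. First use the fact that $K$ survives, then compare the cell $\sigma$ with the original cell of $K$ via the nesting of the hierarchical grids.

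Set $j := \lev(K)$. Since $K$ is a surviving object, it has an original cell, namely a level-$j$ cell $\sigma_K$ with $K \subset \sigma_K$. Recall that objects are open, so ``$K$ intersects the boundary of $\sigma$'' means some boundary segment of $\sigma$ meets the interior of $K$. Hence some grid line $\ell$ carrying that segment passes through the interior of $K$. This line $\ell$ is a level-$i$ line.

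Suppose for contradiction that $j \ge i$. Every level-$i$ line is also a level-$j$ line, because the grids are nested: the level-$i$ lines are exactly those level-$(i+1)$ lines with $k$ even, and we iterate. The level-$j$ lines are pairwise disjoint from the interiors of level-$j$ cells, since they form the cell boundaries. Therefore $\ell$ does not meet the open set $\operatorname{int}(\sigma_K)$. But $K \subset \sigma_K$ and $K$ is open, so $K \subseteq \operatorname{int}(\sigma_K)$. This contradicts the fact that $\ell$ meets $K$. Hence $j \le i-1$.

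There is no real obstacle. The only point needing care is the convention on boundary intersection: a line merely touching $K$ does not count, because objects are open. For the same reason, the containment $K \subset \sigma_K$ lets us place $K$ inside the interior of the cell, so touching the boundary of $\sigma_K$ is harmless.
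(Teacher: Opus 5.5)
Your proof is correct and is the reasoning the paper leaves implicit, since it states this observation without proof. A surviving level-$j$ object lies in the interior of its level-$j$ original cell, so it avoids every level-$j$ line and hence, by nesting, every line of level at most $j$; meeting a level-$i$ line therefore forces $i\geq j+1$, and your use of openness to get $K$ inside the interior of $\sigma_K$ is the detail that makes this airtight.
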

The following lemma shows that for some choice of the parameters defining our doubly-random hierarchical grid,
there exists a large surviving set.
\begin{lemma}
\label{lem:surviving-set-size}
    There exists a choice of $\theta$ and of the horizontal and vertical shifts $a,b$ such that the surviving set 
    has size at least $\Phi(r)\cdot|\KK|$, where 
    $\Phi(r) := 1-\tfrac{1}{r\ln 2}+\tfrac{3}{8r^2\ln 2}$.
\end{lemma}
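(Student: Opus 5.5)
The plan is to compute, for a single fixed object $K\in\KK$, the probability over the random choice of $(\theta,a,b)$ that $K$ survives, and to show that this probability is at least $\Phi(r)$. Linearity of expectation then gives that the expected size of the surviving set is at least $\Phi(r)\cdot|\KK|$. Some choice of $(\theta,a,b)$ must do at least as well as the expectation, which proves the lemma.

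\emph{Step 1: condition on $\theta$.} Fix $\theta$ and write $s:=\size(K)$ and $i:=\lev(K)$, so that $M_\theta/2^{i+1}<s\le M_\theta/2^i$. Level-$i$ vertical lines are spaced $\delta_i:=rM_\theta/2^i$ apart. Since $a\sim\Unif[0,rM_\theta)$ and $rM_\theta$ is an integer multiple of $\delta_i$, the position of the level-$i$ vertical lines modulo $\delta_i$ is uniform. The $x$-projection of $K$ is an open interval of length at most $s$. Hence the probability that it meets a level-$i$ vertical line is at most $s/\delta_i$, using that $s\le M_\theta/2^i<\delta_i$ because $r>1$. The same bound holds for horizontal lines, using $b$. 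Since $a$ and $b$ are independent, conditioned on $\theta$ we get
\[
\Pr[K \text{ survives}\mid\theta]\ \ge\ \Bigl(1-\frac{t}{r}\Bigr)^2, \qquad t:=\frac{2^i s}{M_\theta}\in\Bigl(\tfrac12,1\Bigr].
\]
Indeed, $K$ survives exactly when it avoids all level-$i$ lines, and then it lies inside a single level-$i$ cell.

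\emph{Step 2: average over $\theta$.} I would show that $t=2^{-u}$ with $u$ uniform on $[0,1)$. We have $\log_2 t=\log_2 s-\theta+i$, and $i$ is exactly the integer that places this quantity in $(-1,0]$. So $-\log_2 t$ equals $(\theta-\log_2 s) \bmod 1$, up to a measure-zero boundary, and this is uniform on $[0,1)$ because $\theta$ is. This is the only real subtlety: one has to check that the fractional-part map preserves uniformity. Direct integration then gives
\[
\mathbb{E}[t]=\int_0^1 2^{-u}\,du=\frac{1}{2\ln 2},\qquad \mathbb{E}[t^2]=\int_0^1 2^{-2u}\,du=\frac{3}{8\ln 2}.
\]

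\emph{Step 3: combine.} Expanding the square,
\[
\Pr[K\text{ survives}]\ \ge\ 1-\frac{2\,\mathbb{E}[t]}{r}+\frac{\mathbb{E}[t^2]}{r^2}\ =\ 1-\frac{1}{r\ln2}+\frac{3}{8r^2\ln2}\ =\ \Phi(r).
\]
Summing over $K\in\KK$ and applying the averaging argument from the first paragraph finishes the proof.

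I expect no serious obstacle. The main care points are two. First, in Step 1, the shift must be uniform modulo the level-$i$ spacing, which holds because level-$0$ lines are also level-$i$ lines. Second, in Step 2, one must verify the uniformity of $\log_2 t$.
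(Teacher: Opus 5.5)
Your proof is correct and follows the paper's argument essentially step for step. You bound each axis's hitting probability by $2^{-u}/r$, where $u$ is the uniformly distributed fractional part of $\log_2(M_\theta/s)$; you use independence of $a$ and $b$, integrate over $u$, and finish by linearity of expectation. One harmless slip: your expression for $\log_2 t$ is missing the constant term $-\log_2 M$, but a constant shift does not affect the uniformity of the fractional part.
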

\begin{proof}
Fix $K\in \KK$, and let $s :=\size(K)$ and $i := \lev(K)$. Now define
$A:=\log_2(M_\theta/s)$ and observe that $i = \lfloor A\rfloor$. Let $u :=A-i$ so that $u\in [0,1)$. 
Since $A=\theta+\log_2\!\left(\tfrac{M}{s}\right)$,
the quantity $u=A-\lfloor A\rfloor$ is equal to the fractional part of
$\theta+c$, where $c=\log_2(M/s)$ is a constant depending only on $K$. Since
$\theta\sim\Unif[0,1)$ and adding a constant modulo $1$ preserves the
uniform distribution, it follows that
$u\sim\Unif[0,1)$.
As the spacing between consecutive level-$i$ lines is $rM_\theta/2^i$,  
the probability of the event that a vertical level-$i$ line intersects~$K$, conditioned on $u$, is at most
\[
   \tfrac{\operatorname{width}(K)}{rM_\theta/2^i}
   \le \tfrac{s}{rM_\theta/2^i}=\tfrac{2^{-u}}r,
\]
where $\mathrm{width}(K)$ denotes the difference between the maximum and minimum $x$-coordinates of $K$. 
The same bound holds for the event that a horizontal grid line intersects~$K$, and the two events are
independent because the horizontal and vertical shifts are chosen independently. Therefore
\[
   \Pr[K\text{ survives}\mid u]
   \ge \left(1-\tfrac{2^{-u}}r\right)^2.
\]
Since $u\sim \Unif[0,1)$, this gives
\[
   \Pr[K\text{ survives}]
   \ge \int_0^1\left(1-\tfrac{2^{-u}}r\right)^2du  
   =1-\tfrac2r\int_0^1 2^{-u}\,du+\tfrac1{r^2}\int_0^1 4^{-u}\,du 
   =1-\tfrac1{r\ln 2}+\tfrac{3}{8r^2\ln 2}=\Phi(r).
\]
Hence, by linearity of expectation, the expected size of the surviving set is at least $\Phi(r)\cdot|\KK|$. 
Thus there exist choices of $\theta, a,b,$ for which the surviving set has size at least $\Phi(r)\cdot|\KK|$. 
\end{proof}

Note that the parameters $\theta$ and $r$ determine the scale of the
hierarchical grid, while the random shift $(a,b)$ determines its spatial
alignment. To illustrate the role of $\theta$, consider a square with side length
$s=(1-\varepsilon)M$, 
for an arbitrarily small $\varepsilon>0$. If the hierarchy is constructed
using the fixed value $M$, then the square is assigned to level~$0$, and
its survival probability is $\left(1-\tfrac{1-\varepsilon}{r}\right)^2$. 
Changing the fixed scaling parameter $r$ changes
the grid spacing, but not the level. In contrast, replacing $M$ by
$M_\theta=2^\theta M$ allows the same square to belong to level~$0$ or
level~$1$, depending on $\theta$. 
This yields the
survival probability $\Phi(r)$, which is strictly larger than
$\left(1-\tfrac{1-\varepsilon}{r}\right)^2$ for every $r>1$ and sufficiently small $\epsilon$.

Larger values of $r$ increase the survival probability $\Phi(r)$.
However, they also increase the local packing number, making the second
stage of the algorithm less effective. 
Hence, we must choose $r$ carefully, depending on the details of
the second stage (which are different for the different object types that we consider).
\subsection{A recursive procedure to generate a separable subset}
\label{sec:recursion}
Below (in the proof of \Cref{lem:induction-general}) we describe a recursive procedure to generate 
a separable subfamily for a given family~$\KK$. We start by taking a doubly-random hierarchical grid
$\Gamma = G_0,G_1,\ldots$ for~$\KK$, 
as described above. Let $\KK'\subseteq \KK$ be the resulting surviving family. We will analyze the size of
the separable subfamily generated by our procedure in terms of the so-called local packing number of the
surviving set, which we define next.
\begin{definition}[Local packing number]\label{def:AB}
A surviving family $\KK'\subseteq \KK$ has \emph{local packing number} $\local \in \mathbb{N}$ if for any 
grid cell $\sigma$ of the hierarchical grid~$\Gamma$ and any subset $\KK''\subseteq \KK'$ such that every object of $\KK''$ intersects $\sigma$,
the following holds: the number of objects from $\KK''$ that do not lie completely inside one of the children
of $\sigma$ is at most~$\local$.
\end{definition}
Thus, if $\partial\sigma'$ denotes the boundary of a cell $\sigma'$ and $C(\sigma)$ denotes
the set of four children of~$\sigma$, then at most~$\local$ objects from $\KK'$ 
can intersect $\bigcup_{\sigma' \in C(\sigma)} \partial \sigma'$.
Our goal is to prove a lower bound on the size of a separable subfamily of the surviving~$\KK'$ in terms of 
the local packing number of~$\KK'$. To this end, let $\GG(\KK'')$ denote the maximum size of 
a separable subfamily of a family $\KK''\subseteq \KK'$, and define the function $T\colon \mathbb{N} \rightarrow \mathbb{N}$
as follows:
\[ 
T(m):= \min\{\GG(\KK'')\mid \KK''\subseteq \KK' \text{ with } |\KK''|=m\}.
\]
In other words, $T(m)$ is the minimum size of the largest separable subfamily of a family $\KK''\subseteq \KK'$ comprising exactly $m$ objects.     

\begin{figure}
  \centering

  \begin{subfigure}[b]{0.45\textwidth}
  \centering
    \includegraphics[width=0.9\linewidth]{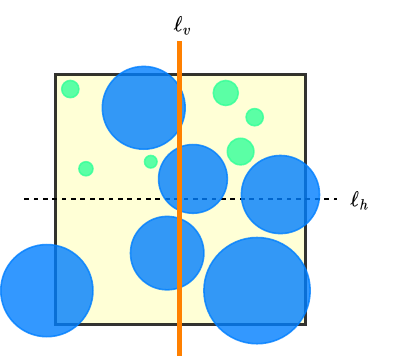}
    \label{fig:cut-along-lv-disk}
  \end{subfigure}
  \hfill
  \begin{subfigure}[b]{0.45\textwidth}
  \centering
    \includegraphics[width=0.9\linewidth]{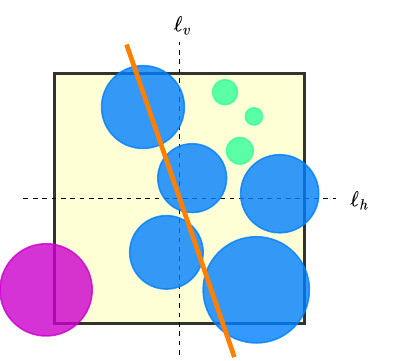}
    \label{fig:cut-along-slant}
  \end{subfigure}

  \caption{Example figure for \Cref{lem:induction-general}, drawn for the case when all objects are disks. In the left figure, there are at least two (green) objects lying completely inside two distinct children of $\sigma^*$, and we perform a cut along $\ell_v$. In the right figure, the cut separates the purple object from the cell $\sigma'$.}
  \label{fig:cut-disk}
\end{figure}
The following lemma gives a bound on $T(m)$.
\begin{lemma}
\label{lem:induction-general}
$T(m) \ge \tfrac{m+\local}{\local+1}$, where $\local$ is the local packing number of $\KK'$.
\end{lemma}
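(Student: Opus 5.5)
We prove $T(m)\ge \tfrac{m+\local}{\local+1}$ by strong induction on $m$. The recursive procedure outlined in the introduction serves as the witness. For $m=1$ the bound reads $T(1)\ge 1$, which holds trivially. Note that the claimed bound is equivalent to $T(m)-1\ge \tfrac{m-1}{\local+1}$. The natural shape of the induction step is therefore: every time we make a cut, we lose at most $\local$ objects, and in exchange we split off at least one additional separated object (or at least one new nonempty part).

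Fix $\KK''\subseteq\KK'$ with $|\KK''|=m\ge 2$. We distinguish three cases.

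\textbf{Case 0: a free grid line.} Suppose some grid line of $\Gamma$ splits $\KK''$ into two nonempty families of sizes $m_1+m_2=m$ without intersecting any object. We cut along it and apply induction to both sides. This gives
\[
\tfrac{m_1+\local}{\local+1}+\tfrac{m_2+\local}{\local+1}\ge \tfrac{m+\local}{\local+1}.
\]

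Otherwise, let $\sigma^*$ be a smallest cell of $\Gamma$ intersecting every object of $\KK''$. Such a cell exists: every object lies in its original cell, so some sufficiently coarse cell meets all objects, and cells shrink as the level grows. By \Cref{def:AB}, at most $\local$ objects of $\KK''$ fail to lie inside a single child of $\sigma^*$.

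\textbf{Case 1: two children each contain an object.} Suppose two distinct children of $\sigma^*$ each fully contain an object of $\KK''$. Then one of the two midlines $\ell$ of $\sigma^*$ separates these two objects. We cut along $\ell$ and extend the cut end-to-end within the current region. Every object crossing $\ell$ also meets $\sigma^*$, because all objects do; hence it does not lie inside a child of $\sigma^*$. So we discard at most $\local$ objects, and both sides remain nonempty with sizes $m_1+m_2\ge m-\local$, $m_1,m_2\ge1$. Induction gives
\[
T(m)\ge \tfrac{m_1+m_2+2\local}{\local+1}\ge \tfrac{m+\local}{\local+1}.
\]

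\textbf{Case 2: at most one child is used.} Otherwise, all objects of $\KK''$ lying inside a child of $\sigma^*$ lie inside a single child $\sigma'$. By minimality of $\sigma^*$, the cell $\sigma'$ does not meet every object, so some $K_0\in\KK''$ is disjoint from the interior of $\sigma'$. Since $K_0$ and $\sigma'$ are disjoint convex sets, some line separates them. We cut along it, keep $K_0$ alone on one side, and recurse on the objects inside $\sigma'$. Any object on the $\sigma'$ side that is not inside $\sigma'$ is discarded; this includes objects cut by the line. Every discarded object is one that does not lie in a child of $\sigma^*$, and so is $K_0$. Therefore at least $m-\local$ objects lie inside $\sigma'$, all of them on the correct side of the cut. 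If this set is empty, then $m\le\local$, and $T(m)\ge 1\ge\tfrac{m+\local}{\local+1}$ holds trivially. Otherwise induction gives
\[
T(m)\ge 1+\tfrac{m-\local+\local}{\local+1}\ge\tfrac{m+\local}{\local+1}.
\]

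\textbf{Main obstacle: Case 2 bookkeeping.} The delicate point is counting $K_0$. It must be among the at most $\local$ objects not inside a child of $\sigma^*$, so the total loss is at most $\local$ including $K_0$. Indeed, $K_0$ is not in $\sigma'$, and no object lies inside any other child. So we keep $K_0$ plus at least $m-\local$ objects.

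A further detail is that recursing inside $\sigma'$ needs the subfamily property: the sub-instance is again a subset of $\KK'$, so the definition of $T$ applies. The cuts are end-to-end within the current region, which is what guillotine cuts require. Similarly, in Case 1 every discarded object genuinely meets $\sigma^*$, which is what lets \Cref{def:AB} bound the number discarded.
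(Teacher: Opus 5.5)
Your case analysis (free grid line; two occupied children; one occupied child handled by a non-axis-aligned cut separating $K_0$ from $\sigma'$) matches the paper's proof. The counting in Cases 1 and 2 is correct whenever at least one object lies inside $\sigma'$.

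The gap is in the small-$m$ regime, which you tried to absorb into the induction step rather than treat as a base case. In Case 2, when no object lies inside $\sigma'$, you write $T(m)\ge 1\ge\frac{m+\local}{\local+1}$. But $\frac{m+\local}{\local+1}\le 1$ holds only for $m\le 1$, so this inequality is false for every $2\le m\le\local$. This subcase really occurs: already for $m=2$, both objects may cross the midlines of $\sigma^*$. Relatedly, your argument that $\sigma^*$ exists (``cells shrink as the level grows'') fails for small $m$. If two objects of $\KK''$ touch at a point lying on no grid line, every cell containing that point meets both objects. Then there is no smallest such cell, and no free grid line separates them either.

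Both issues disappear with the paper's base case. For $2\le m\le\local+1$, any two disjoint convex objects are separated by a line, so
\[
T(m)\ge 2>\tfrac{2\local+1}{\local+1}\ge\tfrac{m+\local}{\local+1}.
\]
Then run your case analysis only for $m\ge\local+2$. In that range $\sigma^*$ does exist. Take a cell whose children are smaller than every object of $\KK''$; if it met all $m$ objects, none of them would lie inside a child, forcing $m\le\local$. So the levels at which some cell meets all objects are bounded. Moreover, the set of objects inside $\sigma'$ has size at least $m-\local\ge 2$, so the empty subcase never arises.
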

\begin{proof}
    We will prove by induction on~$m$ that any subfamily $\KK''\subseteq \KK'$ of size $m$ 
    has a separable subfamily of~$\KK''$ of size at least $(m+\local)/(\local+1)$.
    This trivially holds for $m=1$ as $T(1)=1$. It also holds for any $2 \leq m\leq \local+1$,
    because any two convex objects can be separated by a line and $(m+\local)/(\local+1)<2$
    for $m\leq \local+1$.
    
    Now assume $m\ge$ $\local+2$, and consider the doubly-random hierarchical grid~$\Gamma$. 
    Recall that $G_0$ is the coarsest grid. Since $\KK'$ is a surviving family, we know that 
    no object of $\KK''$ can intersect the grid lines of $G_0$. Therefore, there always exists 
    a sequence of guillotine cuts that isolates each non-empty cell of $G_0$ without intersecting 
    any object of $\KK''$. Hence, in the analysis below, we may assume w.l.o.g.~that all objects 
    of $\KK''$ lie inside a single cell of $G_0$.

    Let $\sigma^*$ be the smallest grid cell such that all objects in $\KK''$ intersect $\sigma^*$. 
    Let $\ell_h$ and $\ell_v$ be the horizontal and vertical grid lines that partition $\sigma^*$ into 
    its four children. By definition, the boundary of $\sigma^*$ together with the lines $\ell_h$ and $\ell_v$ 
    can intersect at most $\local$ objects. Therefore, there must exist at least one object in $\KK''$ that lies 
    completely inside one of the four children of $\sigma^*$. We consider two cases.
    \begin{itemize}
    \item The first case is when there exist two children $\sigma', \sigma''$ of $\sigma^*$ and
          two objects $K', K''\in \KK''$ such that $K'$ and $K''$ completely lie inside $\sigma'$ 
          and $\sigma''$, respectively. Clearly, one of the grid lines $\ell_h$ or $\ell_v$ separates 
          the cells $\sigma'$ and $\sigma''$. We now perform a guillotine cut along that grid line;
          see \Cref{fig:cut-disk}~(left). This cuts at most $\local$ objects, by definition of~$\local$.
          Let $m_1$ and $m_2$ denote the number of objects on the two sides of the cut.
          Observe that $1\leq m_1,m_2 < m$ and that $m_1+m_2\geq m-\local$. Plugging in the induction hypothesis,
          we thus obtain
         \[
          T(m) \ge T(m_1)+T(m_2)  \ge \tfrac{m_1+\local}{\local+1}+\tfrac{m_2+\local}{\local+1}\ge \tfrac{m+\local}{\local+1},
         \]
        which finishes the proof in the first case.
    \item 
    If the above case does not hold, then there must exist a child $\sigma'$ of $\sigma^*$ and an 
    object $K'\in \KK''$ such that the following holds:
    (i)~$K'$ completely lies inside $\sigma'$,  and (ii)~all objects that do not lie 
    inside $\sigma'$ must intersect either the boundary of $\sigma^*$ or one of the grid lines $\ell_h$ and $\ell_v$. 
    We know that there are at most $\local$ objects that intersect the boundary of $\sigma^*$ or one of the grid lines $\ell_h$ and $\ell_v$. 
    Moreover, by the definition of $\sigma^*$, there must be at least one object $K''\in \KK''$ that 
    lies completely outside $\sigma'$. We now perform a guillotine cut along a (not necessarily axis-aligned)
    line~$\ell$ that separates $K''$ from the cell $\sigma'$; see \Cref{fig:cut-disk}~(right). Such a line~$\ell$
    exists because $K''$ and $\sigma'$ are both convex and any two convex objects can be separated by a line.
    Observe that $\ell$ can intersect only the objects that do not lie completely inside $\sigma'$. 
    Since $\ell$ does not intersect~$K''$, it thus intersects at most $\local-1$ objects from~$\KK'$.
    Let $m_1$ be the number of objects lying inside $\sigma'$, and observe that $m-\local \leq m_1 <m$.
    Plugging in the induction hypothesis, we thus obtain
    \[
     T(m) \ge T(m_1)+1 \ge \tfrac{m_1+\local}{\local+1}+1 \ge \tfrac{m+\local}{\local+1},
    \]
    which finishes the proof in the second case.
    \end{itemize}
\end{proof}
To summarize, our framework is as follows. Starting with a set $\KK$ of $n$ objects, 
we first construct a doubly-random hierarchical grid that, due to \Cref{lem:surviving-set-size}, 
can provide us with a surviving set of size at least $\Phi(r) \cdot n$.
If the local packing number of these objects is $\lambda_r$, then, due to \Cref{lem:induction-general}, 
we obtain a separable set of size $(\Phi(r) n+\lambda_r)/(\lambda_r+1)$, which implies a separability ratio of $\Phi(r)/(\lambda_r+1)$. 
In the next section, we  analyze the local packing number $\lambda_r$ in various settings.
Optimizing the value of $r$, depending on $\lambda_r$ and $\Phi(r)$, then yields 
the desired separability ratios.

\section{Applications of the framework}
We now apply the framework described above to obtain large separable subfamilies for the cases where
$\KK$ consists of $\eps$-fat objects, disks, or axis-aligned squares.

\subsection{Fat convex objects}
\label{sec:fat}
Below we prove that any family of $\eps$-fat objects admits a linear-size separable subfamily, thus proving \Cref{thm:fat-2d}.
For an object $K$, we define 
$$
   \rin:=\sup\{\rho\mid\text{there exists a disk of radius }\rho\text{ contained in }K\},
$$
and 
$$
   \rout:=\inf\{\rho\mid K\text{ is contained in some disk of radius }\rho\}.
$$
\begin{definition}[$\epsilon$-fat convex set~\cite{pach2000cutting}]
    A family $\KK$ of plane convex sets is called \emph{$\epsilon$-fat} if, for each $K\in \KK$, it holds that $\rin/\rout\ge \epsilon$.
\end{definition}

In order to prove \Cref{thm:fat-2d}, we let $\KK$ be a family of $\eps$-fat objects in the plane, 
and let $\KK'\subseteq \KK$ be the surviving set of $\KK$ with respect to our doubly-random hierarchical grid. 
Due to \Cref{lem:surviving-set-size}, it suffices to bound the value of the local packing number of~$\KK'$. 

\begin{restatable}{lemma}{localconstantfat}
\label{lem:local-constant-convex}
    The local packing number of $\KK'$ is $O(r^2/\epsilon)$.
\end{restatable}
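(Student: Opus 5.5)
Fix a grid cell $\sigma$ of level $i$, of size $rM_\theta/2^i$, and a subfamily $\KK''\subseteq\KK'$ whose objects all intersect $\sigma$. We must bound the number of objects of $\KK''$ that intersect the boundary of some child of $\sigma$. Such an object meets $\partial\sigma\cup\ell_h\cup\ell_v$, where $\ell_h,\ell_v$ are the two midlines of $\sigma$. These are level-$(i+1)$ lines, and they include the relevant parts of level-$i$ lines. By \Cref{prop:level-property}, such a surviving object $K$ crosses the boundary of a level-$(i+1)$ cell, so $\lev(K)\le i$. Hence
\[
\size(K) > M_\theta/2^{i+1} = \size(\sigma)/(2r).
\]
So every object we count has size at least a constant fraction $1/(2r)$ of the size of $\sigma$. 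Moreover, each counted object is a connected set meeting both $\sigma$ and the segment set $S:=\partial\sigma\cup((\ell_h\cup\ell_v)\cap\sigma)$. That set has total length $6\,\size(\sigma)$. All counted objects are pairwise disjoint, since $\KK$ is.

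The plan is a packing argument. Let $s:=\size(\sigma)/(2r)$ and cover $S$ by $O(r)$ disks of radius $s$, centred at points spaced $s$ apart along $S$; there are about $6\cdot 2r=12r$ of them. Each counted object $K$ intersects $S$, so it intersects some disk $D$ of radius $s$ from this cover. It then suffices to bound the number of pairwise disjoint $\epsilon$-fat convex objects of size at least $s$ that intersect a fixed disk of radius $s$. This is the standard low-density bound (van der Stappen): such a count is $O(1/\epsilon)$, which gives $\lambda = O(r)\cdot O(1/\epsilon)$, even better than the claimed bound. To be safe and match the statement, I would instead use a cruder grid-cell counting. Cover $\sigma$ enlarged by $s$ with $O(r^2)$ boxes of side $s$, and show each box meets only $O(1/\epsilon)$ large disjoint fat objects. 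This gives $O(r^2/\epsilon)$ directly without needing the sharper one-dimensionality of $S$.

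The key local step is the following. Let $D$ be a disk of radius $s$, and let $K$ be convex and $\epsilon$-fat with $\size(K)\ge s$ and $K\cap D\ne\emptyset$. I will show that $K$ contains a region of area $\Omega(\epsilon s^2)$ inside the concentric disk $3D$ of radius $3s$. If I can obtain only $\Omega(\epsilon^2 s^2)$, I would refine the argument as described below. The argument is this. Since $\rout\ge \size(K)/(2\sqrt2)\ge s/3$ and $\rin\ge\epsilon\,\rout$, the object $K$ contains a disk $B$ of radius $\rho=\rin$. Pick a point $p\in K\cap D$ and take the convex hull of $p$ and $B$, which lies in $K$. If $B$ is far from $D$, shrink this cone toward $p$ by homothety until its extent is about $s$. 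A homothet of $\mathrm{conv}(p,B)$ with ratio $t$ contains a disk of radius $t\rho$ at distance $t\cdot d(p,B)$ from $p$. Choosing $t$ so that the whole homothet has diameter $\Theta(s)$ gives an inscribed disk of radius $\Theta(s\rho/\rout)=\Theta(\epsilon s)$, hence area $\Omega(\epsilon^2 s^2)$ inside $3D$. Summing over the disjoint objects in $3D$, whose area is $9\pi s^2$, gives $O(1/\epsilon^2)$ objects per disk.

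The main obstacle is obtaining $1/\epsilon$ instead of $1/\epsilon^2$. For that I would not use the inscribed disk, but the full cone $\mathrm{conv}(p,B)$ truncated at distance $s$ from $p$. This is a sector-like region whose area is $\Theta(s\cdot s\epsilon)=\Theta(\epsilon s^2)$, since its width at distance $\Theta(s)$ from $p$ is $\Theta(\epsilon s)$. Summing over the disjoint objects in $3D$ then gives $O(1/\epsilon)$ per disk. Combining this with the $O(r)$ covering disks along $S$, or with the cruder $O(r^2)$ boxes, yields $\lambda=O(r^2/\epsilon)$.
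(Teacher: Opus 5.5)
Your skeleton matches the paper's. The Observation gives $\size(K)>\size(\sigma)/(2r)$ for every counted object, a covering reduces the count to pieces, and each piece is bounded by the $O(1/\eps)$ density of disjoint fat objects. Your remark that covering only $\partial\sigma\cup((\ell_h\cup\ell_v)\cap\sigma)$ by $O(r)$ disks gives the sharper $\local=O(r/\eps)$ is correct; take the covering disks of radius $s/2$ so their size does not exceed the object sizes. This is harmless since $r$ is a constant. However, van der Stappen's density theorem is stated for his own notion of fatness, so citing it for $\rin/\rout\ge\eps$ needs a translation. The paper supplies it by showing such objects are $O(1/\eps)$-thick: a triangle spanned by the inscribed disk and a diameter segment has area $\Omega(\eps\,\rout^2)$. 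It then invokes van der Stappen's thick-implies-fat theorem.

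Your self-contained replacement for that translation has a gap at exactly the step you flagged. The truncated cone $\mathrm{conv}(p,B)$ need not have area $\Omega(\eps s^2)$. The claim that its width at distance $\Theta(s)$ from $p$ is $\Theta(\eps s)$ presupposes that the cone reaches distance $\Theta(s)$ from $p$, which fails when $p$ lies in or near $B$. For instance, $K=\mathrm{conv}(\{(\pm s,0)\}\cup B(0,\eps s))$ is $\eps$-fat with $\rout=s$ and $\size(K)=2s$. For $p\in B(0,\eps s)$ your cone is just $B$, of area $\pi\eps^2 s^2$. As written, your argument therefore proves only density $O(1/\eps^2)$, hence $\local=O(r^2/\eps^2)$.

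The missing ingredient is a long segment in addition to the disk. Pick $q\in K$ with $|pq|\ge\diam(K)/2\ge\rout/2$, and $x\in B$ at distance at least $\rin\ge\eps\,\rout$ from the line through $p$ and $q$. The triangle $pqx\subseteq K$ has area at least $\eps\,\rout^2/4$ and diameter at most $2\rout$. Its homothet about $p$ with ratio $t=\min\{1,s/(2\rout)\}$ lies in $K\cap B(p,s)$. Since $\rout\ge\size(K)/2\ge s/2$, this homothet has area at least $\eps s^2/16$. This is the paper's triangle argument localized at $p$. With it, your disjointness count gives $O(1/\eps)$ objects per covering piece, and the lemma follows.
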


To bound the local packing number, we use the concept of density, as introduced
by Van~der~Stappen~\cite{stappen-thesis}
and defined next. The \emph{density} of a set $\FF$ of objects
in the plane is the smallest number~$\Delta$ such that the following holds: any disk~$D\subset \mathbb{R}^2$ 
intersects at most $\Delta$ objects~$K\in\FF$ such that\footnote{In the original definition, 
the size of an object was defined to be the radius of its minimum enclosing ball, 
but it is easy to see that this is equivalent, up to a constant factor, to the edge length
of a smallest enclosing square.} $\size(K) \geq \size(D)$. 
It is known that any set of pairwise disjoint fat objects has low density.
We did not find an explicit proof for our definition of fatness, however, so we
include it for completeness.
\begin{lemma}
\label{lem:density}
Any family of $\eps$-fat objects in the plane has density~$O(1/\eps)$.
\end{lemma}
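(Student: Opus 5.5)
The plan is a standard area-packing argument. Fix a disk $D$ with $\size(D)=s$, so $D$ has radius $s/2$, and let $D'$ be the concentric disk of radius $3s/2$. Consider the objects $K$ of the family with $\size(K)\ge s$ that intersect $D$. For each such $K$, I will exhibit a subset $K^\circ\subseteq K$ with $K^\circ\subseteq D'$ and $\area(K^\circ)=\Omega(\eps s^2)$. The objects are pairwise disjoint, so the sets $K^\circ$ are pairwise disjoint as well. Since they all lie in $D'$, their number is at most $\area(D')/\Omega(\eps s^2)=O(1/\eps)$, which is the claimed density.

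To construct $K^\circ$, first relate the parameters of $K$ to $s$. A disk of radius $\rout$ fits in a square of edge $2\rout$, so $s\le\size(K)\le 2\rout$, i.e.\ $\rout\ge s/2$. Pick a point $p\in\overline{K}\cap\overline{D}$ and take the homothetic copy $K^\circ:=p+t(K-p)$ with $t:=s/(2\rout)\le 1$. Convexity gives $K^\circ\subseteq K$. Moreover, $K^\circ$ contains $p$ in its closure and has diameter at most $t\cdot 2\rout=s$. Hence $K^\circ$ lies within distance $s$ of $p$, so $K^\circ\subseteq D'$.

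Next, bound $\area(K^\circ)$ from below. First, $K^\circ$ contains a disk $B$ of radius $t\rin\ge t\eps\rout=\eps s/2$. Second, Jung's theorem gives $\diam(K)\ge\sqrt3\,\rout$, so $\diam(K^\circ)\ge t\sqrt3\,\rout=\tfrac{\sqrt3}{2}s$. Let $q_1q_2$ be a diametral pair of $K^\circ$. One of $q_1,q_2$, say $q_1$, is at distance $L\ge \diam(K^\circ)/2\ge \tfrac{\sqrt3}{4}s$ from the center $c$ of $B$. The convex hull of $B$ and $q_1$ lies in $K^\circ$. It contains the triangle with apex $q_1$ whose base is the chord of $B$ through $c$ perpendicular to $cq_1$; this triangle has area $\rho L$, where $\rho\ge\eps s/2$ is the radius of $B$. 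Therefore $\area(K^\circ)\ge \tfrac{\eps s}{2}\cdot\tfrac{\sqrt3 s}{4}=\Omega(\eps s^2)$, and the packing count finishes the proof with an explicit constant of roughly $\pi(3/2)^2\cdot 8/\sqrt3$ times $1/\eps$.

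The main obstacle is getting the linear dependence on $1/\eps$ rather than $1/\eps^2$. Using only the inscribed disk of $K^\circ$ would give area $\Omega(\eps^2 s^2)$ and hence density $O(1/\eps^2)$. What rescues the bound is that the shrunken copy keeps a long diameter of order $s$. The inscribed disk combined with this diameter (through the triangle) yields area proportional to $\eps s\cdot s$. A minor technicality is that the objects are open, so $p$ lies only in the closures. This does not affect the area argument: $K^\circ$ is still open and contained in $K$, and disjointness is inherited from the $K$'s.
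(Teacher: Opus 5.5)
Your proof is correct, but it takes a different route from the paper's. The paper uses essentially your triangle estimate (inscribed disk plus a long segment) only to show $\area(K)=\Omega(\eps\rout^2)$, i.e., that $K$ is $O(1/\eps)$-thick. It then cites two results from Van~der~Stappen's thesis: thick objects are fat in his sense, and disjoint fat objects have low density. It also needs the remark that his notion of size agrees with the one used here up to a constant factor. You instead prove the density bound directly. Shrinking each large $K$ homothetically by the factor $s/(2\rout)$ about a point of $\overline{K}\cap\overline{D}$ does three things at once. It places the copy inside the disk of radius $3s/2$ concentric with $D$. It keeps the copy inside $K$, so disjointness is inherited. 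And it preserves both an inscribed disk of radius $\eps s/2$ and a diameter of order $s$, so a single area count finishes the argument. This is in effect the standard argument behind the cited density theorem, inlined for this setting. Your route gives a self-contained proof with an explicit constant of about $18\pi/(\sqrt3\,\eps)$, and it avoids translating between different definitions of fatness and size; the paper's route is shorter on the page. Jung's theorem only improves your constant; the trivial bound $\diam(K)\ge\rout$ used in the paper would suffice. Your argument also carries over verbatim to $\mathbb{R}^d$: an inscribed ball of radius $\eps s/2$ plus a cone of height $\Omega(s)$ gives volume $\Omega(\eps^{d-1}s^d)$. So it would equally cover the higher-dimensional density lemma.
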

\begin{proof}
A convex object~$K$ is \emph{$k$-thick} if $k\cdot\area(K) \geq \area(B(K))$,
where $B(K)$ is the minimum enclosing ball of~$K$. Note that an $\eps$-fat object $K$ contains a disk of radius $\rin\ge \eps\rout$ 
and a segment of length $\diam(K)\ge \rout$ in its interior. Therefore, $K$ contains a triangle whose base has length at least $\eps\rout$ and height at least $\rout/2$, and hence $\area(K)=\Omega(\eps\rout^2)$. Moreover, $\area(B(K))=O(\rout^2)$. Therefore,  the object is $k$-thick for $k=O(1/\eps)$.
Van~der~Stappen~\cite[Theorem~2.6]{stappen-thesis} proved that any $k$-thick object
is $\Theta(k)$-fat for their definition of fatness~\cite[Definition~2.2]{stappen-thesis}, and they
proved that any set of disjoint convex objects that are $t$-fat (under their definition) has density~$O(t)$~\cite[Theorem~2.9]{stappen-thesis}.
The claim follows.
\end{proof}

We now prove \Cref{lem:local-constant-convex}.

\begin{proof}[Proof of \Cref{lem:local-constant-convex}]
Consider a cell $\sigma$ in our hierarchical grid~$\Gamma$ and a subset $\KK''\subseteq \KK'$ 
such that every object in~$\KK''$ intersects~$\sigma$. Consider an object $K\in \KK''$
that does not lie completely inside a child of~$\sigma$. Thus $K$ intersects the boundary
of one of the children of~$\sigma$. \Cref{prop:level-property} states that $\lev(K) \leq i$,
where $i$ is the level of $\sigma$. Since $\size(K) \geq M_{\theta}/2^{\lev(K)+1}$
and the size of a level-$i$ cell is $rM_{\theta}/2^{i}$, we know that
$\size(K) \geq \size(\sigma)/(2r)$.
Now observe that $\sigma$ can be covered by $O(r^2)$ squares with side length $\size(\sigma)/(2r)$. 
Since $\KK'$ has density~$O(1/\eps)$ by \Cref{lem:density}, this implies that the number of objects in $\KK''$ intersecting the boundary or the horizontal and vertical midlines of $\sigma$ is $O(r^2/\eps)$. Hence the local packing number of $\KK'$ is $O(r^2/\eps)$.
\end{proof}

We are now ready to prove \Cref{thm:fat-2d}.

\begin{proof}[Proof of \Cref{thm:fat-2d}]
    Applying \Cref{lem:surviving-set-size}, we obtain a surviving set $\KK'\subseteq \KK$ with $|\KK'|\ge \Phi(r)|\KK|$. By \Cref{lem:local-constant-convex}, the local packing number of $\KK'$ is $O(r^2/\epsilon)$. \Cref{lem:induction-general} thus implies that $\KK'$ has a separable family of size $\Omega(\epsilon/r^2)|\KK'|$. Overall, we obtain a separable family of $\KK$ of size at least $\Omega(\epsilon/r^2)\cdot \Phi(r)|\KK|$. Setting $r$ to be any constant, say $r=2$, completes the proof.
\end{proof}

\subsection{Extension to higher dimension}
\label{sec:high-dimension}
In this section, we prove \Cref{thm:fat-multd}. Let $\KK$ be a family of convex objects in $\mathbb{R}^d$. 
For any $K\in \KK$, we define
\[
   \rin:=\sup\{\rho\mid B_d(z,\rho)\subseteq K \text{ for some } z\in \mathbb{R}^d\},
\]
and 
\[
   \rout:=\inf\{\rho\mid K\subseteq B_d(z,\rho) \text{ for some } z\in \mathbb{R}^d\}.
\]
We call $K$ to be $\epsilon$-fat if $\rin/\rout\ge \epsilon$.

We now generalize our random hierarchical grid to $d$ dimensions. Let $r>1$ be a scaling factor. Let $\theta\sim \Unif[0,1)$ and let $M_{\theta}=2^{\theta}M$. We independently choose shifts $a_1,a_2,\ldots, a_d \sim \Unif[0,rM_{\theta})$, and for every integer $i\ge 0$, the level-$i$ grid contains the hyperplanes 
\[x_j = a_j + k\cdot \tfrac{rM_{\theta}}{2^i}, \text{ for all } k\in \mathbb{Z}.\]
An object $K$ is said to have level $i$ if $M_{\theta}/2^{i+1}<\size(K)\le M_{\theta}/2^i$, where $\size(K)$ denotes the side length of the smallest enclosing axis-aligned hypercube of $K$. The notion of \emph{original cells} is the same as before -- if a level-$i$ object $K$ does not intersect any level-$i$ grid line, then the original cell of $K$ is the level-$i$ cell that contains $K$. As before, the set of objects for which the original cell is defined will be called the surviving set. Next we prove the analog of \Cref{lem:surviving-set-size} in $d$ dimensions.

\begin{lemma}
\label{lem:surviving-set-size-highd}
    There exists a choice of $\theta$ and the shifts $a_1,a_2,\ldots,a_d$ such that the surviving set is of size at least $\Phi_d(r)|\KK|$, where
    \[ \Phi_d(r):= \int_0^1 \left(1-\tfrac{2^{-u}}{r}\right)^ddu.\]
\end{lemma}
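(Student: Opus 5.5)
The plan is to mirror the proof of \Cref{lem:surviving-set-size}, replacing the two independent axis directions by $d$ of them. Fix an object $K\in\KK$ and write $s:=\size(K)$ and $i:=\lev(K)$. As in the planar case, set $A:=\log_2(M_\theta/s)=\theta+\log_2(M/s)$. Then $i=\lfloor A\rfloor$, and $u:=A-i$ is the fractional part of $\theta$ plus a constant depending only on $K$. Since $\theta\sim\Unif[0,1)$, it follows that $u\sim\Unif[0,1)$. Moreover, $2^{i}=2^{-u}M_\theta/s$.

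Next, condition on $u$ (equivalently, on $\theta$) and consider each coordinate direction $j\in\{1,\dots,d\}$ separately. The level-$i$ hyperplanes orthogonal to the $x_j$-axis have spacing $rM_\theta/2^i$, and their offset $a_j\sim\Unif[0,rM_\theta)$ is uniform over a multiple of this spacing. Hence the offset modulo the spacing is uniform. The extent of $K$ in direction $x_j$ is at most $s$, so the probability that some such hyperplane meets the interior of $K$ is at most $s/(rM_\theta/2^i)=2^{-u}/r<1$. Because $a_1,\dots,a_d$ are independent and independent of $\theta$, these $d$ events are conditionally independent given $u$. Therefore
\[
\Pr[K\text{ survives}\mid u]\ \ge\ \left(1-\tfrac{2^{-u}}{r}\right)^d .
\]
Integrating over $u\sim\Unif[0,1)$ gives $\Pr[K\text{ survives}]\ge\Phi_d(r)$.

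To finish, apply linearity of expectation: the expected size of the surviving set is at least $\Phi_d(r)|\KK|$. Some realization of $(\theta,a_1,\dots,a_d)$ must then achieve at least the expectation, which proves the lemma.

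The only point needing care is that the per-direction bound $2^{-u}/r$ must hold uniformly in $\theta$, so that conditioning on $u$ is legitimate. The shift range $rM_\theta$ depends on $\theta$. However, it is always an integer multiple ($2^i$ times) of the level-$i$ spacing, so conditional on $\theta$ the position of the grid modulo the spacing is uniform. With that checked, the rest is routine.
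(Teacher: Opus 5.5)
Your proof is correct and follows essentially the same route as the paper's: fix $K$, show $u$ is uniform, bound each of the $d$ per-direction hitting probabilities by $2^{-u}/r$ using the independent shifts, multiply, integrate over $u$, and finish with linearity of expectation. Your remark that the shift range $rM_\theta$ is $2^i$ times the level-$i$ spacing, so the offset modulo the spacing is uniform, makes explicit a point the paper leaves implicit.
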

\begin{proof}
    Consider a $K\in \KK$ and let $s=\size(K)$. Let $A=\log_2(M_{\theta}/s)$, so that the level of $K$ is $i=\lfloor A\rfloor$. Let $u=A-i$, and since $\theta\sim \Unif[0,1)$, it holds that $u\sim \Unif[0,1)$. As in the proof of \Cref{lem:surviving-set-size}, conditioned on $u$, the probability that a level-$i$ grid line intersects $K$ is at most $\tfrac{s}{rM_{\theta}/2^i}=\tfrac{2^{-u}}{r}$.
    Since the $d$ shifts $a_1,a_2,\ldots,a_d$ are independent, we have that 
    \[
   \Pr[K\text{ survives}\mid u]
   \ge \left(1-\tfrac{2^{-u}}r\right)^d.
\]
Finally, since $u\sim \Unif[0,1)$, it holds that 
\[ \Pr[K \text{ survives}] \ge \int_0^1 \left(1-\tfrac{2^{-u}}{r}\right)^ddu = \Phi_d(r),\]
and therefore using linearity of expectation, the expected size of the surviving set is at least $\Phi_d(r)|\KK|$. Hence there exist choices of $\theta$ and the shifts $a_1,a_2,\ldots,a_d$ for which the surviving set has size at least $\Phi_d(r)|\KK|$.
\end{proof}

The definition of \emph{local packing number} of a surviving set $\KK'$ remains the same (\Cref{def:AB}) -- it denotes the maximum number of objects that can intersect the $(d-1)$-dimensional boundary of a grid cell and the portions of the $d$ hyperplanes lying inside that cell that partition it into its children. Then it is easy to verify that \Cref{lem:induction-general} continues to hold, i.e., letting $T(m):= \min\{\GG(\KK'') \mid \KK''\subseteq \KK' \text{ with } |\KK''|=m\}$, we have $T(m)\ge \tfrac{m+\lambda}{\lambda+1}$. We briefly describe the two cases in the proof of \Cref{lem:induction-general}. Let $\KK''\subseteq \KK'$ be a collection of $m$ objects, and let $\sigma^*$ be the smallest grid cell such that all objects in $\KK''$ intersect $\sigma^*$. If $\sigma^*$ has two children each containing an object of $\KK''$ completely, then we perform a guillotine cut along a hyperplane that separates the two children -- this cuts at most $\lambda$ objects, where $\lambda$ is the local packing number of $\KK'$. Otherwise, if all objects not intersecting $\partial\sigma^*$ and the $d$ hyperplanes partitioning $\sigma^*$ into its children, lie completely inside one of the children $\sigma'\subset \sigma^*$, then we perform a guillotine cut along a  hyperplane that separates $\sigma'$ from an object $K$ lying completely outside $\sigma'$. In the process, we separate $K$ while cutting at most $\lambda-1$ other objects.

Hence in order to prove \Cref{thm:fat-multd}, it remains to bound the local packing number of the surviving set $\KK'$.

\begin{lemma}
\label{lem:local-const-highd}
    The local packing number of $\KK'$ is $O(r^d/\eps^{d-1})$.
\end{lemma}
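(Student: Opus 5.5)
We bound the local packing number with a volume argument in $\mathbb{R}^d$, in the spirit of the proof of \Cref{lem:local-constant-convex}, but done directly rather than through the density result of Van~der~Stappen, since we now need the explicit dependence $\eps^{-(d-1)}$.

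Fix a level-$i$ cell $\sigma$, of size $L:=rM_\theta/2^i$, and a subfamily $\KK''\subseteq\KK'$ whose objects all meet $\sigma$. Consider an object $K\in\KK''$ that is not contained in a child of $\sigma$. It meets the boundary of some child, which is a level-$(i+1)$ cell, so by the $d$-dimensional analogue of \Cref{prop:level-property} we get $\lev(K)\le i$. Hence
\[
\size(K)>M_\theta/2^{i+1}=L/(2r).
\]
For the upper bound, $K$ meets $\sigma$, and since $K$ survives at its own level it does not cross the boundary of its original cell. If $\lev(K)<i$, then $K$ would cross the boundary of $\sigma$ and hence the boundary of a coarser cell at a level not exceeding its own; I would check whether this is actually excluded. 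Rather than rely on that, I bound the contributions of all sizes at once.

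For each such $K$, pick a point $p_K$ in $K$ on the boundary of a child of $\sigma$. This set is the union of $O(d)$ pieces of hyperplanes, each of $(d-1)$-volume $O(L^{d-1})$. Around $p_K$, set $\rho:=\min(\rout,\,L/(4r))$. I claim that $K\cap B(p_K,\rho)$ has volume $\Omega(\eps^{d-1}\rho^d)$. To see this, note that $K$ contains a ball $B$ of radius $\eps\rout$, and that the convex hull of $p_K$ and $B$ lies in $K$. Scaling this hull toward $p_K$ by the factor $\rho/(2\rout)\le 1$ gives a cone-like body inside $B(p_K,\rho)$. This body has length $\Theta(\rho)$ and cross-section radius $\Theta(\eps\rho)$, so its volume is $\Omega(\eps^{d-1}\rho^d)$.

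Since $\size(K)>L/(2r)$ and $\rout\ge\size(K)/2$, we get $\rout\ge L/(4r)$, so $\rho=L/(4r)$ for every such $K$. The objects are pairwise disjoint, so these volume pieces are disjoint.

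Now cover the boundary of the children of $\sigma$ by $O(d\, r^{d-1})$ cubes of side $L/r$. Each cube, enlarged by $\rho$, has volume $O((L/r)^d)$. The objects $K$ whose point $p_K$ lies in a given cube each contribute disjoint volume $\Omega(\eps^{d-1}(L/r)^d)$ inside the enlarged cube, so there are $O(\eps^{-(d-1)})$ of them per cube. Multiplying by the number of cubes gives a total of $O(r^{d-1}/\eps^{d-1})$, which is within the claimed $O(r^d/\eps^{d-1})$. Alternatively, covering all of $\sigma$ by $O(r^d)$ cubes gives exactly the stated bound.

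The main obstacle is the cone-volume step: making sure the scaled hull really lies in $B(p_K,\rho)$ and has volume of order $\eps^{d-1}\rho^d$ rather than $\eps^d\rho^d$. Here the point is that the hull is elongated along the segment from $p_K$ to $B$ (length of order $\rout$), while only the $d-1$ transverse directions shrink by the factor $\eps$. I would verify this with an explicit computation, using the truncated cone of apex $p_K$ over the ball $B$ and $|p_K-\text{center}(B)|\le 2\rout$.
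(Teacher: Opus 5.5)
\textbf{Gap: the cone-volume step.} The step you flag as the main obstacle fails as you propose to verify it. The body $\mathrm{conv}(\{p_K\}\cup B)$ has length of order $\rout$ only if $p_K$ is at distance $\Omega(\rout)$ from the center $z$ of the inscribed ball $B$, and nothing forces this. The point $p_K$ is wherever $K$ happens to meet the grid hyperplanes. That may be only at the waist of an elongated $\eps$-fat body (say a thin double cone cut by a mid-hyperplane of $\sigma$ at its widest point), which is exactly where the inscribed ball sits. In that case the hull is essentially $B$ itself. Shrinking it by the factor $\rho/(2\rout)$ then leaves volume only $\Theta(\eps^{d}\rho^{d})$, which costs an extra factor $1/\eps$ in your per-cube count. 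The inequality $|p_K-z|\le 2\rout$ you plan to use bounds the length of the hull from above, not from below, so the proposed computation cannot give $\eps^{d-1}\rho^d$.

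\textbf{Repair and comparison.} The claim $\mathrm{vol}(K\cap B(p_K,\rho))=\Omega(\eps^{d-1}\rho^d)$ is nevertheless true and easy to rescue.
\begin{itemize}
\item Since $K$ lies in a ball of radius $\rout$, it has diameter at most $2\rout$.
\item Hence the homothetic copy $p_K+\frac{\rho}{2\rout}(K-p_K)$ lies in $K\cap \overline{B}(p_K,\rho)$, by convexity and $p_K\in K$. Its volume is $(\rho/(2\rout))^d\,\mathrm{vol}(K)$.
\item $\mathrm{vol}(K)=\Omega(\eps^{d-1}\rout^d)$ by the cone argument in the proof of \Cref{lem:density-highd}. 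Take the apex at an endpoint of a diameter of $K$ at distance at least $\rout/2$ from $z$, and the base to be the $(d-1)$-dimensional disk of radius $\eps\rout$ through $z$.
\end{itemize}
In other words, take the hull of $p_K$ with that far cone rather than with $B$ alone. With this repair, the rest of your argument goes through: $\rout>L/(4r)$, disjointness of the pieces, covering the $3d$ hyperplane pieces by $O(r^{d-1})$ cubes, and the per-cube count. It even yields the slightly stronger bound $O(r^{d-1}/\eps^{d-1})$.

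The paper's route is different. It covers all of $\sigma$ by $O(r^d)$ cubes and invokes the density bound of \Cref{lem:density-highd}, which it gets through Van der Stappen's thickness-to-fatness and fatness-to-density theorems. Your route is self-contained and localized to the cut hyperplanes. The price is that you must prove this local volume estimate by hand, and that is exactly where the care is needed.
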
 

    Similar to the proof of \Cref{lem:local-constant-convex}, we use the concept of density introduced by Van der Stappen \cite{stappen-thesis}. The density of a set $\FF$ is the smallest number $\Delta$ such that any ball $D\subset \mathbb{R}^d$ intersects at most $\Delta$ objects $K\in \FF$ with $\size(K)\ge \size(D)$. As mentioned in the proof of \Cref{lem:local-constant-convex}, Van der Stappen defined the size of an object as the radius of the minimum enclosing ball, but it is within a constant factor of the side length of the smallest enclosing hypercube that we use in our definition of size. 
    
    \begin{lemma}
    \label{lem:density-highd}
        Any family of $d$-dimensional $\eps$-fat convex objects has density $O(1/\eps^{d-1})$.
    \end{lemma}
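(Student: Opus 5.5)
We mirror the planar proof of \Cref{lem:density}: first show that every $\eps$-fat convex object in $\mathbb{R}^d$ is $k$-thick with $k=O(1/\eps^{d-1})$, then invoke Van~der~Stappen's results. These results state that $k$-thick objects are $\Theta(k)$-fat in his sense~\cite[Theorem~2.6]{stappen-thesis}, and that disjoint convex $t$-fat objects have density $O(t)$~\cite[Theorem~2.9]{stappen-thesis}. Both hold in any fixed dimension, with constants depending on $d$. Here $K$ is $k$-thick if $k\cdot\mathrm{vol}(K)\ge \mathrm{vol}(B(K))$, where $B(K)$ is the minimum enclosing ball. Since $\mathrm{vol}(B(K))=O(\rout^d)$, it suffices to prove
\[
\mathrm{vol}(K)=\Omega\!\left(\eps^{d-1}\rout^d\right).
\]
The switch between our notion of size and his (enclosing-ball radius) only costs a constant factor depending on $d$, which we absorb as in the planar case.

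For the volume bound we use a cone construction. By assumption, $K$ contains a ball $B(z,\rho)$ with $\rho\ge \eps\rout$, where $\rout=\rho_{\mathrm{out}}(K)$. We also need a point $p\in \overline{K}$ far from $z$. Since $K\not\subseteq B(z,\rout/2)$ up to closure issues, we can pick $p$ with $|p-z|\ge \rout/2$; if instead every point of $K$ were within distance less than $\rout/2$ of $z$, then $K$ would lie in a ball of radius smaller than $\rout$, contradicting the definition of $\rout$. (One could equally use $\diam(K)\ge \rout$.)

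By convexity, $K$ contains the convex hull of $p$ and the $(d-1)$-dimensional disk of radius $\rho$ centred at $z$ orthogonal to $pz$. This hull is a cone with base volume $c_{d-1}\rho^{d-1}$ and height at least $\rout/2$, so
\[
\mathrm{vol}(K)\ \ge\ \tfrac{1}{d}\,c_{d-1}\rho^{d-1}\cdot\tfrac{\rout}{2}\ =\ \Omega_d\!\left(\eps^{d-1}\rout^{d}\right),
\]
where $c_{d-1}$ is the volume of the unit $(d-1)$-ball. Combined with $\mathrm{vol}(B(K))=O_d(\rout^d)$, this gives $k=O(1/\eps^{d-1})$, and hence density $O(1/\eps^{d-1})$.

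The only step needing care is checking that Van~der~Stappen's two theorems are stated, or hold verbatim, in $\mathbb{R}^d$ with linear dependence on the thickness parameter. If they were only available in the plane, the fallback is a direct packing argument. Take a ball $D$ of radius $R$ and objects $K$ of size at least $\size(D)$ meeting $D$. Each such $K$, being convex and fat, contains a fat piece of volume $\Omega(\eps^{d-1}R^d)$ inside the enlarged ball $3D$: apply the same cone construction to $K\cap 3D$, using a point of $K$ in $D$, the inscribed ball (shrunk toward that point), and convexity. These pieces are pairwise disjoint inside $3D$, so a volume comparison yields at most $O(1/\eps^{d-1})$ such objects.
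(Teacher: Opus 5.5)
Your proposal is correct and follows the paper's proof essentially step for step. A cone with base a $(d-1)$-ball of radius $\rho\ge\eps\rout$ and height at least $\rout/2$ gives $\mathrm{vol}(K)=\Omega(\eps^{d-1}\rout^d)$, hence $O(1/\eps^{d-1})$-thickness; you get the apex from the definition of $\rout$, whereas the paper gets it from a segment of length $\diam(K)\ge\rout$. Van~der~Stappen's thick-implies-fat and fat-implies-low-density theorems then finish the argument in both versions.

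Your fallback packing argument inside $3D$ is not in the paper and would make the lemma self-contained. If you keep it, take the cone apex to be a point of $K\cap 3D$ at distance $\Omega(R)$ from the center of the shrunk ball. Such a point exists because $\size(K)\ge\size(D)$. The point of $K$ in $D$ itself may lie too close to that center to give height $\Omega(R)$.
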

    \begin{proof}
        Van der Stappen defined an object $K$ to be \emph{$k$-thick} if $k\cdot \mathrm{vol}(K)\ge \mathrm{vol}(B(K))$, where $B(K)$ is the minimum enclosing ball of $K$. Note that an $\eps$-fat convex object $K$ contains a ball of radius $\rin\ge \eps\rout$ together with a segment of length $\diam(K)\ge \rout$ in its interior. Therefore it contains a cone of height at least $\rout/2$ and having a base which is a $(d-1)$-dimensional ball of radius at least $\eps\rout$. Since the volume of a cone is given by $Ah/d$, where $A$ denotes the $(d-1)$-dimensional volume of the base and $h$ denotes the height of the cone, respectively, it follows that $\mathrm{vol}(K)\ge \Omega(\eps^{d-1}\rout^d)$. Since $\mathrm{vol}(B(K))=O(\rout^d)$, we obtain that $K$ is $O(1/\eps^{d-1})$-thick. It was shown in \cite{stappen-thesis} that any $k$-thick object is $\Theta(k)$-fat for their definition of fatness, and they proved that any set of disjoint convex objects that are $t$-fat (under their definition) has density $O(t)$. Thus the claim follows.
    \end{proof}

\begin{proof}[Proof of \Cref{lem:local-const-highd}]
    Let $\sigma$ be any grid cell and $\KK''\subseteq \KK'$ be such that every object in $\KK''$ intersects $\sigma$. Consider an object $K\in \KK''$ that does not lie completely inside a child of $\sigma$, and so intersects the boundary of some child of $\sigma$. Let $i$ be the level of $\sigma$, so that by \Cref{prop:level-property}, we have $\mathrm{lev}(K)\le i$. Thus $\size(K)\ge M_{\theta}/2^{i+1}$, and since the side length of a level-$i$ cell is $rM_{\theta}/2^i$, we have that $\size(K)\ge \size(\sigma)/(2r)$. Clearly $\sigma$ can be covered by $O(r^d)$ cubes of side length $\size(\sigma)/(2r)$. Since $\KK'$ has density $O(1/\eps^{d-1})$ by \Cref{lem:density-highd}, it follows that the number of objects in $\KK''$ intersecting the boundary of $\sigma$ or the hyperplanes partitioning $\sigma$ into its children is at most  $O(r^d/\eps^{d-1})$. Hence the local packing number of $\KK'$ is $O(r^d/\eps^{d-1})$.
\end{proof}

Combining Lemmas \ref{lem:surviving-set-size-highd} and \ref{lem:local-const-highd}, we are now ready to prove \Cref{thm:fat-multd}.

\begin{proof}[Proof of \Cref{thm:fat-multd}]
    We set the scaling parameter $r$ to any constant, say $r=2$. Applying \Cref{lem:surviving-set-size-highd}, we obtain a surviving set $\KK'\subseteq \KK$ of size at least $\Phi_d(2)|\KK|=\Omega_d(|\KK|)$. By \Cref{lem:local-const-highd}, the local packing number of $\KK'$ is at most $O(1/\epsilon^{d-1})$. Therefore from \Cref{lem:induction-general}, we obtain a guillotine separable family of size $\Omega(\epsilon^{d-1}|\KK|)$.
\end{proof}
\subsection{Disks}
\label{sec:disks}

Let $\DD$ be a family of disks in the plane, and let $\DD'\subseteq \DD$ be any surviving subfamily of $\DD$. 
We now analyze the local packing number of $\DD'$. Our analysis heavily relies on the following inequality due to Oler.
\begin{theorem}[Oler's Inequality~\cite{Oler1961}]\label{thm:oler}
Let $P$ be a finite point set contained in a compact convex body $K\subset\mathbb R^2$.  
If every two distinct points of $P$ are at distance at least $2$, then
\[
   |P|\le \tfrac{\area(K)}{2\sqrt3}+\tfrac{\per(K)}4+1,
\]
where $\per(K)$ denotes the perimeter of~$K$.
\end{theorem}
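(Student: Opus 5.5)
Since the paper only invokes Oler's inequality as a classical result, here is how I would reprove it. The plan has three stages: reduce to the convex hull, rewrite the inequality so that it can be charged to the triangles and boundary edges of a triangulation, and then handle the thin triangles by redistributing area.

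\emph{Reduction to the convex hull.} Let $Q:=\operatorname{conv}(P)\subseteq K$. For convex sets, area and perimeter are monotone under inclusion, so $\area(Q)\le\area(K)$ and $\per(Q)\le\per(K)$. It therefore suffices to prove the inequality with $K$ replaced by~$Q$. Multiplying by $2\sqrt3$, the target becomes
\[
\area(Q)+\tfrac{\sqrt3}{2}\per(Q)\ \ge\ 2\sqrt3\,(|P|-1).
\]
The degenerate case, where $P$ is collinear and $Q$ is a segment, is immediate. Consecutive points are at distance at least~$2$, so the length of $Q$ is at least $2(|P|-1)$. Hence $\per(Q)\ge 4(|P|-1)$, which gives $\per(Q)/4\ge |P|-1$.

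\emph{Combinatorial bookkeeping.} Now assume $Q$ is two-dimensional. Take the Delaunay triangulation of $P$, using as vertices all points of $P$, including points lying on $\partial Q$. Let $b$ be the number of points of $P$ on $\partial Q$ and let $t$ be the number of triangles. Euler's formula gives $t=2|P|-b-2$, so the right-hand side equals $\sqrt3\,t+\sqrt3\,b$. The boundary of $Q$ consists of $b$ triangulation edges, each of length at least~$2$. Hence $\per(Q)\ge 2b$, and the perimeter term already pays $\sqrt3\,b$. What remains is to show $\area(Q)\ge\sqrt3\,t$, that is, that the triangles have average area at least $\sqrt3$. This is exactly the area of an equilateral triangle of side~$2$, which is the extremal configuration.

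\emph{Main obstacle: thin triangles.} A single triangle with all sides at least $2$ can have arbitrarily small area if it is obtuse and flat, so a per-triangle bound is false. I would use the empty-circle property of Delaunay triangles to set up a charging scheme in the spirit of Oler and of Folkman--Graham.
\begin{itemize}
\item Every acute or right Delaunay triangle with sides at least~$2$ has area at least $\sqrt3$: it contains its circumcenter, and its circumradius is at least $2/\sqrt3$.
\item An obtuse triangle has its circumcenter on the far side of its longest edge. I would cut the triangulation along the perpendicular bisectors of the edges, i.e.\ pass to the Voronoi-cell pieces. The area of the triangle across the long edge is then reassigned to the obtuse triangle, or, if that long edge lies on $\partial Q$, paid for by the slack in $\per(Q)\ge 2b$ coming from that long boundary edge.
\end{itemize}
Concretely, I would decompose $Q$ into the quadrilaterals formed by a vertex, its two edge midpoints, and the circumcenter, and bound the total area vertex by vertex (angle sums around each point). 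The delicate part is handling circumcenters that lie outside $Q$ using the boundary-edge slack. I expect this step to require the most care, and I would first try to reproduce the Folkman--Graham argument in exactly this form.
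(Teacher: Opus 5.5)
The paper does not prove this statement. It quotes Oler (1961) and uses it as a black box, so your outline has to stand on its own, and it does not yet.

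\textbf{The reduced target is false.} The reduction to $Q=\operatorname{conv}(P)$, the collinear case and the Euler count $t=2|P|-b-2$ are fine. But the sentence ``what remains is to show $\area(Q)\ge\sqrt3\,t$'' is false. Take $P=\{(0,0),(4,0),(2,\delta)\}$: then $t=1$ and $\area(Q)=2\delta\to0$. The theorem survives there only because $\per(Q)\to 8>2b=6$. This configuration is asymptotically tight for Oler's bound, so none of the boundary slack can be discarded. The inequality you actually need is
\[
\area(Q)+\tfrac{\sqrt3}{2}\sum_{e\subseteq\partial Q}\bigl(|e|-2\bigr)\ \ge\ \sqrt3\,t ,
\]
and all of the content of Oler's theorem lies in proving it.

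\textbf{The charging step is only sketched, and both mechanisms you name fail as stated.}

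\emph{Reassigning area across the long edge} is never quantified. In a Delaunay triangulation, the neighbour across the long edge of an obtuse triangle can itself be obtuse, with a strictly longer longest edge. So deficits propagate along chains. A single triangle can also receive charges across all three of its edges; for example, an equilateral triangle of side $3$ with three obtuse neighbours satisfies the Delaunay condition. You would have to show that the terminal triangle's surplus, or a boundary edge's slack, covers everything routed to it.

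\emph{The vertex-by-vertex bound on the signed kites} fails at boundary vertices. In the example above the circumcenter is $\bigl(2,\tfrac{\delta^2-4}{2\delta}\bigr)$. The kite at $(0,0)$ has signed area about $-1/\delta$, while the boundary edges at that vertex have total length about $6$. So no inequality of the form ``kite sum plus a multiple of incident boundary length $\ge$ const'' can hold there. The compensation comes non-locally, from the kite at $(2,\delta)$, whose area is about $+2/\delta$.

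Even at interior vertices, where the kite sum equals the Voronoi-cell area, the bound $2\sqrt3$ does not follow from angle sums. It needs Fejes T\'oth's hexagon argument: the cell contains $B(p,1)$ and all its vertices lie outside $B(p,2/\sqrt3)$.

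To close the proof you need an explicit redistribution rule, with a verified inequality covering both interior chains and triangles whose circumcenters lie outside $Q$. Otherwise, simply cite Oler or Folkman--Graham, as the paper does.

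\textbf{Minor point.} Your justification for acute triangles is a non sequitur. The correct reason is this: if no angle exceeds $2\pi/3$, the largest angle $A$ lies in $[\pi/3,2\pi/3]$, so the area is $\tfrac12 bc\sin A\ge\sqrt3$. Hence only triangles with an angle above $2\pi/3$ can be deficient, not all obtuse ones.
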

 
We define $\gamma$ to be the unique positive solution of
\begin{equation}\label{eq:rho}
   \tfrac{16\gamma^2+16\gamma+\pi}{2\sqrt3}+4\gamma+\tfrac\pi2+1=32.
\end{equation}
Thus, $\gamma\approx 1.7214$. Recall that $r$ denotes the scaling factor of our
doubly-random hierarchical grid. We choose $r$ to be a value slightly less than $\gamma$, say $r=1.7213$.

\begin{lemma}
\label{lem:local-constant-disk}
    The local packing number of $\DD'$ is at most $31$.
\end{lemma}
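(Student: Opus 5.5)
The plan is to bound, for a fixed cell $\sigma$ of level $i$, the number of surviving disks that intersect $\sigma$ and are not contained in a child of $\sigma$. We replace each such disk by a small sub-disk of a common radius and apply Oler's Inequality (\Cref{thm:oler}) to the centers of these sub-disks. The value of $r$ was chosen just below $\gamma$ exactly so that the bound comes out below $32$.

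\textbf{Step 1 (size lower bound).} Let $D\in\DD'$ intersect $\sigma$ without lying inside a child of $\sigma$. Then $D$ intersects the boundary of some child, which is a level-$(i+1)$ cell. By \Cref{prop:level-property}, $\lev(D)\le i$, so $\size(D) > M_\theta/2^{\lev(D)+1}\ge M_\theta/2^{i+1}$. Hence the radius of $D$ exceeds $\rho := M_\theta/2^{i+2}$. The side length of $\sigma$ is $rM_\theta/2^i = 4r\rho$.

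\textbf{Step 2 (sub-disks with well-separated centers).} For each such $D$, pick a point $p\in D\cap\sigma$. Inside $D$ choose a disk $D'$ of radius $\rho$ that contains $p$ in its closure. Such a $D'$ exists because the radius of $D$ is at least $\rho$: slide a radius-$\rho$ disk from $p$ toward the center of $D$. The center of $D'$ lies within distance $\rho$ of $\sigma$, so it lies in $K:=\sigma\oplus B(0,\rho)$, a compact convex body. The disks of $\DD'$ are pairwise disjoint, so the sub-disks $D'$ are pairwise disjoint (as open disks). Therefore their centers are pairwise at distance at least $2\rho$.

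\textbf{Step 3 (apply Oler).} Scale by $1/\rho$. Now the centers form a point set with pairwise distances at least $2$, contained in the rounded square $K$: a square of side $4r$ Minkowski-summed with a unit disk. We have $\area(K)=16r^2+16r+\pi$ and $\per(K)=16r+2\pi$. \Cref{thm:oler} then bounds the number of such disks by
\[
\tfrac{16r^2+16r+\pi}{2\sqrt3}+4r+\tfrac{\pi}{2}+1 .
\]
The left-hand side of \eqref{eq:rho} is strictly increasing in $\gamma>0$, and $r=1.7213<\gamma$. So this quantity is strictly less than $32$. The count is an integer, so it is at most $31$. Since $\sigma$ and the subfamily $\KK''$ were arbitrary, the local packing number of $\DD'$ is at most $31$.

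\textbf{Main obstacle.} The delicate point is Step 2. We must make sure that every large disk, even one much bigger than $\sigma$, contributes a point inside the fixed region $K$, and that these points stay $2\rho$-separated. Shrinking concentrically would fail here, because the center of a huge disk may be far from $\sigma$. The fix is to shrink toward a point of $D\cap\sigma$. The remaining work is the routine check of the area and perimeter of the rounded square, and verifying the inequality numerically at $r=1.7213$.
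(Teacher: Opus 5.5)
Your proof is correct and follows essentially the same route as the paper: the size lower bound via \Cref{prop:level-property}, a radius-$L/(4r)$ sub-disk inside each disk whose center lies in $\sigma\oplus B(0,L/(4r))$, rescaling, and Oler's Inequality with the same area and perimeter. The only difference is that you anchor the sub-disk at an arbitrary point of $D\cap\sigma$, whereas the paper anchors it at a point on $\partial\sigma$, $\ell_h$ or $\ell_v$; this is immaterial, since both choices yield the same enclosing convex body.
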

\begin{proof}
    Let $\sigma$ be a level-$i$ grid cell 
    and let $\DD'' \subseteq \DD'$ be such that all disks in $\DD''$ intersect $\sigma$. Let $\ell_h$ and $\ell_v$ be the level-$(i+1)$ grid lines that partition $\sigma$ into four level-$(i+1)$ cells. Let $\DD''_{\mathrm{int}}\subseteq \DD''$ be the set of disks that intersect either the boundary of $\sigma$ or the grid lines $\ell_h$ or $\ell_v$. We shall show that $|\DD''_{\mathrm{int}}|\le 31$. Let $L:=rM_{\theta}/2^i$ be the side length of $\sigma$. Consider any $D\in \DD''_{\mathrm{int}}$. \Cref{prop:level-property} implies that $D$ must have level at most~$i$.
    Hence, $\diam(D)\ge M_{\theta}/2^{i+1}\ge L/(2r)$.

    Consider a point $p_D\in D$ that lies either on the boundary of $\sigma$ or on one of the grid lines $\ell_h$ or $\ell_v$, and let $B(q_D,L/(4r))\subseteq D$ be a disk of radius $L/(4r)$ centered at a point $q_D\in D$, that contains the point $p_D$; see \Cref{fig:oler-proof}.
    \begin{figure}
    \centering
    \includegraphics{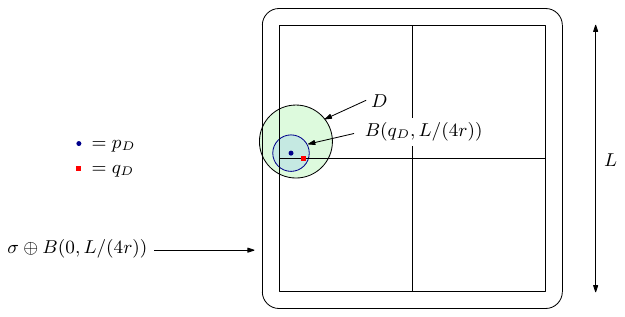}
    \caption{Illustration for the proof of \Cref{lem:local-constant-disk}.}
    \label{fig:oler-proof}
    \end{figure}
    Thus, $|p_D-q_D|\le L/(4r)$. Since the disks $B(q_D,L/(4r))$ are disjoint for all $D\in \DD''_{\mathrm{int}}$, the points $q_D$ lie at pairwise distances of at least $L/(2r)$.
    Observe that the points $q_D$ for $D\in \DD''_{\mathrm{int}}$, all lie inside the convex body $\sigma \oplus B(0,L/(4r))$, 
    which has an area of $L^2+4L\cdot \tfrac{L}{4r}+\pi(\tfrac{L}{4r})^2$ and perimeter $4L+2\pi\cdot \tfrac{L}{4r}$. 
    We scale everything by a factor of $4r/L$, so that now the points $q_D$ lie at pairwise distance at least $2$, and are all contained in a convex body $K$ with $\area(K)=16r^2+16r+\pi$ and $\per(K)=16r+2\pi$. Applying Oler's Inequality  we get 
    \[ |\DD''_{\mathrm{int}}|\le \tfrac{16r^2+16r+\pi}{2\sqrt3}+\tfrac{16r+2\pi}{4}+1<32,\]
    by our choice of $r$. Thus the local packing number of $\DD'$ is at most $31$.
\end{proof}

We are now ready to prove \Cref{thm:disk}.

\begin{proof}[Proof of \Cref{thm:disk}]
    Applying \Cref{lem:surviving-set-size} with $r=1.7213$ (which is slightly less than $\gamma$) yields a surviving set $\DD'\subseteq \DD$ of size $|\DD'|$ at least $\Phi(r)n$, which is at least $0.3444n$, by our choice of $r$. Since $\DD'$ has local packing number at most $31$ by \Cref{lem:local-constant-disk}, it follows using \Cref{lem:induction-general} that there exists a separable family of $\DD'$ of size at least $|\DD'|/32$. Overall, we obtain a separable family of $\DD$ of size at least $0.3444n/32>n/93$.
\end{proof}

\subsection{Axis-aligned squares}
\label{sec:squares}

In this section, we prove \Cref{thm:sq-unw}. Let $\SM$ be a family of axis-aligned squares in the plane.
We construct the hierarchical grid described in \Cref{sec:random-grid} with the scaling parameter $r=4$. Let $\SM'\subseteq \SM$ be a surviving set of $\SM$. We begin by showing the following lemma.

\begin{lemma}
\label{lem:intersect-atmost-9}
    Consider any level-$i$ cell $\sigma$, and let $\ell_h$ and $\ell_v$ be the horizontal and vertical level-$(i+1)$ grid lines that partition $\sigma$ into four level-$(i+1)$ cells. Then $\ell_h$ and $\ell_v$ are each intersected by at most nine squares in $\SM'$ that intersect $\sigma$. 
\end{lemma}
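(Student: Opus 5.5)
By symmetry it suffices to handle the vertical line $\ell_v$. The plan is to combine \Cref{prop:level-property}, which gives every relevant square a size lower bound, with disjointness and a one-dimensional packing argument along $\ell_v$.

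\textbf{Lower bound on sizes.} Let $L := rM_\theta/2^i = 4M_\theta/2^i$ be the side length of $\sigma$. Take a square $S\in\SM'$ that intersects both $\sigma$ and $\ell_v$. Then $S$ crosses the boundary of a child of $\sigma$, which is a level-$(i+1)$ cell. \Cref{prop:level-property} therefore gives $\lev(S)\le i$, and hence $\size(S) > M_\theta/2^{i+1} = L/8$.

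\textbf{Reduction to intervals.} Each such square is open and axis-aligned, and it crosses $\ell_v$. Its intersection with $\ell_v$ is therefore an open vertical interval, whose length equals the side length of the square, so it exceeds $L/8$. Because the squares are pairwise disjoint, these intervals are pairwise disjoint. Moreover, each square intersects $\sigma$, so each interval meets the segment $\ell_v\cap\sigma$, which has length $L$.

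\textbf{Counting.} Sort the intervals by their $y$-coordinate. The topmost and bottommost intervals may extend beyond $\sigma$, but every other interval lies entirely inside the open $y$-range $(y_0, y_0+L)$ of $\sigma$.

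\emph{Bottommost interval.} It contains a point of $\sigma$'s range, and its top endpoint lies above the bottom of $\sigma$.

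\emph{Middle intervals.} Suppose there are $k$ intervals in total. The $k-2$ middle intervals are disjoint and lie between the top of the bottommost interval and the bottom of the topmost one, inside the range of $\sigma$. Each has length greater than $L/8$, so $(k-2)L/8 < L$, which gives $k-2 \le 7$ and hence $k\le 9$.

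It remains to check that strictness is handled correctly. Since $(k-2)\cdot(\text{length}>L/8)<L$, we get $k-2<8$, so $k-2\le 7$, and the bound of nine squares follows.

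The main subtlety I anticipate is exactly this boundary bookkeeping: the two extreme squares may stick out of $\sigma$, and the strict inequality $\size(S) > L/8$ is what rules out ten squares. Touching squares cause no difficulty, because the intervals are open and disjoint, so lengths add. The horizontal line $\ell_h$ is handled identically.
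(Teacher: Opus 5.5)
Your proof is correct and follows essentially the same route as the paper: \Cref{prop:level-property} gives every relevant square side length greater than $M_\theta/2^{i+1}=L/8$. A one-dimensional packing of the disjoint chords along the midline segment of length $L$ then allows at most seven chords inside the segment, plus at most two extreme ones that may stick out of $\sigma$. Your sorting argument for the two extreme intervals is, if anything, slightly more precise than the paper's wording, which speaks of squares ``completely inside $\sigma$'' plus squares containing the endpoints $p_\ell,p_r$.
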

\begin{proof}
    We prove the lemma for~$\ell_h$; the proof for $\ell_v$ is similar.
    
    Let $p_{\ell}$ and $p_r$ be the points where $\ell_h$ intersects the boundary of~$\sigma$,
    and let $p_{\ell} p_r\subset \ell_h$ be the segment joining $p_{\ell}$ and $p_r$.
   Since $\sigma$ is a level-$i$ cell, its side length equals $4M_{\theta}/2^i$ and so $|p_{\ell} p_r| = 4M_{\theta}/2^i$.
    Further, since $p_{\ell} p_r$ lies on the boundaries of two level-$(i+1)$ cells, by \Cref{prop:level-property}, any square that intersects $p_{\ell} p_r$ must be of level at most $i$ and hence have side length strictly more than $M_{\theta}/2^{i+1}$. Therefore, there are at most seven squares completely lying inside $\sigma$ that intersect the segment $p_{\ell} p_r$. 
    Indeed, these squares induce pairwise interior-disjoint intervals on $p_{\ell} p_r$, each of length strictly larger than $M_\theta/2^{i+1}$, while $|p_{\ell}p_r|=4M_\theta/2^i$. Together with the at most two squares that can contain the points $p_{\ell}$ and $p_r$ in their interior respectively (and thus intersect the boundary of $\sigma$), we obtain at most nine squares intersecting the portion of $\ell_h$ lying inside $\sigma$. 
\end{proof}

\begin{figure}
    \centering
    \includegraphics[width=0.4\linewidth]{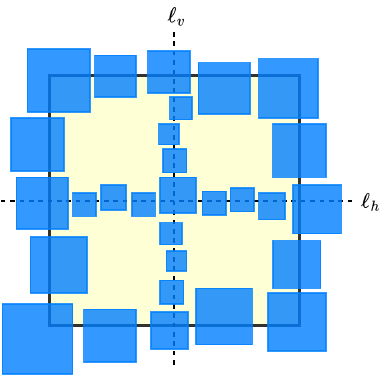}
    \caption{The local packing number for a surviving set of axis-aligned squares can be $29$.}
    \label{fig:local-constant-square}
\end{figure}

\Cref{lem:intersect-atmost-9} already implies that the local packing number of $\SM'$ is bounded by a constant 
(which can be as large as $29$, see \Cref{fig:local-constant-square}). 
Together with \Cref{lem:surviving-set-size} and \Cref{lem:induction-general} this
gives a simpler (than the prior conflict graph coloring based approaches of \cite{KhanP20}, \cite{chalermsook2025improved}) alternate proof 
of the fact that a constant fraction of axis-aligned squares are separable.
Using additional properties of the arrangement of squares, we prove in the remainder of the section that
this worst-case local packing number is not the true bottleneck. Instead of pessimistically charging every 
recursive cut for all 29 potentially intersected squares, the proof exploits the geometric arrangement 
of the surviving squares to show that almost every recursive step loses only 5, 7, or 9 squares 
while simultaneously splitting the instance into two reasonably large pieces.
This way we can show that $\SM'$ has at least $(|\SM'|+9)/5$ separable members, 
assuming $|\SM'|\ge 3$.  
For this, we define the notion of \emph{\good} functions.

\begin{definition}[Admissible function]
   A function $g:\mathbb{N}\rightarrow \mathbb{N}$ is said to be \emph{\good} if $g(1)\ge 1$, $g(2)\ge 2$, $g(m)\ge 3$ for $m\in \{3,4,5,6\}$, $g(m)\ge 4$ for $m\in \{7,8,9,10\}$, and further for every $m>10$, at least one of the following three properties holds:
    \begin{description}
        \item[(P1)] there exist integers $m_1,m_2\ge 2$, with $m_1+m_2\in [m-7,m]$, such that $g(m)\ge g(m_1)+g(m_2)$;
        \item[(P2)] there exist integers $m_1,m_2$, with $m_1,m_2\ge 3$, and $m_1+m_2\in [m-9,m]$, such that $g(m)\ge g(m_1)+g(m_2)$;
        \item[(P3)] there exists an integer $m_1\in [m-5,m-1]$, such that $g(m)\ge g(m_1)+1$.
    \end{description} 
\end{definition}

It is easy to establish the following lower bound on \good functions using induction.

\begin{restatable}{lemma}{goodfunctionlb}
\label{lem:good-function-lb}
    For any \good function $g$, it holds that $g(m)\ge \tfrac{m+9}{5}$, for all $m\ge 3$.
\end{restatable}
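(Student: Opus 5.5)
We argue by strong induction on $m\ge 3$, splitting off the small values where the bound is read directly from the definition of an admissible function. For $m\in\{3,4,5,6\}$ we have $g(m)\ge 3\ge \tfrac{6+9}{5}$. For $m\in\{7,\dots,10\}$ we have $g(m)\ge 4\ge \tfrac{10+9}{5}$. So the claim holds for all $3\le m\le 10$. For $m>10$ we assume $g(m')\ge \tfrac{m'+9}{5}$ for all $3\le m'<m$. We use the weaker facts $g(1)\ge 1$ and $g(2)\ge 2$ only where a part of size $1$ or $2$ can occur. In each of the three cases we use $g(m_1)+g(m_2)\ge g(m_1)+g(m_2)$ with the appropriate lower bounds and compare with $\tfrac{m+9}{5}$.

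In case (P3), $m_1\in[m-5,m-1]$ with $m_1\ge 6\ge 3$, so $g(m)\ge \tfrac{m_1+9}{5}+1\ge \tfrac{m-5+9+5}{5}=\tfrac{m+9}{5}$.

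In case (P2), $m_1,m_2\ge 3$, so both parts fall under the induction hypothesis (each is $<m$, since the other is at least $3$). Hence $g(m)\ge \tfrac{m_1+m_2+18}{5}\ge \tfrac{m-9+18}{5}=\tfrac{m+9}{5}$.

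In case (P1), $m_1,m_2\ge 2$ and $m_1+m_2\ge m-7$. If both parts are at least $3$, the induction gives $g(m)\ge\tfrac{m-7+18}{5}=\tfrac{m+11}{5}$, which is enough. The delicate subcase is when a part equals $2$, where we only know $g(2)\ge 2$. Note that $\tfrac{2+9}{5}=\tfrac{11}{5}>2$, so $2$ falls short of the bound $\tfrac{m'+9}{5}$ for $m'=2$ by $\tfrac15$.
\begin{itemize}
\item If $m_1=2$ and $m_2\ge 3$, then $g(m)\ge 2+\tfrac{m_2+9}{5}=\tfrac{m_2+19}{5}$. Since $m_2\ge m-9$, this is at least $\tfrac{m+10}{5}$, which suffices.
\item If $m_1=m_2=2$, then $g(m)\ge 4$ while $m\le 11$. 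Since $m>10$, we get $m=11$, and then $\tfrac{m+9}{5}=4$, which also works.
\end{itemize}

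The (P1) case is the only one where the slack in the window $[m-7,m]$ must absorb the loss from parts of size $2$, which do not satisfy the inductive bound. So this is the step I expect to need the most care. One must check that the extra room given by $7<9$ exactly compensates for the $\tfrac15$ deficit per size-$2$ part, including the boundary value $m=11$.
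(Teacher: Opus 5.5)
Your proof is correct and follows the paper's argument essentially verbatim. You use strong induction with the base cases $3\le m\le 10$ read off from the definition, and the same split into (P1), (P2), (P3), including the identical (P1) subcases: $m_1=m_2=2$ forcing $m=11$, and one part equal to $2$ handled via $m_2\ge m-9$. The tautological sentence ``$g(m_1)+g(m_2)\ge g(m_1)+g(m_2)$'' in your setup is a harmless slip.
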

\begin{proof}
    We prove the lemma using strong induction on $m$. First note that from the given conditions, it is easy to verify that $g(m)\ge \tfrac{m+9}{5}$ holds for all $m\in [3,10]$. Consider now an integer $m>10$, and assume that the condition $g(k)\ge \tfrac{k+9}{5}$ is satisfied for all $k\in [3,m-1]$. We divide into three cases depending on which of the three properties among (P1), (P2), and (P3) is satisfied.

    Suppose that (P1) holds. If both $m_1,m_2\ge 3$, then we have 
    \[g(m) \ge g(m_1)+g(m_2)\ge \tfrac{m_1+9}{5}+\tfrac{m_2+9}{5}\ge \tfrac{m+11}{5}> \tfrac{m+9}{5},\]
    where the penultimate inequality holds since $m_1+m_2\ge m-7$, 
    and we are done. If $m_1=m_2=2$, then since $m_1+m_2\ge m-7$, we have $m\le 11$, and 
    \[ g(m) \ge g(m_1)+g(m_2) = 2g(2)=4 = \tfrac{11+9}{5}\ge \tfrac{m+9}{5},\]
    and we are done. Hence, assume wlog~that $m_1\ge 3$ and $m_2=2$. Again since $m_1+m_2\ge m-7$, we have $m_1\ge m-9$. Therefore
    \[ g(m)\ge g(m_1)+g(2)\ge \tfrac{m_1+9}{5}+2\ge\tfrac{m+10}{5}>\tfrac{m+9}{5},\]
    and we are done.

    Next, suppose that (P2) holds. Since $m_1,m_2\ge 3$, we obtain that
    \[ g(m)\ge g(m_1)+g(m_2) \ge \tfrac{m_1+9}{5}+\tfrac{m_2+9}{5}\ge \tfrac{m+9}{5},\]
    where the final inequality holds since $m_1+m_2\ge m-9$, and we are done.

    Finally, suppose that (P3) holds. Since $m_1\ge m-5\ge 3$, we obtain that
    \[ g(m)\ge g(m_1)+1 \ge \tfrac{m_1+9}{5}+1\ge \tfrac{m+9}{5},\]
    where the last inequality follows since $m_1\ge m-5$. This completes the proof.
\end{proof}

Recall that for any subset $\SM''\subseteq \SM'$, the function~$\GG(\SM'')$ denotes the maximum size of a 
separable subfamily of $\SM''$, and also recall that $T\colon \mathbb{N}\rightarrow \mathbb{N}$ is the function defined as follows: 
\[ T(m):= \min\{\GG(\SM'')\mid \SM'' \subseteq \SM' \text{ with } |\SM''|=m\},\]
that is, $T(m)$ is the minimum size of the largest separable subfamily of a family $\SM''\subseteq \SM'$ comprising exactly $m$ squares.

\begin{lemma}
\label{lem:square-guillotinable}
    The function $T(m)$ is \good.
\end{lemma}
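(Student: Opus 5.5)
We have to show that $T$ satisfies the base values and, for $m>10$, one of (P1)--(P3). The base cases use only that any two disjoint axis-aligned squares are separated by an axis-aligned line.
\begin{itemize}
\item $T(1)\ge1$ and $T(2)\ge2$ hold trivially.
\item For $m\in\{3,\dots,6\}$ we need three separable squares. Take the leftmost-ending square $S$ (smallest right $x$-coordinate). If some other square lies entirely to its right, the vertical line through the right edge of $S$ separates $S$ from it. Otherwise every other square overlaps $S$'s $x$-range, and then a horizontal or vertical line through an extreme edge works. We iterate to get three.
\item For $m\in\{7,\dots,10\}$ we need four. Here we apply the same recursion as below, with the trivial bounds.
\end{itemize}

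For $m>10$, fix $\SM''\subseteq\SM'$ with $|\SM''|=m$. As in \Cref{lem:induction-general}, we may first cut along any level-$0$ grid line or any object-free line, and in that case (P1) holds with no loss (if one side is tiny we can absorb it). So we may assume all squares lie in one level-$0$ cell. Let $\sigma^*$ be the smallest grid cell meeting all squares, with midlines $\ell_h,\ell_v$ and quadrants $\sigma_1,\dots,\sigma_4$.

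\textbf{Case A: two quadrants each fully contain squares.} Then $\ell_h$ or $\ell_v$ separates them. By \Cref{lem:intersect-atmost-9}, this line cuts at most $9$ squares meeting $\sigma^*$, and we split on it.
\begin{itemize}
\item If both sides keep at least $3$ squares, (P2) holds.
\item To get (P1) we need the stronger bound of $7$ cut squares, or a side with exactly $2$ squares and at most $7$ lost. Lemma~\ref{lem:intersect-atmost-9} counts up to $7$ squares inside $\sigma^*$ plus $2$ straddling $\partial\sigma^*$. If a square straddles the boundary at an endpoint of the midline, we can instead cut along an edge of that square, or along a boundary line of $\sigma^*$, which all squares meet. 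This saves squares.
\item When a side has fewer than $3$ squares, we choose the other midline, or cut along the boundary edge of a square in that side, to peel off a single square instead. That gives (P3) whenever at most $4$ others are cut.
\end{itemize}
The plan is to enumerate the subconfigurations by counting the intervals of length $>M_\theta/2^{i+1}$ on the half-segments of $\ell_h$ and $\ell_v$. Each half has length $2M_\theta/2^i$, so it meets at most $4$ squares, including straddling ones.

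\textbf{Case B: all squares fully inside a quadrant lie in one $\sigma'$.} By minimality of $\sigma^*$, some square $S$ lies outside $\sigma'$, so it crosses $\ell_h$, $\ell_v$ or $\partial\sigma^*$. We want a single axis-aligned cut, along an edge of $S$ or along the line of $\partial\sigma'$ facing $S$, that separates $S$ and cuts at most $4$ other squares; this gives (P3) with $m_1\ge m-5$.
\begin{itemize}
\item The squares not inside $\sigma'$ all meet the two half-segments of $\ell_h,\ell_v$ bounding $\sigma'$, or $\partial\sigma^*$.
\item Choose $S$ extremal: for instance, the square meeting the half-midline adjacent to $\sigma'$ that is farthest from $\sigma'$. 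The line through the edge of $S$ nearest $\sigma'$ then crosses only squares on one half-segment, at most $4$ by the length count.
\item If the cut does not isolate $S$ cleanly but leaves two large groups, it yields (P1) or (P2) instead.
\end{itemize}

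The main obstacle is Case B and the boundary subcases of Case A. There the naive count of $9$ or more must be sharpened to $4$ extra losses, or to $7$ with both sides of size at least $2$. This requires a careful case analysis exploiting that squares are axis-aligned and of size $>\tfrac18\size(\sigma^*)$, so that each half-midline, of length $\tfrac12\size(\sigma^*)$, meets at most $4$ squares. I would first try the extremal choice of $S$ along the half-midline, and fall back to splitting along the full midline whenever both resulting sides have at least $3$ squares.
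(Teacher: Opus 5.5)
\paragraph{Gap in the inductive step.}
Your plan rests on the claim that each half of a midline of $\sigma^*$ meets at most $4$ squares, and that claim is false. Every square meeting it has side strictly larger than $\tfrac18\size(\sigma^*)$, so at most $3$ squares have their trace strictly inside the half-segment. In addition, one square can contain each endpoint: one on $\partial\sigma^*$ and one containing the center of $\sigma^*$. The correct bound is therefore $5$ (\Cref{lem:claim-1}), and $5$ is attainable. As a result, your peeling cut in Case~B, and in the subcase of Case~A where one side keeps a single square, may isolate one square while cutting five others. That only gives $T(m)\ge T(m-6)+1$, which is none of (P1)--(P3), and your fallback (``if the cut leaves two large groups it yields (P1) or (P2)'') does not cover it.

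This configuration is exactly the crux of the paper's proof (\Cref{lem:intersect-5}). There $\Sleft$ is the unique square strictly left of $\ell_v$ and $\ell_r$ is the line through its right edge. If $\ell_r$ (and $\ell_t$) meets five squares, the paper shows:
\begin{itemize}
\item $\ell_r$ meets exactly the five squares of $\Iver\setminus\Ibottom$;
\item one of them is hit by the horizontal midline $\ell'_h$ of the top-right child $\sigma'$, two lie strictly above it and two strictly below;
\item $\ell'_h$ avoids $\Sleft$ and $\ell'_v$ avoids $\Sbottom$.
\end{itemize}
Hence \Cref{lem:guillotine-properties} applies to $\sigma'$ and gives (P1) or (P2). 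An extremal choice of $S$ does not replace this descent.

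Your handling of a midline cutting $8$ or $9$ squares with only two squares on one side (``cut along an edge of that square \dots\ this saves squares'') is also not an argument. The missing observation is this: if $|\Iver|\ge 8$, then since each half of $\ell_v\cap\sigma^*$ meets at most $5$ squares, at least three squares of $\Iver$ lie strictly above $\ell_h$ and three strictly below. Cutting along $\ell_h$ then gives (P2).

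Your case split is also on the wrong quantity. The useful distinction is the number of squares lying strictly on each side of $\ell_v$ or $\ell_h$ (the paper's $V,H$), not squares fully inside quadrants. A square left of $\ell_v$ that crosses $\ell_h$ still counts for a vertical cut.

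\paragraph{Base cases.}
``Apply the same recursion with the trivial bounds'' does not give $T(7)\ge 4$. A (P3) step at $m=7$ may leave $m_1=2$, which yields only $T(7)\ge T(2)+1=3$. You need a direct argument, such as the paper's. Let $G_x$ and $G_y$ be the overlap graphs of the $x$- and $y$-projections; they are edge-disjoint. If every $4$-vertex induced subgraph of both were connected, every vertex would have degree at least $4$ in each graph. That would require at least $28>\binom72$ edges. So some four squares are split by an axis-parallel line into groups of size at most $3$, and each group is separable.

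Your three-square argument is also flawed as written. The vertical line through the right edge of the square with smallest right endpoint can cut the third square even when another square lies entirely to its right. The clean argument is that $G_x$ or $G_y$ has an isolated vertex. Since $T$ is non-decreasing, $T(3)\ge 3$ and $T(7)\ge 4$ then cover all the required base values.
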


We prove \Cref{lem:square-guillotinable} in the remainder of this section.
Consider any subset $\SM''\subseteq \SM'$, and let $m=|\SM''|$.
We first consider the cases when $m\le 10$. Clearly $T(1)=1$ and $T(2)=2$ follow trivially, since any two squares are separable. In order to show that $T(m)\ge 3$ for $m\in \{3,4,5,6\}$, and $T(m)\ge 4$ for $m\in \{7,8,9,10\}$, we show the following lemma.

\begin{restatable}{lemma}{basecase}
\label{lem:base-case}
    The following statements hold.
    \begin{enumerate}
    \renewcommand{\labelenumi}{(\theenumi)}
        \item Any three disjoint squares are separable.
        \item Any collection of seven disjoint squares contains at least four separable members.
    \end{enumerate}
\end{restatable}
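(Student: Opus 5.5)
For part~(1) I will not use that the objects are squares, only that they are disjoint axis-aligned boxes. The key fact is that two disjoint axis-aligned squares $A,B$ have disjoint (open) $x$-projections or disjoint $y$-projections; call the pair \emph{$x$-disjoint} or \emph{$y$-disjoint} accordingly.

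Given three squares $A,B,C$, the three pairs each get a type. By pigeonhole, two pairs have the same type, and any two of the three pairs share a square. So, up to symmetry, $A$ is $x$-disjoint from both $B$ and $C$.
\begin{itemize}
\item If $B$ and $C$ lie on the same side of $A$, say both to the right, then the vertical line through the right edge of $A$ meets neither $B$ nor $C$. It separates $A$ from $\{B,C\}$, and those two are then separated by a single axis-parallel line.
\item If $B$ lies to the left of $A$ and $C$ to the right, then the vertical line through the left edge of $A$ separates $B$ from $\{A,C\}$. Again one more cut finishes.
\end{itemize}
This proves~(1).

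For part~(2), I first note what does not work. Four squares need not be separable: four equal squares in a pinwheel around a square hole defeat every axis-parallel cut. So I cannot simply pick four arbitrary squares. Instead I aim for a single axis-parallel cut $\ell$ such that:
\begin{itemize}
\item $\ell$ has at least one surviving square on each side, and
\item if $k$ squares are cut by $\ell$ and $a,b\ge 1$ squares remain on the two sides (so $a+b=7-k$), then $f(a)+f(b)\ge 4$, where $f(1)=1$, $f(2)=2$ and $f(j)\ge 3$ for $j\ge 3$ by part~(1).
\end{itemize}
The second condition holds whenever $k\le 3$: the split $(1,3)$ gives $1+3$, the split $(2,2)$ gives $2+2$, and any split with $a+b\ge 5$ gives at least $4$. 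So it suffices to show that among seven disjoint squares there is an axis-parallel cut meeting at most three squares that leaves at least one square on each side.

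To find such a cut, first suppose some axis-parallel line splits the seven squares nontrivially without meeting any of them. Then we are done by the same arithmetic, since $k=0$. Otherwise, consider the four extreme lines:
\begin{itemize}
\item $x=c_1$, the minimum over all squares of the right $x$-coordinate;
\item $x=c_2$, the maximum of the left $x$-coordinates;
\item $y=c_3$ and $y=c_4$, defined analogously.
\end{itemize}
Each of these lines keeps the extremal square on one side. The squares it cuts all contain a common vertical (or horizontal) segment, so they are pairwise $y$-disjoint (resp.\ $x$-disjoint) and stacked along the line.

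I would then use a counting argument to show that some extreme line meets at most three squares and leaves at least one square on its far side. The first ingredient is that, because these are squares, a square stacked across $x=c_1$ has side length at least its height. The second is that the stacks on the four extreme lines are essentially disjoint: a square crossing two of them must be large and then itself blocks other squares. If all four lines met at least four squares each, there would be more than seven squares. When a line meets exactly four or more squares, I would instead shift the cut to the edge of the first square in the stack, which meets fewer squares.

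I expect this last step to be the main obstacle. The extreme lines can be cut by long vertical stacks, so the proof requires a careful case analysis on how the stacks on the four extreme lines overlap. If the counting does not close cleanly, the first fallback is to choose, among all lines through square edges, the one minimizing the number of cut squares subject to nontriviality, and to argue via a pinwheel-type structure that every such line cutting at least four squares forces more than seven squares.
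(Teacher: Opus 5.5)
\textbf{Part (2) has a genuine gap.} Your part~(1) is correct and is essentially the paper's argument. The paper phrases your pigeonhole as ``the $x$- and $y$-overlap graphs on three squares are edge-disjoint, so one of them has an isolated vertex.'' Your reduction in part~(2) is also correct: a nontrivial axis-parallel cut meeting at most three of the seven squares yields four separable squares via~(1).

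The gap is the step you flag yourself: you never show that such a cut exists, and the support you sketch does not hold as stated. The stacks on the two \emph{parallel} extreme lines $x=c_1$ and $x=c_2$ can share many squares, since every square spanning $[c_1,c_2]$ crosses both. So ``four stacks of size at least four need more than seven squares'' is not a counting consequence. The ``side length at least its height'' ingredient is vacuous for squares. The fallback of shifting the cut to the edge of the first square in a stack is not argued at all.

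\textbf{How the paper supplies the missing step.} It uses a short degree count. Let $G_x$ (resp.\ $G_y$) join two squares whose $x$- (resp.\ $y$-)projections share an interior point; disjointness makes $G_x$ and $G_y$ edge-disjoint. Suppose every $4$-vertex induced subgraph of both graphs were connected. Then each vertex would have degree at least $4$ in each graph, since a vertex with three non-neighbours is isolated among them. That gives at least $28>\binom{7}{2}$ edges in total, a contradiction. Hence some four squares have, say, $x$-projections that fall into two groups. A vertical line through the gap misses all four, has some of them on each side, and each side holds at most three of them, which (1) separates. This line meets at most three of the seven squares, which is exactly the cut you wanted.

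\textbf{Rescuing your extreme-line route.} It can be closed, but only with an argument you did not give.
\begin{enumerate}
\item If no line splits the family without cutting, then $c_2\ge c_1$. So the square attaining $c_2$ lies right of $x=c_1$, and $x=c_1$ (likewise $y=c_3$) has a square on each side.
\item The squares crossing $x=c_1$ and those crossing $y=c_3$ share at most one square, namely the one containing $(c_1,c_3)$.
\item If both stacks had at least four squares, they would therefore cover all seven with exactly one common square $X$.
\item Then the square with right edge $c_1$, which does not cross $x=c_1$, must cross $y=c_3$. Its $x$-projection overlaps that of $X$ just left of $c_1$, contradicting disjointness.
\end{enumerate}
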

\begin{proof}
    For any collection of disjoint squares, let $G_x$ (resp.~$G_y$) denote the graph that has a vertex corresponding to each square
    and an edge between two vertices iff the $x$- (resp. $y$-) projections of the corresponding squares share a common interior point. 
    Note that $G_x$ and $G_y$ are edge-disjoint. 
    \begin{enumerate}
    \renewcommand{\labelenumi}{(\theenumi)}
        \item Consider the $3$-vertex graphs $G_x$ and $G_y$ corresponding to the three squares. Since $G_x$ and $G_y$ are edge-disjoint, at least one of the graphs must have a vertex of degree $0$. Assume wlog that $G_x$ contains a vertex of degree $0$, and let $S$ be the corresponding square. Then $S$ can be separated from one of the other two squares using a vertical guillotine cut $\ell$ that does not intersect the third square. Then one side of $\ell$ contains two squares,
        which obviously can be separated from each other using an additional cut. Thus the three squares are separable.        
        \item Consider the $7$-vertex edge-disjoint graphs $G_x$ and $G_y$ corresponding to the seven squares. We first show that there must exist a $4$-vertex induced subgraph in either $G_x$ or $G_y$ that is disconnected. Assume to the contrary that every $4$-vertex induced subgraph in $G_x$ and $G_y$ is connected. This would imply that each vertex in $G_x$ (and, similarly, in $G_y$) must have degree at least $4$, since otherwise, if $v$ is a vertex of degree at most $3$, then $v$ together with any of 
        its three non-neighbors would form a disconnected graph.
        But then both $G_x$ and $G_y$ have at least $7\cdot 4/2=14$ edges. Since $G_x$ and $G_y$ are edge-disjoint, the union of $G_x$ and $G_y$ would give us a graph on seven vertices with at least $28$ edges, a contradiction.

        Now assume wlog that $G_x$ has an induced subgraph on four vertices that is disconnected. We show that the four corresponding squares are separable. Clearly there exists a vertical guillotine cut that separates the squares corresponding to the two connected components. Then since either of the connected components is of size at most $3$, they are both separable by (1), and we are done.
    \end{enumerate}    
\end{proof}

Henceforth, we assume that $m>10$. Our goal is to show that one of the three properties among (P1), (P2) and (P3) is satisfied. 
We begin by showing the following lemma. Intuitively, it implies that if either the horizontal and vertical midlines of a cell does not intersect too many squares and leaves sufficiently many squares on either side, then we can perform a guillotine cut along that midline and recurse on the two sides. The values $m_1$ and $m_2$ in (P1) and (P2) will quantify the number of squares lying on either side of the cut.

Recall that $\SM''$ is a subfamily of $\SM'$ consisting of $m$ squares.
 
\begin{lemma}
\label{lem:guillotine-properties}
    Consider a cell $\sigma$ of the hierarchical grid and let $\ell_h$ and $\ell_v$ be the horizontal and vertical grid lines that partition $\sigma$ into its four children. Suppose that $\SM''$ satisfies the following properties.
    \begin{itemize}
        \item At least one among $\ell_h$ and $\ell_v$ has at least two squares of $\SM''$ lying completely on either side of the line.
        \item For any square $S\in \SM''$, if $S$ is intersected by either $\ell_h$ or $\ell_v$, then $S$ intersects the cell $\sigma$.
    \end{itemize}
    Then either (P1) or (P2) is satisfied by $T$.
\end{lemma}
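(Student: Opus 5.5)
Let $\ell$ be one of $\ell_h,\ell_v$ that has at least two squares of $\SM''$ completely on each side; by symmetry take $\ell=\ell_h$. The plan is to cut $\SM''$ along $\ell$. Let $m_1,m_2\ge 2$ be the numbers of squares lying completely above and below $\ell$. Every square not counted in $m_1,m_2$ is cut by $\ell$, so $m_1+m_2=m-k$, where $k$ is the number of squares of $\SM''$ intersected by $\ell$. By the second hypothesis each such square intersects $\sigma$. Hence \Cref{lem:intersect-atmost-9}, applied to $\sigma$, gives $k\le 9$. The separable subfamilies of the two sides, together with the cut along $\ell$, form a separable subfamily of $\SM''$. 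So, as in \Cref{lem:induction-general}, we get $\GG(\SM'')\ge T(m_1)+T(m_2)$. Taking the minimum over all $\SM''$ of size $m$ then gives $T(m)\ge T(m_1)+T(m_2)$ with $m_1+m_2\in[m-9,m]$.

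It remains to check the side conditions of (P1) and (P2). If $m_1,m_2\ge 3$, we are in (P2) directly. Otherwise one side, say $m_2$, equals $2$. In that case I aim for (P1), which only needs $m_1,m_2\ge 2$ but requires $m_1+m_2\ge m-7$, that is, $k\le 7$. If $k\le 7$ we are done. The remaining case, $m_2=2$ and $k\in\{8,9\}$, is the main obstacle.

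To handle it, I will refine the count in \Cref{lem:intersect-atmost-9}. The segment $\ell\cap\sigma$ has length $4M_\theta/2^i$, and every square crossing it has side $>M_\theta/2^{i+1}$. So the squares inside $\sigma$ crossing $\ell$ cover disjoint intervals each longer than $1/8$ of the segment, which allows at most seven of them. The total of $9$ is reached only when two further squares cross $\partial\sigma$ at the endpoints $p_\ell,p_r$. I would then use the two small squares below $\ell$ together with the fact that $m>10$.

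If some square crosses the endpoint region, I instead cut along a vertical boundary line of $\sigma$ or along $\ell_v$. Alternatively, I cut along a boundary edge of one of the two lone squares, which isolates it with few losses. Each such cut either yields (P1) with a smaller loss or splits the instance into two parts of sizes at least $3$.

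I expect this case analysis to be the delicate step. If it resists, my fallback is a side-by-side recount that exploits the two lone squares. Those squares lie inside one half of $\sigma$, and this forces the crossing squares to occupy less of the segment, so at most $7$ of them can cross.
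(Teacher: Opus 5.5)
Your treatment of the easy cases is correct. The second hypothesis together with \Cref{lem:intersect-atmost-9} gives $k\le 9$, both sides of size at least $3$ gives (P2), and $k\le 7$ gives (P1).

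The gap is the remaining case $m_2=2$, $k\in\{8,9\}$, which you leave open, and none of the routes you sketch would close it. A cut along a boundary line of $\sigma$ is uncontrolled. The second hypothesis only constrains squares that meet $\ell_h$ or $\ell_v$, so nothing bounds the number of squares outside $\sigma$ that cross such a line. A cut along an edge of a lone square has no bound on its loss either, and at best it yields (P3), which is not what the lemma asserts. Your fallback claim is false. The two squares below $\ell_h$ can be tiny and sit in a corner of $\sigma$ (or lie outside $\sigma$), so they place no constraint on the crossing squares. Seven squares inside $\sigma$ of side slightly above $M_\theta/2^{i+1}$, plus two squares containing $p_\ell$ and $p_r$, can all still cross $\ell_h$.

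The missing idea is to abandon $\ell_h$ in this case and cut along the \emph{other} midline $\ell_v$. The squares that cross $\ell_h$ then serve as the witnesses on both sides.
\begin{itemize}
\item By the second hypothesis, every square crossing $\ell_h$ meets the segment $\ell_h\cap\sigma$.
\item The left half of this segment has length $2M_\theta/2^i$, and each crossing square has side $>M_\theta/2^{i+1}$. So at most three crossing squares project strictly inside that half. Add at most one containing $p_\ell$ and at most one containing the point $\ell_h\cap\ell_v$, for a total of at most five meeting the left half.
\item With $k\ge 8$, at least three crossing squares therefore lie completely to the right of $\ell_v$. Symmetrically, at least three lie completely to its left.
\item Every square meeting $\ell_v$ meets $\sigma$, again by the second hypothesis. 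So \Cref{lem:intersect-atmost-9} bounds the loss of a cut along $\ell_v$ by $9$, and this gives (P2).
\end{itemize}
This is exactly the paper's argument, with the roles of $\ell_h$ and $\ell_v$ swapped and the cases split as $|\Iver|\le 7$ versus $|\Iver|\ge 8$. Note that it needs neither $m>10$ nor the lone squares.
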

\begin{proof}
    Assume w.l.o.g.~that $\ell_v$ has at least two squares lying completely on either side. Let $\Iver$ and $\Ihor$ 
    be the squares intersected by $\ell_v$ and $\ell_h$, respectively. We consider two cases depending on the number of squares in $\Iver$. 

    \paragraph{Case 1: $|\Iver|\le 7$.}
    In this case, we perform a vertical guillotine cut along $\ell_v$ (see \Cref{fig:cut-along-axis}, left). This intersects the at most seven squares of $\Iver$ and has at least two squares on either side. Hence $T(m)$ satisfies (P1).

\paragraph{Case 2: $|\Iver|\ge 8$.}
Due to the second condition in the lemma statement, $\ell_v$ does not intersect any squares lying outside $\sigma$. Thus $|\Iver|\le 9$, due to \Cref{lem:intersect-atmost-9}. Further, since there can be at most five squares intersecting the portion of $\ell_v$ lying below $\ell_h$, there must be at least three squares in $\Iver$ above $\ell_h$. Analogously, there are at least three squares lying completely below $\ell_h$. We perform a horizontal guillotine cut along $\ell_h$ (see \Cref{fig:cut-along-axis}, right). Again, due to the second condition of the lemma statement, this does not intersect any square lying outside $\sigma$. Thus the cut intersects only the at most nine squares of $\Ihor$ and by the discussion above, there are at least three squares on either side. Hence (P2) holds in this case.
\end{proof}

Let $\Gamma=G_0,G_1,\ldots$ be the hierarchical grid for $\SM$. Since $\SM'$ is a surviving family
and $\SM''\subseteq \SM'$, no square  in $\SM''$ intersects a grid line of $G_0$. We first isolate all the non-empty cells of $G_0$ using guillotine cuts. Hence, in the remainder of the section, we assume w.l.o.g.~that all squares of $\SM''$ lie completely inside a single cell of $G_0$.

Now let $\sigma^*$ be the smallest cell in our hierarchical grid such that all squares in $\SM''$ intersect~$\sigma^*$. 
Let $\ell_h$ and $\ell_v$ be the horizontal and vertical grid lines that partition $\sigma^*$ into its four children, respectively. 
By \Cref{lem:intersect-atmost-9}, $\ell_h$ and $\ell_v$ each intersect at most nine squares.
Let $\Ileft$ and $\Iright$ be the squares of $\SM''$ that lie completely to the left and right of $\ell_v$,
respectively. Similarly, let $\Itop$ and $\Ibottom$ be the squares of $\SM''$ lying completely above and below~$\ell_h$. 
Finally, let $\Ihor$ and $\Iver$ be the squares intersected by $\ell_h$ and $\ell_v$, respectively. 
For ease of notation, let $V :=\min\{|\Ileft|,|\Iright|\}$, and $H :=\min\{|\Itop|,|\Ibottom|\}$. Note that it must be the case that $\max(V,H)\ge 1$, otherwise if $V=H=0$, all squares from $\SM''$ would intersect the same child of $\sigma^*$, contradicting the definition of $\sigma^*$.

Recall that our goal is to show that $T(m)$ is \good, that is, that it satisfies
one of the properties (P1), (P2) and (P3).

\begin{figure}
  \centering

  \begin{subfigure}[b]{0.45\textwidth}
  \centering
  \hspace*{1cm}
  \vspace*{-0.6cm}
    \includegraphics[width=0.9\linewidth]{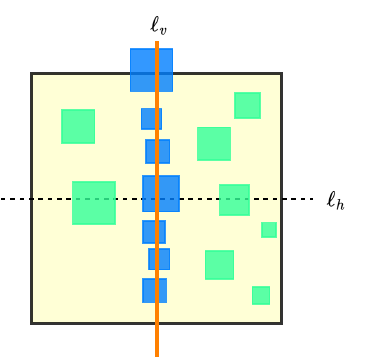}
    \label{fig:cut-along-lv}
  \end{subfigure}
  \hfill
  \begin{subfigure}[b]{0.45\textwidth}
  \centering
    \includegraphics[width=0.9\linewidth]{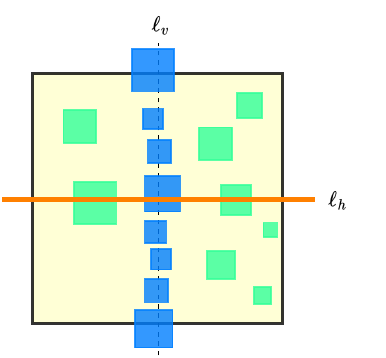}
    \label{fig:cut-along-lh}
  \end{subfigure}

  \caption{Example figure for \Cref{lem:guillotine-properties}. In the left figure, we have $|\Iver|\le 7$, and we perform a guillotine cut along $\ell_v$. In the right figure, $|\Iver|=8$, and we cut along $\ell_h$.}
  \label{fig:cut-along-axis}
\end{figure}

\begin{lemma}
\label{lem:atleast-2-both-side}
    If $\max(V,H)\ge 2$ then $T(m)$ satisfies (P1) or (P2).
\end{lemma}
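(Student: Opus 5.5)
The plan is to apply \Cref{lem:guillotine-properties} directly to the cell $\sigma:=\sigma^*$ with its midlines $\ell_h$ and $\ell_v$, using the family $\SM''$ itself. Two hypotheses must be checked.

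\emph{First hypothesis.} We need one of $\ell_h,\ell_v$ to have at least two squares of $\SM''$ lying completely on each side of it. This is exactly $\max(V,H)\ge 2$. If $V\ge 2$ then $|\Ileft|,|\Iright|\ge 2$, so $\ell_v$ works. If $H\ge 2$, then $\ell_h$ works by the same reasoning.

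\emph{Second hypothesis.} We need every square of $\SM''$ that meets $\ell_h$ or $\ell_v$ to also intersect $\sigma^*$. This is immediate from the choice of $\sigma^*$: by definition every square of $\SM''$ intersects $\sigma^*$, so in particular every square cut by a midline does.

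A subtlety is that \Cref{lem:guillotine-properties} speaks of the full lines $\ell_h$ and $\ell_v$, not just their portions inside $\sigma^*$. A square meeting $\ell_v$ outside $\sigma^*$ still intersects $\sigma^*$ by the above, so the bound $|\Iver|\le 9$ from \Cref{lem:intersect-atmost-9} still applies. That lemma counts squares of $\SM'$ meeting the line that also intersect $\sigma$, and we may use it as stated. No further obstacle is expected.

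With both hypotheses verified, \Cref{lem:guillotine-properties} yields that $T$ satisfies (P1) or (P2) at $m$. One should note that the sizes $m_1,m_2$ produced by the cut are the numbers of squares on each side. These are integers with $m_1+m_2\ge m-9$ or $m_1+m_2\ge m-7$ as required, and the inequality $T(m)\ge T(m_1)+T(m_2)$ follows by separating each side recursively, as in the proof of \Cref{lem:guillotine-properties}. This step is routine, so the whole argument is essentially a verification that the hypotheses hold for $\sigma^*$.
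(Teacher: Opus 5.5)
Your proposal is correct and matches the paper's proof: apply \Cref{lem:guillotine-properties} to $\sigma^*$. The first hypothesis is exactly $V\ge 2$ or $H\ge 2$. The second holds trivially because every square of $\SM''$ intersects $\sigma^*$ by the definition of $\sigma^*$.
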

\begin{proof}
    To prove the lemma, it suffices to show $\sigma^*$ satisfies both conditions of \Cref{lem:guillotine-properties}. 
    If $V\ge 2$, then $\ell_v$ has at least two squares on either side. 
    Similarly, if $H\ge 2$, then $\ell_h$ will have at least two squares lying completely on either side. Hence the first condition in \Cref{lem:guillotine-properties} is satisfied. Further, by definition of $\sigma^*$, no square of $\SM''$ lies completely outside $\sigma^*$, and thus the second condition is trivially satisfied.
\end{proof}

Assume from now on that $\max(V,H)\le 1$. Since by the definition of $\sigma^*$, we have $\max(V,H)\ge 1$, it holds that $\max(V,H)=1$. 
Assume w.l.o.g.~that $|\Ileft|\le 1$ and $|\Ibottom|\le 1$. 
Since $\max(V,H)= 1$, we have $\Ileft \cup \Ibottom \neq \emptyset$. Thus, we can assume w.l.o.g.~that $\Ileft \neq \emptyset$.
Define $\Sleft$ be the unique square in $\Ileft$. If $\Ibottom \neq \emptyset$, let $\Sbottom$ be the unique square in $\Ibottom$.  
Note that it might be the case that $\Sleft=\Sbottom$, namely when there is a single square lying completely inside the bottom left child of $\sigma^*$. 
Let $\ell_r$ be the vertical line passing through the right edge of $\Sleft$, and
if $\Ibottom \neq \emptyset$, then let $\ell_t$ be the horizontal line passing through the top edge of $\Sbottom$.

\begin{figure}
    \centering
    \includegraphics[width=0.35\linewidth]{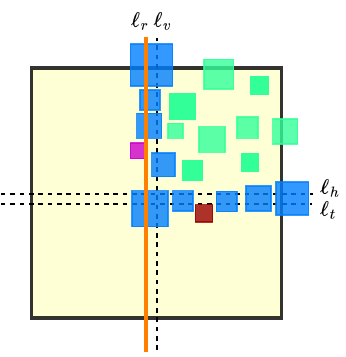}
    \caption{Example figure for \Cref{lem:intersect-atmost-4}. The purple and brown squares correspond to $\Sleft$ and $\Sbottom$, respectively. $\ell_r$ intersects at most $4$ squares, therefore we perform a guillotine cut along $\ell_r$.}
    \label{fig:cut-along-lr}
\end{figure}

\begin{lemma}
\label{lem:intersect-atmost-4}
    Assume that $\max(V,H)=1$. If $\ell_r$ intersects at most four squares, or if $\ell_t$ intersects at most four squares (in case $\Ibottom\neq \emptyset$), 
    then $T(m)$ satisfies~(P3).               
\end{lemma}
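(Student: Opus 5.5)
The plan is to cut along $\ell_r$ (or $\ell_t$) and check that this step matches the recurrence in (P3), namely $T(m)\ge T(m_1)+1$ with $m_1\in[m-5,m-1]$. By symmetry it suffices to treat $\ell_r$; the argument for $\ell_t$ is identical with the roles of the axes exchanged, using $\Sbottom$ in place of $\Sleft$.

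\emph{Step 1: what lies left of $\ell_r$.} The vertical line $\ell_r$ runs through the right edge of $\Sleft$, so it does not cut $\Sleft$, and $\Sleft$ lies in the closed half-plane to the left of $\ell_r$. Since $\Sleft$ lies entirely to the left of $\ell_v$, the line $\ell_r$ lies to the left of (or on) $\ell_v$. Any square lying entirely to the left of $\ell_r$ therefore also lies entirely to the left of $\ell_v$, so it belongs to $\Ileft$. Because $|\Ileft|\le 1$, $\Sleft$ is the only square of $\SM''$ lying completely left of $\ell_r$.

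\emph{Step 2: the cut.} We apply a guillotine cut along $\ell_r$ inside the current region. This cut separates $\Sleft$, which is uncut and alone on its side, from every square lying completely to the right of $\ell_r$. By hypothesis at most four squares are intersected by $\ell_r$, and we discard them. Let $m_1$ be the number of squares lying completely to the right of $\ell_r$. Every square of $\SM''$ other than $\Sleft$ is either cut or lies to the right, so $m_1\ge m-1-4=m-5$. Also $m_1\le m-1$, since $\Sleft$ is not counted. Recursing on the right side gives $\GG(\SM'')\ge 1+T(m_1)$, and taking the minimum over $\SM''$ yields $T(m)\ge T(m_1)+1$ with $m_1\in[m-5,m-1]$, which is exactly (P3). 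Although $m_1$ depends on $\SM''$, the bound holds for some admissible $m_1$ for each family, which is all (P3) requires. Formally, $T$ is defined as a minimum, so the statement is: for the minimizing family, some valid $m_1$ exists.

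\emph{Expected obstacle.} The only delicate point is Step 1: we must make sure that no square other than $\Sleft$ can end up completely on the left of $\ell_r$. This uses $\ell_r\le\ell_v$ together with $|\Ileft|\le1$. Squares that merely touch $\ell_r$ do not count as intersected, since the squares are open, so they cause no problem. We also need to handle the boundary case where $\ell_r$ coincides with $\ell_v$; it is harmless. Finally, $m>10$ guarantees $m_1\ge 6\ge 1$, so the recursion is well defined.
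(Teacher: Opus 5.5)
Your proposal is correct and follows the paper's proof: cut along $\ell_r$ (or $\ell_t$), which isolates $\Sleft$ (or $\Sbottom$), and recurse on the at least $m-5$ squares lying completely on the other side. You go slightly further than the paper by stating explicitly two facts it leaves implicit: that $\Sleft$ is the only square completely left of $\ell_r$, because $\ell_r$ lies weakly left of $\ell_v$ and $|\Ileft|\le 1$; and that the argument should be applied to a minimizing family, which also gives $m_1\le m-1$.
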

\begin{proof}
    Assume w.l.o.g.~that $\ell_r$ intersects at most four squares. We can then simply perform a guillotine cut along $\ell_r$ (see \Cref{fig:cut-along-lr}),
    thus separating $\Sleft$ from the squares lying completely to the right of~$\ell_r$.
    Letting $m_1$ denote the number of such squares, we obtain that 
    \[ T(m)\ge T(m_1)+1, \text{ with } m_1\ge m-5,\]
    and thus (P3) holds.
\end{proof}

\begin{figure}
    \centering
    \includegraphics[width=0.4\linewidth]{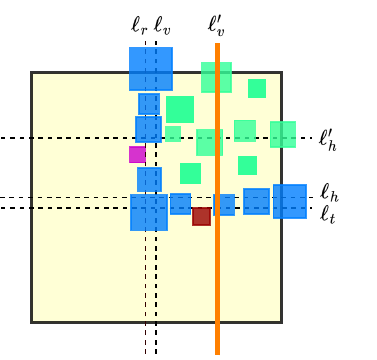}
    \caption{Example figure for \Cref{lem:intersect-5}. Both $\ell_r$ and $\ell_t$ intersect $5$ squares each. $\ell'_v$ intersects $3\le 7$ squares, so we perform a guillotine cut along $\ell'_v$.}
    \label{fig:cut-deeper-level}
\end{figure}

It remains to deal with the cases where $\max(V,H)=1$ but
the conditions of \Cref{lem:intersect-atmost-4} are not satisfied.
\begin{lemma}
\label{lem:intersect-5}
    Assume that $\max(V,H)=1$. Suppose that $\ell_r$ intersects at least five squares, 
    and $\ell_t$ intersects at least five squares in case $\Ibottom \neq \emptyset$. 
    Then $T(m)$ satisfies (P1) or (P2).      
\end{lemma}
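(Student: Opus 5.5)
The plan is to show that the hypotheses force a tall vertical stack of squares along $\ell_v$ above $\ell_h$, and a wide horizontal stack along $\ell_h$ to the right of $\ell_v$. Both stacks should then be split by a level-$(i+2)$ grid line, where $i$ is the level of $\sigma^*$. The needed count is then supplied by \Cref{lem:guillotine-properties} applied to a child of $\sigma^*$, or by a direct cut.

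\textbf{Step 1 (structure of the squares crossing $\ell_r$ and $\ell_t$).} $\Sleft$ is the only square completely to the left of $\ell_v$, and $\ell_r$ lies to the left of $\ell_v$. So every square other than $\Sleft$ that meets $\ell_r$ must also cross $\ell_v$; otherwise it would be a second member of $\Ileft$. The same reasoning shows that every square meeting $\ell_t$ crosses $\ell_h$. At most one square contains the common point of $\ell_v$ and $\ell_h$. Moreover, a square crossing $\ell_v$ that lies completely below $\ell_h$ belongs to $\Ibottom$, and $|\Ibottom|\le 1$. Hence at least three of the at least five squares meeting $\ell_r$ cross $\ell_v$ and lie completely above $\ell_h$. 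If $\Ibottom\neq\emptyset$, then symmetrically at least three squares cross $\ell_h$ and lie completely to the right of $\ell_v$. If $\Ibottom=\emptyset$, I would handle this subcase separately; there $H=0$ and only the vertical stack is available.

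\textbf{Step 2 (choosing a deeper cut).} The squares from Step~1 are pairwise interior-disjoint and all cross $\ell_v$, so their $y$-projections are disjoint intervals. By \Cref{prop:level-property} each has side length larger than $\size(\sigma^*)/8$, since $r=4$. Therefore they form a vertical stack inside the top half of $\sigma^*$, which has height $\size(\sigma^*)/2$, and this half can contain only a few of them. I would take $\sigma'$ to be the top-right (or top-left) child of $\sigma^*$ and $\ell'_h$ its horizontal midline, which is a level-$(i+2)$ line. A counting argument on the stack should give at least two stack squares completely above $\ell'_h$ and at least two completely below it. The bound $|\Iver|\le 9$ from \Cref{lem:intersect-atmost-9} is what limits the size of the stack. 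The bottom side of $\ell'_h$ then also picks up the at most one square of $\Ibottom$. The analogous argument applies to the right-hand stack with a vertical line $\ell'_v$.

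\textbf{Step 3 (verifying the hypotheses of \Cref{lem:guillotine-properties}).} I would apply \Cref{lem:guillotine-properties} to $\sigma'$, or, if its second condition fails, cut directly along $\ell'_h$ across all of $\sigma^*$. This requires counting the squares meeting the extended line $\ell'_h$ inside $\sigma^*$. Applying the interval-packing argument of \Cref{lem:intersect-atmost-9} to a segment of length $\size(\sigma^*)$ with side lengths larger than $\size(\sigma^*)/8$ gives about $7$ interior squares plus boundary ones. This must be sharpened using the fact that the stack squares already occupy part of the line. The goal is at most $7$ cut squares with at least $2$ on each side, giving (P1), or at most $9$ with at least $3$ on each side, giving (P2).

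The main obstacle is this last count. The second condition of \Cref{lem:guillotine-properties} can fail for a level-$(i+2)$ line, because squares in other children of $\sigma^*$ may cross its extension. So I expect a short case analysis on which of the two stacks (vertical or horizontal) is split by its deeper midline with fewer crossings. I would first try to exploit the fact that the at least five squares on $\ell_r$ and the at least five on $\ell_t$ together occupy most of $\ell_v$ and $\ell_h$. That should leave little room for extra squares crossing $\ell'_h$ or $\ell'_v$, which is what makes the bound of at most $7$ or at most $9$ achievable.
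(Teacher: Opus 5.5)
Your proposal has two genuine gaps. Step~2's count does not follow from what you extract in Step~1, because you never use the hypothesis on $\ell_t$. And Step~3 is left open.

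\textbf{Step~1 discards exactly the information the lemma needs.} By allowing $\Sbottom$ to be one of the squares met by $\ell_r$, you only get three stack squares completely above $\ell_h$. The count promised in Step~2 does not follow from that. As far as Step~1 is concerned, the five squares on $\ell_r$ could be $\Sbottom$, the square containing $\ell_h\cap\ell_v$, one square between $\ell_h$ and $\ell'_h$, one crossing $\ell'_h$, and one above it. That leaves a single square completely above $\ell'_h$. The bound $|\Iver|\le 9$ is an upper bound, so it cannot give a lower bound on either side.

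The missing idea is to use the hypothesis on $\ell_t$ to exclude $\Sbottom$. Suppose $\ell_r$ met $\Sbottom$. Then $\Sbottom$ crosses $\ell_v$, and $\Sleft$ lies above $\Sbottom$ (otherwise $\Sleft\in\Ibottom$), so $\ell_t$ misses $\Sleft$. At most four squares of $\Ihor$ lie completely right of $\ell_v$, so $\ell_t$ must meet the square containing $\ell_h\cap\ell_v$. But that square also crosses $\ell_v$ and lies above $\Sbottom$, so $\ell_t$ misses it, a contradiction.

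Hence every square on $\ell_r$ lies in $\Iver\setminus\Ibottom$. This set has at most five members: three inside the upper half of $\ell_v\cap\sigma^*$, one at its top endpoint, and one at the centre. So $\ell_r$ meets exactly these five. Each quarter-segment of $\ell_v\cap\sigma^*$ above $\ell_h$ fully contains at most one of them. Therefore some square $S'$ crosses $\ell'_h$, and the other four split as two completely above and two completely below. This works uniformly, including when $\Ibottom=\emptyset$, and no case analysis between the two stacks is needed.

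\textbf{Step~3, which you flag as the main obstacle, is left open, and your fallback does not work as sketched.} Squares crossing the level-$(i+2)$ line $\ell'_h$ are only guaranteed side length $>\size(\sigma^*)/16$, not $\size(\sigma^*)/8$. So up to nine of them can cross $\ell'_h\cap\sigma'$ by \Cref{lem:intersect-atmost-9}, and a crossing of $\Sleft$ would make ten. The fallback therefore hinges on the same fact as the main route.

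What you need is not a sharper crossing count but a direct check of the second condition of \Cref{lem:guillotine-properties} for the top-right child $\sigma'$. The top-left child fails this condition. Every square outside $\Ileft\cup\Ibottom$ meets $\sigma'$, so only $\Sleft$ and $\Sbottom$ matter. The line $\ell'_h$ misses $\Sbottom$ and $\ell'_v$ misses $\Sleft$ trivially.

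If $\ell'_h$ met $\Sleft$, then $S'$ and $\Sleft$ would overlap in $y$-projection. So $S'$ would lie to the right of $\Sleft$ and $\ell_r$ would miss it, contradicting that $\ell_r$ meets all five. Symmetrically, $\ell_t$ meets exactly the five squares of $\Ihor\setminus\{\Sleft\}$, so $\ell'_v$ misses $\Sbottom$. With both conditions verified, \Cref{lem:guillotine-properties} applies to $\sigma'$ directly and supplies all the crossing counts.
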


We prove \Cref{lem:intersect-5} through a series of claims.
First note that since $\Sleft$ is the only square in $\Ileft$, the line $\ell_r$ can intersect
only squares in $\Iver$.

\begin{lemma} 
\label{lem:claim-1}
There are at most five squares in $\Iver\setminus \Ibottom$.
\end{lemma}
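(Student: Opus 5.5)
The plan is to redo the counting argument of \Cref{lem:intersect-atmost-9}, but only on the upper half of $\ell_v$ inside $\sigma^*$. Let $\sigma^*$ be a level-$i$ cell and let $L := 4M_\theta/2^i$ be its side length (recall $r=4$). Let $y_h$ be the $y$-coordinate of $\ell_h$ and $y_{\mathrm{top}}$ that of the top edge of $\sigma^*$. Let $s$ be the segment of $\ell_v$ between these heights, so $|s| = L/2 = 4\cdot M_\theta/2^{i+1}$.

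\emph{Step 1: every square of $\Iver\setminus\Ibottom$ meets the interior of $s$.} Take $S\in\Iver\setminus\Ibottom$ with open $y$-projection $(y_1,y_2)$.
\begin{itemize}
\item Since $S\notin\Ibottom$, the square does not lie completely below $\ell_h$, so $y_2>y_h$.
\item By the definition of $\sigma^*$, every square of $\SM''$ intersects $\sigma^*$, so $y_1<y_{\mathrm{top}}$.
\item Since $S$ is cut by $\ell_v$, and $\ell_v$ restricted to the $y$-range of $\sigma^*$ lies inside $\sigma^*$, the set $S\cap\ell_v$ is the open interval $(y_1,y_2)$ on $\ell_v$.
\end{itemize}
Together these give $(y_1,y_2)\cap(y_h,y_{\mathrm{top}})\neq\emptyset$.

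\emph{Step 2: size and disjointness.} Each such $S$ intersects $\ell_v$, which is part of the boundary of children of $\sigma^*$ (level-$(i+1)$ cells). By \Cref{prop:level-property}, $\lev(S)\le i$, so the side length of $S$ exceeds $M_\theta/2^{i+1} = |s|/4$. The squares are pairwise disjoint and all cross the common vertical line $\ell_v$, so their $y$-intervals on $\ell_v$ are pairwise disjoint.

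\emph{Step 3: count.} Among disjoint open intervals meeting $(y_h,y_{\mathrm{top}})$:
\begin{itemize}
\item at most one contains $y_h$ in its interior or extends below it;
\item at most one contains $y_{\mathrm{top}}$ in its interior or extends above it;
\item every other interval lies inside $[y_h,y_{\mathrm{top}}]$.
\end{itemize}
The intervals inside $[y_h,y_{\mathrm{top}}]$ have lengths strictly greater than $|s|/4$ and are disjoint, so there are at most three of them. Hence $|\Iver\setminus\Ibottom|\le 3+2=5$.

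The only delicate point is Step 1. One must check that a square not in $\Ibottom$ actually reaches the open upper half of $s$, and does not merely touch $\ell_h$, which is excluded because the squares are open. One must also check that squares extending outside $\sigma^*$ are handled by the two endpoint slots. Both follow from openness and from the fact that every square meets $\sigma^*$.
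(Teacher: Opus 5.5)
Your proof is correct and follows the same route as the paper: restrict to the upper half of $\ell_v\cap\sigma^*$ and use that each square there is longer than a quarter of that half-segment (via the level bound with $r=4$). The count is then at most three interior intervals plus two endpoint slots, as in the proof of \Cref{lem:intersect-atmost-9}. Your Step~1 spells out, using openness and the fact that every square meets $\sigma^*$, the claim the paper only asserts: that every square of $\Iver\setminus\Ibottom$ meets that half-segment.
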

\begin{proof}
Every square in $\Iver\setminus \Ibottom$ intersects the top half of the segment
$\ell_v\cap\sigma^*$. Moreover, since $r=4$, every such square has side length strictly
greater than one quarter of the length of this half-segment. Hence, by the same counting
argument as in the proof of \Cref{lem:intersect-atmost-9}, there can be at most five such
squares.
\end{proof}

\begin{lemma}
\label{lem:claim-2}
If $\Ibottom\neq\emptyset$, then $\ell_r$ does not intersect $\Sbottom$.
\end{lemma}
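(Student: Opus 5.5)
We argue by contradiction. Suppose $\Ibottom\neq\emptyset$ and $\ell_r$ intersects (the interior of) $\Sbottom$. If $\Sbottom=\Sleft$, then $\ell_r$ is the line through the right edge of this open square and does not meet its interior, so there is nothing to prove. Hence assume $\Sbottom\neq\Sleft$. Since $|\Ileft|\le 1$ and $|\Ibottom|\le 1$, we get $\Sbottom\notin\Ileft$ and $\Sleft\notin\Ibottom$.

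The first step is to pin down the positions of $\Sleft$ and $\Sbottom$. Write $x_r$ and $x_v$ for the $x$-coordinates of $\ell_r$ and $\ell_v$, so $x_r\le x_v$.
\begin{itemize}
\item Since $\Sbottom$ meets $\ell_r$ but does not lie completely left of $\ell_v$, its $x$-projection contains $x_r$ in its interior and extends beyond $x_v$. So $\Sbottom\in\Iver$, and its $x$-projection covers the whole range $[x_r,x_v]$.
\item The $x$-projections of $\Sleft$ and $\Sbottom$ therefore overlap. By disjointness their $y$-projections must be interior-disjoint.
\item $\Sbottom$ lies below $\ell_h$, while $\Sleft\notin\Ibottom$ has its top above $\ell_h$. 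Hence $\Sleft$ lies entirely above the line $\ell_t$ through the top edge of $\Sbottom$.
\end{itemize}

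The second step uses the hypothesis of \Cref{lem:intersect-5} that $\ell_r$ intersects at least five squares. As noted just before \Cref{lem:claim-1}, all of these lie in $\Iver$, so each crosses both $x_r$ and $x_v$. On the strip between these two lines, $\Sbottom$ occupies everything just below $\ell_t$. A square crossing that strip must therefore be $y$-disjoint from $\Sbottom$, so it lies above $\ell_t$ unless it is $\Sbottom$ itself. Likewise every square crossing $\ell_r$ must be $y$-disjoint from $\Sleft$.

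So we obtain at least four squares of $\Iver\setminus\Ibottom$, and possibly five, depending on whether $\Sbottom$ is counted among those crossing $\ell_r$. Together with $\Sleft$, these squares have pairwise interior-disjoint $y$-intervals, all lying above $\ell_t$, and at least one of them (namely $\Sleft$) pokes above $\ell_h$.

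The last and hardest step is the packing count that yields the contradiction. I would redo the argument of \Cref{lem:claim-1} and \Cref{lem:intersect-atmost-9} on the vertical segment of $\ell_v$ from $\ell_t$ to the top of $\sigma^*$. Each square in $\Iver$ has side strictly greater than $L/8$, where $L=\size(\sigma^*)$, and at most one of them can cross the top boundary of $\sigma^*$. The region below $\ell_h$ is partly blocked by $\Sbottom$, and $\Sleft$ consumes part of the top half on the $y$-axis.

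The main obstacle is that $\Sleft$ need not be large, since it lies inside a child cell and its level is not controlled. The count therefore cannot simply charge $\Sleft$ a length of $L/8$. My first attempt is to exploit the full hypothesis of \Cref{lem:intersect-5} more carefully:
\begin{itemize}
\item $\ell_t$ also meets at least five squares. These all lie in $\Ihor$ and avoid the $x$-range of $\Sbottom$, which already covers $[x_r,x_v]$.
\item Combining this with \Cref{lem:intersect-atmost-9} for $\ell_h$ should exceed the budget of at most nine squares per midline.
\end{itemize}
If that falls short, I would fall back on a direct length count along $\ell_v$ above $\ell_t$, using the fact that the squares crossing $\ell_r$ must fit around $\Sleft$'s $y$-interval on the $\ell_r$ side.
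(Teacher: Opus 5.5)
\textbf{There is a genuine gap: the contradiction is never derived.} Your setup is correct and matches the paper's opening. Under the contradiction hypothesis, $\Sbottom\in\Iver$, its $x$-range covers $[x_r,x_v]$ and extends past $\ell_v$, and $\Sleft$ is forced above $\ell_t$, so $\ell_t$ misses $\Sleft$.

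Your second and third steps, and your fallback, rely on the hypothesis that $\ell_r$ meets five squares, plus a length count along $\ell_v$ above $\ell_t$. That route cannot succeed, because this hypothesis is compatible with $\ell_r$ crossing $\Sbottom$. For instance, take $\sigma^*=[0,8]^2$ with $r=4$, so every square meeting a midline has side $>1$, and use these squares:
$\Sleft=[3.6,3.8]\times[4.5,4.7]$; $\Sbottom=[3.5,5]\times[0.1,1.6]$; the square $[3.6,5]\times[3.1,4.5]$ straddling $\ell_h$; three squares of side $1.09$ stacked on $[3.6,4.69]$ from $y=4.7$ to $y=7.97$; the square $[3.6,6]\times[7.97,10.37]$ across the top of $\sigma^*$; and a few small squares in $[6,8]\times[6,8]$.
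This satisfies the standing assumptions (surviving squares, minimality of $\sigma^*$, $\max(V,H)=1$, $m>10$). Here $\ell_r$ meets six squares including $\Sbottom$, and only the $\ell_t$ hypothesis fails, since $\ell_t$ meets nothing. So the $\ell_t$ hypothesis is indispensable.

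Your ``first attempt'' is the paper's route, and you already have its key observation: the at least five squares crossing $\ell_t$ lie in $\Ihor$ and are $x$-disjoint from $\Sbottom$. But the closing count you propose fails, because five squares do not exceed the nine-per-midline budget of \Cref{lem:intersect-atmost-9}. What is missing is a half-segment count:
\begin{enumerate}
\item A square crossing $\ell_t$ cannot lie to the left of $\Sbottom$. It would then be completely left of $\ell_v$, hence equal to $\Sleft$, which lies above $\ell_t$.
\item Since $\Sbottom$ extends past $\ell_v$, every such square therefore lies completely to the right of $\ell_v$.
\item All these squares straddle $\ell_h$, so their $x$-intervals are interior-disjoint. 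Each has length $>L/8$ and meets the right half of $\ell_h\cap\sigma^*$, which has length $L/2$.
\item At most three can fit inside that half and at most one can cross the right side of $\sigma^*$. That gives at most four squares, contradicting the at least five.
\end{enumerate}
The paper phrases the same argument as follows: some $\Scent\in\Iver\cap\Ihor$ must cross $\ell_t$, yet $\Sbottom\in\Iver$ forces $\Scent$ to lie above $\Sbottom$, so $\ell_t$ cannot cross it.
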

\begin{proof}
Assume for the sake of contradiction that $\ell_r$ intersects $\Sbottom$.
Then $\Sbottom\in\Iver$, since $\max(V,H)=1$.

Since $\ell_r$ intersects $\Sbottom$, the line $\ell_t$ cannot intersect $\Sleft$.

Now $\ell_t$ intersects at least five squares. Since $\Sleft$ is the only square completely
to the left of $\ell_v$, none of these intersected squares can lie completely to the left of
$\ell_v$. Hence there must exist a square
$\Scent\in\Iver\cap\Ihor$ that is intersected by $\ell_t$.
This is because at most three squares can be completely contained in the right half of the segment $\ell_h \cap \sigma^*$.

Since $\Sbottom\in\Iver$, the square $\Sbottom$ lies below $\Scent$.
Hence $\ell_t$ cannot intersect $\Scent$, a contradiction.
\end{proof}

By Lemmas \ref{lem:claim-1} and \ref{lem:claim-2}, $\ell_r$ intersects exactly the five squares in
$\Iver\setminus\Ibottom$.
Analogously, if $\Ibottom\neq\emptyset$, then $\ell_t$ intersects exactly the five squares
in $\Ihor\setminus\{\Sleft\}$.
Let $\sigma'$ be the top-right child of $\sigma^*$, and let
$\ell'_h,\ell'_v$ denote the horizontal and vertical grid lines that partition
$\sigma'$ into its four children.

\begin{lemma}
\label{lem:claim-3}
The line $\ell'_h$ intersects one of the five squares in
$\Iver\setminus\Ibottom$, and does not intersect $\Sleft$.
\end{lemma}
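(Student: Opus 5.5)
The plan is to work entirely with vertical projections, that is, intervals on the $y$-axis. Let $L=4M_\theta/2^i$ be the side length of $\sigma^*$, where $i$ is its level. Place coordinates so that $\ell_h$ is at height $L/2$ and the top edge of $\sigma^*$ is at height $L$. Then $\ell'_h$, the horizontal midline of the top-right child $\sigma'$, lies at height $3L/4$.

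Let $S_1,\dots,S_5$ be the five squares of $\Iver\setminus\Ibottom$. I first show that their $y$-projections are pairwise interior-disjoint, that each has length $>L/8$, and that each meets the open interval $(L/2,L)$.
\begin{itemize}
\item Each $S_j$ intersects $\ell_v$, so its $x$-projection contains the $x$-coordinate of $\ell_v$ in its interior. Two disjoint open squares whose $x$-projections overlap in an open set must have interior-disjoint $y$-projections.
\item By \Cref{prop:level-property}, as in \Cref{lem:claim-1}, each $S_j$ has level at most $i$, so its side length exceeds $M_\theta/2^{i+1}=L/8$.
\item $S_j$ intersects $\sigma^*$, so it reaches below height $L$. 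It is not in $\Ibottom$, so it reaches above $L/2$.
\end{itemize}

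For the first assertion, suppose no $S_j$ has height $3L/4$ in the interior of its $y$-projection. Then each $S_j$ meets exactly one of $(L/2,3L/4)$ and $(3L/4,L)$, and stays within one side of $3L/4$. Consider the intervals meeting $(L/2,3L/4)$. At most one of them can contain $L/2$ in its interior. The others lie inside the length-$L/4$ segment $[L/2,3L/4]$ with lengths $>L/8$, so there is at most one of them. This gives at most two intervals, and symmetrically at most two meet $(3L/4,L)$. That is four in total, contradicting that there are five. Hence some $S_j$ is intersected by $\ell'_h$.

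For the second assertion, use the fact from before \Cref{lem:claim-3} (via Lemmas~\ref{lem:claim-1} and~\ref{lem:claim-2}) that $\ell_r$ intersects all five squares $S_j$. Let $x_r$ be the $x$-coordinate of the right edge of $\Sleft$. The $x$-projection of $\Sleft$ is $(x_r-s,x_r)$, and the $x$-projection of each $S_j$ contains $x_r$ in its interior, so the two $x$-projections overlap in an open set. Since $\Sleft$ and $S_j$ are disjoint, their $y$-projections are interior-disjoint. In particular, $\Sleft$ is $y$-disjoint from the square $S_j$ whose $y$-projection contains $3L/4$ in its interior. Therefore $3L/4$ is not interior to the $y$-projection of $\Sleft$, and $\ell'_h$ does not intersect $\Sleft$.

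The main obstacle is the counting step in the first part: it has to use correctly that each $S_j$ merely intersects the half-segment, possibly sticking out past $L/2$ or past $L$. Handling this by allowing one boundary-crossing interval at each end is the key point.
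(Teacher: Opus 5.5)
Your proposal is correct and follows the paper's proof. It uses the same four-way count along $\ell_v$: at most one square lies completely in each of $[L/2,3L/4]$ and $[3L/4,L]$, and at most one square contains each of the heights $L/2$ and $L$ in its interior, so one of the five squares must cross $\ell'_h$. The second assertion is the same contradiction with $\ell_r$ meeting all five squares, and you usefully make explicit, via the disjoint-projection argument, the step the paper only asserts.
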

\begin{proof}
Since there are five squares in $\Iver\setminus\Ibottom$, and the portion of $\ell_v \cap \sigma^*$ above $\ell'_h$ has length equal to one quarter of the side length of $\sigma^*$, whereas every surviving square intersecting this portion has side length strictly larger than one eighth of the side length of $\sigma^*$ (since $r=4$), at most one square can be completely contained in this portion. An identical argument applies to the portion of $\ell_v \cap \sigma^*$ between $\ell_h$ and $\ell'_h$. Further, there can be at most $1$ square containing the upper intersection point of $\ell_h$ with $\sigma^*$ and $1$ square that contains the intersection point of $\ell_h$ and $\ell_v$ in its interior. But since there are exactly $5$ squares in $\Iver\setminus \Ibottom$, it must be the case that one of these $5$ squares must intersect
$\ell'_h$. 

Finally, if $\ell'_h$ intersects $\Sleft$, then the line $\ell_r$ cannot intersect the square in $\Iver\setminus \Ibottom$ that is intersected by $\ell'_h$, contradicting the fact that $\ell_r$ intersects exactly the five squares in
$\Iver\setminus\Ibottom$. Therefore $\ell'_h$ does not intersect $\Sleft$.
\end{proof}

Similarly, it follows that if $\Ibottom\neq\emptyset$, then $\ell'_v$ does not intersect $\Sbottom$.
Let $S'\in\Iver$ denote the square intersected by $\ell'_h$.

\begin{lemma}
The conditions of \Cref{lem:guillotine-properties} are satisfied for the cell $\sigma'$.
\end{lemma}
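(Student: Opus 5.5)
The plan is to verify the two hypotheses of \Cref{lem:guillotine-properties} for $\sigma'$ separately. For the first I will use $\ell'_h$, and the key input is that $\ell_v$ meets exactly five squares of $\Iver\setminus\Ibottom$ (\Cref{lem:claim-1,lem:claim-2}), one of which, $S'$, crosses $\ell'_h$ (\Cref{lem:claim-3}). For the second I will use the facts that every square of $\SM''$ intersects $\sigma^*$, and that $\ell'_h$ misses $\Sleft$ and $\ell'_v$ misses $\Sbottom$.

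\emph{First condition.} Let $L$ be the side length of $\sigma^*$. The five squares of $\Iver\setminus\Ibottom$ are disjoint and all meet $\ell_v$ in the upper half of $\sigma^*$, so they induce interior-disjoint intervals on $\ell_v$ that can be ordered from bottom to top.
\begin{itemize}
\item Every such square other than $S'$ lies entirely above or entirely below $S'$ along $\ell_v$. Since $S'$ crosses $\ell'_h$, the squares above $S'$ lie completely above $\ell'_h$, and those below $S'$ lie completely below $\ell'_h$.
\item Consider the squares above $S'$. Their intervals meet the segment of $\ell_v\cap\sigma^*$ above $\ell'_h$, which has length $L/4$. Each has side length greater than $L/8$ (as in \Cref{lem:claim-3}, using $r=4$). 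Hence at most one lies strictly inside this segment, and at most one more contains the top endpoint. So at most two squares lie above $S'$.
\item By the same argument on the segment between $\ell_h$ and $\ell'_h$ (at most one inside it, plus at most one containing the point $\ell_h\cap\ell_v$), at most two squares lie below $S'$.
\end{itemize}
Since there are exactly five squares and one of them is $S'$, both bounds must be tight. So exactly two squares lie completely above $\ell'_h$ and two completely below, which gives the first condition with the line $\ell'_h$. This counting step is the one I expect to need the most care: the strictness of the size bound must exclude a square that just fits, and I must check that the square containing the top endpoint really lies above $\ell'_h$.

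\emph{Second condition.} Let $S\in\SM''$ be intersected by $\ell'_h$. By the definition of $\sigma^*$, the square $S$ intersects $\sigma^*$. If $S$ does not intersect $\sigma'$, then $S$ lies completely to the left of $\ell_v$ or completely below $\ell_h$.
\begin{itemize}
\item It cannot lie completely below $\ell_h$, because it meets $\ell'_h$, which lies above $\ell_h$.
\item If it lies completely to the left of $\ell_v$, then $S\in\Ileft$, so $S=\Sleft$. This contradicts \Cref{lem:claim-3}, which states that $\ell'_h$ does not intersect $\Sleft$.
\end{itemize}
Symmetrically, let $S$ be intersected by $\ell'_v$ but not meet $\sigma'$. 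It cannot lie left of $\ell_v$, since it meets $\ell'_v$, so it lies completely below $\ell_h$. Then $S\in\Ibottom$, which is possible only if $\Ibottom\neq\emptyset$, in which case $S=\Sbottom$. This contradicts the symmetric statement that $\ell'_v$ misses $\Sbottom$.

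Thus both hypotheses of \Cref{lem:guillotine-properties} hold for $\sigma'$, and this lemma then yields (P1) or (P2), completing \Cref{lem:intersect-5}.
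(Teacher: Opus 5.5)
Your proposal is correct and follows essentially the same route as the paper. For the first condition, you split the four squares of $\Iver\setminus(\Ibottom\cup\{S'\})$ along $\ell_v$ into at most two above and at most two below $\ell'_h$. For the second, you observe that any square missing $\sigma'$ must be $\Sleft$ or $\Sbottom$, which \Cref{lem:claim-3} and its symmetric version rule out; your version simply spells out the endpoint-containing squares that the paper's terse counting leaves implicit.
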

\begin{proof}
First note that, at most one square in $\Iver\setminus\Ibottom$ can be completely contained in the
portion of $\ell_v\cap\sigma^*$ above $\ell'_h$, or in the portion between
$\ell_h$ and $\ell'_h$. Hence, among the four squares in
$\Iver\setminus(\Ibottom\cup\{S'\})$, two lie completely above
$\ell'_h$ and two lie completely below $\ell'_h$. Thus the line $\ell'_h$ has at least two squares lying completely on either side.
Moreover, the only two squares lying outside $\sigma'$ are $\Sleft$ and $\Sbottom$ (whenever $\Ibottom\neq \emptyset$), and by Lemma \ref{lem:claim-3}, $\ell'_h$ does not intersect $\Sleft$, and similarly
$\ell'_v$ does not intersect $\Sbottom$.
\end{proof}

Therefore, by \Cref{lem:guillotine-properties}, either (P1) or (P2) is satisfied, and \Cref{lem:intersect-5} stands proved.

From Lemmas \ref{lem:atleast-2-both-side}, \ref{lem:intersect-atmost-4} and \ref{lem:intersect-5}, we obtain that one of the properties (P1), (P2) or (P3) is satisfied in each case. Therefore, the function $T$ is \good and \Cref{lem:square-guillotinable} stands proved. Finally, we are now ready to prove \Cref{thm:sq-unw}.

\begin{proof}[Proof of \Cref{thm:sq-unw}]
    Applying \Cref{lem:surviving-set-size}, we obtain a surviving set $\SM'\subseteq \SM$ of size 
    $$|\SM'|\ge \Phi(4)|\SM|=\left(1-\tfrac{29}{128\ln 2}\right)n.$$
    By \Cref{lem:good-function-lb} and \Cref{lem:square-guillotinable}, $\SM'$ has at least $|\SM'|/5$ separable members. Therefore, overall we obtain that the set $\SM$ has at least 
    \[ \tfrac{1}{5}\left(1-\tfrac{29}{128\ln 2}\right)n>0.1346n\]
    separable members.
\end{proof}

\subsection{Weighted case of squares}
\label{sec:weighted-squares}
Our technique of random scaling of the hierarchical grid also directly leads to an improved separability factor for the weighted case of axis-aligned squares, improving upon the result of \cite{chalermsook2025improved}. First, it is easy to see that \Cref{lem:surviving-set-size} holds for the weighted case as well. In the proof of the lemma, we established that each object $K$ survives with probability at least $\Phi(r)$. Linearity of expectation, therefore, directly gives the following lemma.

\begin{lemma}
\label{lem:wtsq}
    Given a set of weighted squares $\SM$ with weights $w: \SM \rightarrow \mathbb{R}_{>0}$, there exists a choice of $\theta$ and the horizontal and vertical shifts $a,b$ such that the surviving set has weight at least $\Phi(r)w(\SM)$, where $\Phi(r) := 1-\tfrac{1}{r\ln 2}+\tfrac{3}{8r^2\ln 2}$.
\end{lemma}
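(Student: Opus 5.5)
The plan is to reuse the proof of \Cref{lem:surviving-set-size} verbatim up to the point where linearity of expectation is applied, and then replace the cardinality count by a weighted count. First, fix a square $S\in\SM$ with size $s$ and level $i=\lfloor \log_2(M_\theta/s)\rfloor$. As in that proof, the fractional part $u$ of $\theta+\log_2(M/s)$ is uniform on $[0,1)$. Conditioned on $u$, the probability that a vertical level-$i$ line meets $S$ is at most $2^{-u}/r$, and the same holds for horizontal lines. These two events are independent because $a$ and $b$ are chosen independently. Hence $\Pr[S \text{ survives}]\ge \int_0^1(1-2^{-u}/r)^2\,du=\Phi(r)$. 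This per-object bound does not depend on the weights at all, so it carries over unchanged.

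Next, let $X_S$ be the indicator that $S$ survives, and write the weight of the surviving set as $W=\sum_{S\in\SM} w(S)X_S$. Linearity of expectation gives
\[ \mathbb{E}[W]=\sum_{S\in\SM} w(S)\Pr[S\text{ survives}]\ge \Phi(r)\,w(\SM). \]
Since a random variable attains a value at least its expectation with positive probability, some choice of $(\theta,a,b)$ yields $W\ge \Phi(r)w(\SM)$. This is the claimed statement.

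There is no real obstacle here. The only point that needs care is that the survival bound must hold for each object individually, not just on average over objects, so that arbitrary nonnegative weights can be pulled through the expectation. The proof of \Cref{lem:surviving-set-size} does establish exactly this per-object bound. The weights are positive reals, but only nonnegativity is used.
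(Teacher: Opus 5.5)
Your proposal is correct and follows essentially the same route as the paper. The paper's proof also observes that the proof of \Cref{lem:surviving-set-size} gives each object survival probability at least $\Phi(r)$, independently of weights. It then applies linearity of expectation to the weighted sum.
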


We now fix $r=2$ and let $\SM'\subseteq \SM$ be the surviving set obtained by the above lemma. In \cite{chalermsook2025improved}, the notion of \emph{rectangular cells} was defined as follows.

\begin{definition}[Rectangular cells~\cite{chalermsook2025improved}]
    Any two consecutive horizontal and vertical grid lines of level $i$ and $j$ form a unique rectangle called a level-$(i,j)$ cell.
\end{definition}

We now introduce the notions of \emph{half cells} and \emph{fitting cells} of the squares, similar to \cite{chalermsook2025improved}. \emph{Square cells} are defined to be all level-$(i,i)$ cells for some $i$ (the usual notion of grid cells used so far), and \emph{half cells} are defined to be all level-$(i,i+1)$ cells for some $i$. Let $\LL$ be the collection of all square and half cells. Then it is easy to see that if two cells in $\LL$ intersect, then one must contain the other.  
For a surviving square $S\in \SM'$, the \emph{fitting cell} $\mathrm{FC}(S)$ is defined as the smallest cell of $\LL$ that contains $S$. In \cite{chalermsook2025improved}, the notion of \emph{twin squares} was defined as follows.

\begin{definition}[Twin squares~\cite{chalermsook2025improved}]
    Two squares $S_1,S_2\in \SM'$ are called \emph{twin squares} if they have the same fitting cell that is a level-$(i,i)$ square cell, and further $S_1$ and $S_2$ are separated by a level-$(i+1)$ grid line.
\end{definition}

The \emph{conflict graph} of $\SM'$ is defined as the graph that has a vertex corresponding to each square in $\SM'$, and two vertices are adjacent if their corresponding squares $S_1$ and $S_2$ satisfy either of the following conditions:
\begin{itemize}
    \item either $\mathrm{FC}(S_1)=\mathrm{FC}(S_2)$ and $S_1$ and $S_2$ are not twin squares, or
    \item $\mathrm{FC}(S_1)$ contains $\mathrm{FC}(S_2)$ and $S_1$ intersects $\mathrm{FC}(S_2)$, or vice versa.
\end{itemize}

Let $G$ be the conflict graph of $\SM'$ defined above. The following property of $G$ was shown in \cite{chalermsook2025improved}.

\begin{lemma}[\cite{chalermsook2025improved}]
\label{lem:conflict-graph-coloring}
    The graph $G$ is $8$-colorable and any independent set of $G$ is separable.
\end{lemma}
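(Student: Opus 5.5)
The statement to prove is \Cref{lem:conflict-graph-coloring}: the conflict graph $G$ of $\SM'$ (with $r=2$) is $8$-colorable, and every independent set of $G$ is separable. I would prove the two claims separately.

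\textbf{Coloring.} The plan is to show that $G$ is $7$-degenerate. I would orient each edge from the square with the smaller fitting cell to the one with the larger fitting cell, breaking ties arbitrarily for equal fitting cells. Processing squares in order of decreasing fitting cell, each square then has at most seven neighbours whose fitting cell contains or equals its own, and greedy coloring uses at most $8$ colors.

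To bound these neighbours of $S$ with $\mathrm{FC}(S)=\tau$, I would count three kinds of squares.
\begin{itemize}
\item Squares $S'$ with $\mathrm{FC}(S')=\tau$. Since $r=2$, a square whose fitting cell is $\tau$ has size comparable to $\tau$, so only a bounded number of them fit in $\tau$. Non-twin pairs have to be counted carefully.
\item Squares $S'$ with $\mathrm{FC}(S')\supsetneq\tau$ that intersect $\tau$. Such a square crosses the line of its own fitting cell that splits it into children, or else the child of its fitting cell would contain it. That line bounds $\tau$ or one of $\tau$'s ancestors.
\item The same kind of square at each ancestor scale. Survival plus $r=2$ means a square cut by a level-$j$ line has size at least comparable to the cell at that level. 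So at each ancestor level only $O(1)$ squares of that level can intersect $\tau$, and squares of much larger level cannot reach $\tau$ without being big.
\end{itemize}
The hard part is turning $O(1)$ into exactly $7$, and the half cells are what make this tight. I would first try a case analysis on whether $\tau$ is a square cell or a half cell, using disjointness of the squares along the bounding lines of $\tau$.

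\textbf{Separability.} For an independent set $I$, I would induct top-down on the laminar family $\LL$, recursing into the cells of $\LL$ from each cell $\tau$. Inside $\tau$, the squares of $I$ with fitting cell $\tau$ must be twins or a single square, since non-twin pairs with the same fitting cell are adjacent in $G$.
\begin{itemize}
\item By independence, no square of $I$ with fitting cell $\tau$ intersects a smaller cell containing another square of $I$. Hence the child-splitting grid line of $\tau$ (the level-$(i+1)$ line, or the line splitting a half cell) cuts no smaller square of $I$.
\item It cuts at most the squares of $I$ whose fitting cell is $\tau$. Those are either one square, or a twin pair, which that line separates.
\item Cutting along lines avoiding that square isolates the subcells. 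When a square with fitting cell $\tau$ straddles the line, I would first cut it off with a line along its edge and then recurse.
\end{itemize}

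I expect the main obstacle to be the exact constant $7$ in the degeneracy bound. If it resists, since this lemma is cited from \cite{chalermsook2025improved}, I would simply invoke their argument.
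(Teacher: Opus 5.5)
The paper gives no proof of this lemma. It imports it from \cite{chalermsook2025improved} and only remarks that replacing $M$ by $M_\theta$ rescales the hierarchy, leaving the structural argument unchanged. Falling back on that citation is therefore what the paper does. The self-contained route you sketch, however, fails in both halves.

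\emph{Colouring.} Your key claim is that every square has at most seven neighbours whose fitting cell contains or equals its own. This is false, even counting only strictly larger fitting cells. Use units in which level-$j$ cells have side $8$; with $r=2$, squares of level $j$, $j-1$, $j-2$ then have sizes in $(2,4]$, $(4,8]$, $(8,16]$. Let $[-8,24]\times[-16,16]$ be a level-$(j-2)$ cell. Then $\tau=[0,4]\times[0,8]$ is the left half cell of the level-$j$ cell $[0,8]^2$, and $S=(0.6,3.2)\times(2.8,5.4)$ has $\mathrm{FC}(S)=\tau$. Now add the following squares:
$(3.5,5.7)\times(0.5,2.7)$, $(3.4,5.6)\times(2.75,4.95)$ and $(3.5,5.7)\times(5,7.2)$, with fitting cell $[0,8]^2$;
$(-4.3,0.3)\times(1.5,6.1)$ and $(-4.2,0.2)\times(7,11.4)$, with fitting cell $[-8,8]\times[0,16]$;
$(3.7,7.9)\times(7.5,11.7)$, with fitting cell $[0,8]\times[0,16]$;
$(-7.9,0.2)\times(-7.6,0.5)$, with fitting cell $[-8,8]\times[-16,16]$;
$(3.8,12.8)\times(-8.6,0.4)$, with fitting cell $[-8,24]\times[-16,16]$.
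All nine squares are pairwise disjoint and surviving, and each of the eight new ones meets $\tau$. So $S$ has eight neighbours with strictly larger fitting cell. In your decreasing order they all precede $S$, and the greedy argument only gives $9$ colours. Reversing the orientation is hopeless, since a square can meet unboundedly many smaller fitting cells. The constant $8$ therefore needs the finer argument of the cited paper, not an $O(1)$ bound plus a case analysis.

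\emph{Separability.} Two steps break.

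First, squares whose fitting cell is a square cell $\tau$ all cross the line splitting $\tau$ into its half cells, since otherwise they would fit in a half cell. So a twin pair is cut, not separated, by that line. The twins are separated by the other midline of $\tau$. Cutting there is safe only because, by independence, no chosen square can then have a half cell of $\tau$ as fitting cell (such squares cross that midline). You need to state and use this.

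Second, and more seriously, ``cut it off with a line along its edge'' can destroy other chosen squares. In a level-$i$ cell $\tau=[0,8]^2$, take $S_0=(2.3,5.3)\times(2,5)$, with $\mathrm{FC}(S_0)=\tau$. Add the level-$(i+1)$ squares $(2.1,3.5)\times(6.2,7.6)$, $(0.2,1.8)\times(4.2,5.8)$, $(6.1,7.9)\times(1.1,2.9)$ and $(4.6,5.9)\times(0.3,1.6)$. Their fitting cells are $[2,4]\times[6,8]$, $[0,2]\times[4,6]$, $[6,8]\times[0,4]$ and $[4,6]\times[0,2]$ respectively. These five squares form an independent set of $G$, yet each of the four lines through an edge of $S_0$ cuts one of the other four. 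The set is separable, e.g.\ by first cutting along $x=2$ and $x=6$, but not by the rule you state.

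What is missing is exactly the content of this half of the lemma. You need a proof that the square(s) with fitting cell $\tau$, together with the maximal fitting cells of the other chosen squares inside $\tau$, can always be separated by guillotine cuts. Those fitting cells are pairwise disjoint grid cells that the fitting-cell-$\tau$ squares avoid by independence. Your sketch asserts that such a separation exists rather than proving it.
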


In our case, replacing a fixed size $M$ by the sampled value $M_\theta$ merely rescales the hierarchy and leaves the structural proof unchanged.
Owing to the above lemma, \Cref{thm:sq-w} is now easy to prove.

\begin{proof}[Proof of \Cref{thm:sq-w}]
    Using \Cref{lem:wtsq} with $r=2$, we obtain a surviving set $\SM'\subseteq \SM$ with 
    \[ w(\SM')\ge \Phi(2)w(\SM)=\left(1-\tfrac{13}{32\ln 2}\right)w(\SM).\]
    By \Cref{lem:conflict-graph-coloring}, the conflict graph $G$ is $8$-colorable and thus there exists an independent set of $G$ of weight at least $w(\SM')/8$. \Cref{lem:conflict-graph-coloring} also gives that this independent set is separable. Therefore, we obtain a separable family of weight at least 
    \[ \tfrac{1}{8}\left(1-\tfrac{13}{32\ln 2}\right)w(\SM)>0.0517w(\SM).\]
\end{proof}

\section{Upper bounds for disks}
\label{sec:upper-bound-disks}
In this section, we prove \Cref{thm:disk-hard}, i.e., upper bounds on the separability ratio that can be achieved for
arbitrary disks and for disks that admit a line transversal.

The basic gadget in our construction is a triple $D_1,D_2,D_3$ of equal-sized and pairwise-touching disks,
whose centers form an equilateral triangle~$\Delta$ centered at the origin; see Figure~\ref{fig:lb}(i).
(If so desired, the construction can also be done using disks whose boundaries do not touch.)
We call the region enclosed by the disks $D_1,D_2,D_3$ the \emph{hole} of the triple.
\begin{figure}
\centering
\includegraphics{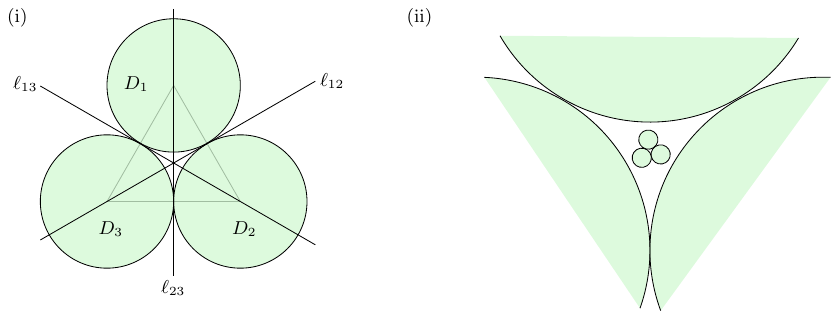}
\caption{(i) The basic gadget. (ii) A nested set of two gadgets.}
\label{fig:lb}
\end{figure}

Observe that since the disks $D_i, D_j$ are touching for $i\neq j$, there is only one line that separates 
them, and this line~$\ell_{ij}$ passes through the origin. Also note that any line~$\ell$ that passes through the 
hole of the triple intersects the interior of two of the disks, unless $\ell$ is one of the separating lines~$\ell_{ij}$. 

Our construction now consists of $n/3$ such triples $T_1,\ldots, T_{n/3}$ that are nested.
More precisely, we create the triples one by one, where the disks in $T_{k+1}$ are sufficiently much smaller
than the disks in~$T_k$, so that the disks in $T_k$ are disjoint from those in~$T_{k+1}$;
see Figure~\ref{fig:lb}(ii) for an illustration of how $T_{k+1}$ is placed inside the hole of~$T_k$.
While placing the triples~$T_k$, we make sure that their corresponding triangles~$\Delta_k$
are all centered at the origin. Moreover, we give each of these triangles a different orientation (modulo $\pi/3$).
This ensures that a line that separates two disks from a triple~$T_k$ cannot serve
as a separator for two disks from a different triple~$T_{k'}$; instead, it will intersect
the interior of two disks from~$T_{k'}$. Using this construction we can prove the following result.

\begin{lemma}
\label{thm:low}
For every $n\geq 1$ there exists a family $\DD$ of $n$ pairwise disjoint disks such that
the largest separable subset of $\DD$ has size at most $\lceil n/3\rceil+1$.
\end{lemma}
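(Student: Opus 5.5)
The plan is to use the nested-triple construction described above with $t:=\lceil n/3\rceil$ triples; for $n$ not divisible by $3$, the innermost triple is truncated to one or two disks. The claim I would establish is that any separable subset $\DD^*$ contains at most two disks from one triple and at most one disk from every other triple. This gives $|\DD^*|\le t+1$.

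The main tool is a structural observation about a single guillotine cut. Consider a region $R$ in the guillotine decomposition that contains disks of $\DD^*$ from at least two triples. Suppose a cut $\ell$ of $R$ separates two disks $D,D'\in\DD^*$ belonging to the same triple $T_k$. The separating line between $D$ and $D'$ is unique, namely $\ell_{ij}$ for $T_k$, and within $R$ the cut must agree with it. That line passes through the origin, so it passes through the holes of all triples, since every hole contains the origin. Because the triangles $\Delta_k$ have pairwise distinct orientations modulo $\pi/3$, the line $\ell_{ij}$ of $T_k$ is not a separating line of any other triple $T_{k'}$. Hence it cuts the interiors of two disks of $T_{k'}$.

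The crux is to make sure those cut disks actually lie in $R$. For this I would use the nesting: every region of the decomposition is convex. If $R$ contains a disk of $T_{k'}$ and a disk of $T_k$ with $k'<k$, then $R$ contains points of the outer triple and points near the origin, and convexity should force $\ell$ to meet the relevant disks of $T_{k'}$ inside $R$. This is the step I expect to be the main obstacle. I would try to argue the order differently: look at the first cut that separates two members of $\DD^*$ from a common triple, and induct on the sequence of cuts.

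A cleaner way to run the argument is as follows. Call a triple \emph{doubled} if $\DD^*$ contains at least two of its disks. The claim is that at most one triple is doubled. Suppose $T_k$ and $T_{k'}$ are both doubled, and consider the cuts in the decomposition. For each doubled triple, its two chosen disks must eventually be separated by a line through the origin, namely its own $\ell_{ij}$. Before such a cut, the region containing them is convex and contains both disks. That region therefore contains the segment between them, which passes through the hole, or at least near the origin. Take the first cut that is a separating line of some doubled triple $T_k$, and look at the region $R$ it acts on. No earlier cut passed through the origin in a way separating doubled pairs. Hence the pair from $T_{k'}$ still lies in the region containing the origin-neighborhood, which is $R$. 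The cut would then hit the interior of a disk of $T_{k'}$ in $\DD^*$, or at least one of its two chosen ones. I need to check carefully that this covers the chosen disks and not just some two of the three; since the line avoids exactly one disk of $T_{k'}$ at most, at least one of the two chosen disks is hit. This is a contradiction, so at most one triple is doubled, which gives the bound $t+1$.
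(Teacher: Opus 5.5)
\textbf{Verdict.} Your overall plan matches the paper's: take the first cut that separates a chosen pair of some triple, note that it must be a separator line through the origin, and show that it then cuts a chosen disk of another triple. However, the step you flag as the crux is asserted rather than proved, and the final count is off by one.

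\textbf{The crux step.} In your ``cleaner'' version, you justify that the chosen pair of $T_{k'}$ lies in the region $R$ by saying that no earlier cut passed through the origin in a way that separated doubled pairs. That alone does not show that the two pairs were never put into different regions by some earlier cut; you need a geometric reason this cannot happen. Your remark that the segment between the two disks passes ``near the origin'' is too weak. The segment between the centers misses the origin by $\rho/\sqrt{3}$, and a region containing a point near the origin need not contain the origin. Either of the following closes the gap.
\begin{itemize}
\item[(a)] The paper's route. Let $T_{k^*}$ be the outermost doubled triple and $\ell^*$ the first cut separating two of its disks. Before $\ell^*$, no cut of the region containing its chosen pair can enter the hole of $T_{k^*}$. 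A line through the hole either cuts two of the three disks, hence a chosen one, or is one of the three separator lines. A separator line either separates the chosen pair, contradicting minimality of $\ell^*$, or passes through the center of a chosen disk. So that region still contains the whole hole, and hence all inner triples, when $\ell^*$ is made.
\item[(b)] Finishing your convexity idea. For two touching disks $D_i,D_j$ of radius $\rho$ in a triple, $\mathrm{conv}(D_i\cup D_j)$ contains the open disk of radius $\rho$ around their touching point. The origin is at distance $\rho/\sqrt{3}<\rho$ from that touching point, so it lies in the interior of this hull. Hence the region containing each not-yet-separated chosen pair contains a neighborhood of the origin. Since regions have disjoint interiors, both pairs lie in the same region $R$.
\end{itemize}

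\textbf{The count.} With your definition (``at least two'' disks), ``at most one triple is doubled'' only gives $|\DD^*|\le t+2$, because the doubled triple could contribute all three disks. You must also rule this out. Before the first cut separating two disks of a triple, all three disks lie in one region. That cut is some $\ell_{ij}$, which passes through the center of the third disk, so the third disk is cut. The paper gets this from the same cut $\ell^*$.

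With these two additions your argument is complete. Your treatment of $n\not\equiv 0 \pmod 3$ by truncating the innermost triple also works. A non-separator line through the origin avoids at most one of the three disk positions, so it still hits a chosen disk of a truncated doubled triple.
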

\begin{proof}
It suffices to prove the lemma for values of $n$ that are a multiple of~3.
To this end, consider the construction described above. Let $\DD^*$ be a maximum-size
separable subset of~$\DD$ and let $k^*$ be the smallest index such 
that $\DD^*$ contains two disks from~$T_{k^*}$. We can assume that $k^*$ exists,
since otherwise $\DD^*$ contains at most one disk from every $T_k$ and we are done. 

Now consider the first guillotine cut~$\ell^*$ that separates two disks from~$T_{k^*}$. 
Observe that no cut~$\ell$ that was made before~$\ell^*$ can pass through the 
hole of~$T_{k^*}$. Indeed, such a cut cannot be a separator for two disks from~$T_{k^*}$
by definition of~$\ell^*$ and, as observed above, any other line that passes through the hole of~$T_{k^*}$
intersects two of the disks from~$T_{k^*}$; the latter would contradict that $\DD^*$ contains two disks from~$T_{k^*}$.
Since no previous cut passes through the hole of $T_{k^*}$,
there is no previous cut that can shield the disks in~$T_{k^*+1},\ldots,T_{n/3}$ from being cut by~$\ell^*$.
Moreover, since the triangles $\Delta_k$ corresponding to the triples $T_k$ all have different orientations,
$\ell^*$ cannot be a separator for any of these triples. Hence, $\ell^*$ cuts at least
two disks from each of them. This implies that $|\DD^*\cap T_k|\leq 1$ for all $k>k^*$.
Moreover, $|\DD^*\cap T_k|\leq 1$ for all $k<k^*$ due to our choice of~$k^*$.
Finally, the cut $\ell^*$ that separates two of the disks in~$T_{k^*}$ from each other must
cut the third disk from~$T_{k^*}$; here we use again that no previous cut passes through the hole of~$T_{k^*}$. 
We conclude that $|\DD^*|\leq n/3+1$.
\end{proof}

Recall that any line that passes through the origin and that is not a separator for two disks
from some triple $T_k$, will intersect two disks from each triple in the construction. 
Thus, choose a line through the origin avoiding the $O(n)$ number of separator directions. It then crosses exactly two disks from each triple. Select those disks.
This construction contains a subfamily of~$2n/3$ disks that admits a line transversal. 
This subset has at most $n/3+1$ separable members, otherwise $\DD$ would have more than $n/3+1$ separable members.
We thus obtain the following corollary.
\begin{corollary}
\label{cor:lowe}
For every $n\geq 1$ there exists a family $\DD$ of $n$ pairwise disjoint disks that admits a line transversal
such that the largest separable subset of $\DD$ has size at most $\lceil n/2\rceil+1$.

\end{corollary}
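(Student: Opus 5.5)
The plan is to extract the transversal family directly from the nested-triples construction used in the proof of \Cref{thm:low}, and to transfer the upper bound by monotonicity. The key observation is that a subfamily of a separable family is itself separable: the same sequence of guillotine cuts still avoids every object of the subfamily. Hence for any $\DD_0\subseteq\DD$, the largest separable subset of $\DD_0$ is at most the largest separable subset of $\DD$.

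First assume $n=2k$ is even. I build the construction of \Cref{thm:low} with $k$ nested triples $T_1,\ldots,T_k$, so that $|\DD|=3k$ and every separable subset of $\DD$ has size at most $k+1$. Each triple has only three separator lines $\ell_{ij}$, and all of them pass through the origin. So there are $O(k)$ forbidden directions, and I choose a line $\ell$ through the origin whose direction avoids all of them.

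Next I check that $\ell$ crosses the interiors of exactly two disks of each triple. Let $R$ be the common radius of the disks in a triple. Their centers lie at distance $2R/\sqrt3$ from the origin, at angles $\alpha_i$ that are pairwise $60^\circ$ apart modulo $180^\circ$. The line of direction $\varphi$ through the origin meets the interior of $D_i$ iff
\[
  \tfrac{2R}{\sqrt3}\,|\sin(\varphi-\alpha_i)|<R,
\]
that is, iff $\varphi$ lies within $60^\circ$ of $\alpha_i$ modulo $180^\circ$. These three open arcs, each of length $120^\circ$ on the circle of directions of length $180^\circ$, cover every non-boundary direction exactly twice. The boundary directions are precisely the directions of the separators $\ell_{ij}$, which we excluded.

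Now let $\DD_0$ be the $2k=n$ disks crossed by $\ell$. This family admits the line transversal $\ell$. By monotonicity, its largest separable subset has size at most $k+1=\lceil n/2\rceil+1$.

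For odd $n=2k+1$, I take the same $2k$-disk family $\DD_0$ and add one extra disk centered on $\ell$, placed far enough from the origin to be disjoint from all others. The new family still has $\ell$ as a transversal. Any separable subset contains at most $k+1$ disks from $\DD_0$, by monotonicity again, plus possibly the extra disk. This gives at most $k+2=\lceil n/2\rceil+1$.

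The only step that is not bookkeeping is the claim that a generic line through the origin meets exactly two disks of each triple. That is the angular computation above, which I expect to be routine.
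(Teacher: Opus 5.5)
Your proposal is correct and follows essentially the paper's own argument. You build the nested-triples construction of \Cref{thm:low}, take a line through the common center that avoids all separator directions, keep the two disks it crosses in each triple, and inherit the $k+1$ bound by monotonicity. Your explicit angular check and your handling of odd $n$ (adding one far-away disk centered on $\ell$) fill in details the paper leaves implicit.
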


\Cref{thm:disk-hard} follows from \Cref{thm:low} and \Cref{cor:lowe}.

 \section{Conclusion and open problems}

We presented the first proof that any family $\KK$ of convex fat objects in~$\mathbb{R}^d$ has a separable
subfamily of linear size. So far, this was not even known for disks in the plane.
We also obtained improved separability ratios for disks and for squares.

An interesting open problem is the following:
Is there a PTAS for finding a maximum-cardinality separable subfamily of pairwise disjoint disks when arbitrary straight-line guillotine cuts are allowed?
The usual dynamic-programming approaches for guillotine structures do not seem to apply.
Indeed, for axis-aligned cuts the recursive subproblems in a dynamic program are defined
by at most four grid lines, but arbitrarily oriented cuts (which are necessary for disks)
can create a much larger number of possible recursive states.  
Thus, at present, we do not know anything better than the $O(1)$-approximation that follows our result.

\bibliographystyle{plain}
\bibliography{ref}

\paragraph{AI declaration.}
We have used ChatGPT during the preparation of the manuscript.  
All the main ideas---using a hierarchical grid where both the shift as well as the cell sizes are chosen randomly, and the approach in the proof of \Cref{lem:induction-general} that uses non-axis-aligned cuts---are ours. We used ChatGPT to optimize 
the exact definition of the doubly-random hierarchical grid, and to improve some of the constants in the applications of our framework. 
In particular, ChatGPT suggested Oler's packing inequality. The writing of the paper was done fully by us, except that we used ChatGPT
to improve individual sentences and to suggest additional references.

\end{document}